\newcommand{\md}{\mathrm{d}}
\newtheorem{thm}{Theorem}
\newtheorem{prop}{Proposition}
\theoremstyle{definition}
\def\eqref#1{Equation~(\ref{#1})}
\def\Eqref#1{Equation~(\ref{#1})}
\def\1{\bm{1}}
\def\rx{{\textnormal{x}}}
\def\ry{{\textnormal{y}}}
\def\rvu{{\mathbf{i}}}
\def\rvm{{\mathbf{m}}}
\def\rvu{{\mathbf{u}}}
\def\rvv{{\mathbf{v}}}
\def\rvw{{\mathbf{w}}}
\def\rvx{{\mathbf{x}}}
\def\rvy{{\mathbf{y}}}
\def\rvz{{\mathbf{z}}}
\def\vzero{{\bm{0}}}
\def\vone{{\bm{1}}}
\def\vb{{\bm{b}}}
\def\vv{{\bm{v}}}
\def\mC{{\mathbf{C}}}
\def\mH{{\mathbf{H}}}
\def\mI{{\mathbf{I}}}
\def\mM{{\mathbf{M}}}
\def\mS{{\mathbf{S}}}
\def\mT{{\mathbf{T}}}
\def\mU{{\mathbf{U}}}
\def\mV{{\mathbf{V}}}
\def\mSigma{{\bm{\Sigma}}}
\DeclareMathAlphabet{\mathsfit}{\encodingdefault}{\sfdefault}{m}{sl}
\SetMathAlphabet{\mathsfit}{bold}{\encodingdefault}{\sfdefault}{bx}{n}
\newcommand{\pdata}{p_{\rm{data}}}
\newcommand{\KL}{D_{\mathrm{KL}}}
\title{Tweedie Moment Projected Diffusions for Inverse Problems}
\author{\name Benjamin Boys \email bb515@cam.ac.uk \\
      \addr Department of Engineering, University of Cambridge, Cambridge, United Kingdom
      \AND
      \name Jakiw Pidstrigach \email jakiw.pidstrigach@stats.ox.ac.uk\\
      \addr Department of Statistics, University of Oxford, United Kingdom
      \AND
      \name Mark Girolami \email mag92@cam.ac.uk\\
       \addr Department of Engineering, University of Cambridge, Cambridge, United Kingdom\\
      \addr The Alan Turing Institute, London, United Kingdom
      \AND
      \name Sebastian Reich \email sereich@uni-potsdam.de\\
      \addr Universitat Potsdam, Berlin, Germany
      \AND
      \name Alan Mosca \email alan@nplan.io\\
      \addr nPlan
      \AND
      \name O. Deniz Akyildiz \email deniz.akyildiz@imperial.ac.uk\\
      \addr Imperial College London, London, United Kingdom
      }
\begin{document}

\maketitle

\begin{abstract}
Diffusion generative models unlock new possibilities for inverse problems as they allow for the incorporation of strong empirical priors in scientific inference. Recently, diffusion models are repurposed for solving inverse problems using Gaussian approximations to conditional densities of the reverse process via Tweedie's formula to parameterise the mean, complemented with various heuristics. To address various challenges arising from these approximations, we leverage higher order information using Tweedie's formula and obtain a statistically principled approximation. We further provide a theoretical guarantee specifically for posterior sampling which can lead to a better theoretical understanding of diffusion-based conditional sampling. Finally, we illustrate the empirical effectiveness of our approach for general linear inverse problems on toy synthetic examples as well as image restoration. We show that our method (i) removes any time-dependent step-size hyperparameters required by earlier methods, (ii) brings stability and better sample quality across multiple noise levels, (iii) is the only method that works in a stable way with variance exploding (VE) forward processes as opposed to earlier works.
\end{abstract}

\section{Introduction}\label{sec:intro}

Due to the ease of scalability, diffusion models \citep{song2020score,Ho2020} have received increased attention, producing a variety of improvements in the conditional generation setting such as adding classifier guidance \citep{dhariwal2021diffusion}, classifier free guidance \citep{ho2022classifier}, conditional diffusion models \citep{batzolis2021conditional, karras2022elucidating, karras2024analyzing} and DEFT (Doob’s h-transform Efficient FineTuning) \citep{denker2024deft}. These methods are useful for problems where paired $(\rvx_{0}, \rvy)$ data are available, see, e.g.,  \citet{saharia2022photorealistic, abramson2024accurate}. However, paired data is not always available, and there are cases where we would like to study alternatives to classical Bayesian approaches to solve inverse problems \citep{tarantola2005inverse, stuart2010inverse} to build samplers for complicated conditional distributions with a known observation operator mapping $\rvx_0$ to the observed data $\rvy$ that is used to express a known likelihood $p(\rvy | \rvx_0)$. With their flexibility, diffusion models replace handcrafted priors on the latent signal with pretrained and strong empirical priors. For example, given a latent signal $\rvx_0$ (say a face image), we can \textit{train} a diffusion model to sample from the prior $p(\rvx_0)$. The idea for solving inverse problems is to leverage the extraordinary modelling power of diffusion models to learn samplers for priors and couple this with a given likelihood $p(\rvy | \rvx_0)$ for a given data $\rvy$ to sample from the posterior $p(\rvx_0 | \rvy)$ for inverse problems. However, designing a diffusion model for the posterior comes with challenges due to intractability. Despite its challenges, this approach has been recently taking off with lots of activity in the field, e.g. for compressed sensing \citep{bora2017compressed, kadkhodaie2021stochastic}, projecting score-based stochastic differential equations (SDEs) \citep{song2021solving}, gradient-based approaches \citep{daras2022score,chung2022improving}, magnetic resonance imaging (MRI) by approximating annealed Langevin dynamics with approximate scores \citep{jalal2021robust}, image restoration \citep{kawar2022denoising}, score-based models as priors but with a normalizing flow approach \citep{feng2023score}, variational approaches \citep{mardani2023variational, feng2023efficient}. Most relevant ideas to us, which we will review in detail in Section~\ref{sec:related_work}, use Tweedie's formula \citep{Efron2011, laumont2022bayesian} to approximate the smoothed likelihood, e.g. Diffusion posterior sampling (DPS) \citep{Chung2022, chung2023parallel} and pseudo-guided diffusion models ($\Pi$GDM) \citep{Song2023}. Similar approaches are also exploited using singular-value decomposition (SVD) based approaches \citep{kawar2021snips}. In this work, we develop an approach which builds a tighter approximation to optimal formulae for approximating the scores of the posterior diffusion model.

This paper is devoted to developing novel methods to solving inverse problems, given a latent (target) signal $\rvx_0 \in \mathbb{R}^{d_x}$, noisy observed data $\rvy \in \mathbb{R}^{d_y}$, a known linear observation map $\mH$, and a \textit{pretrained} diffusion prior. The main tool we use is Tweedie's formula\begin{wrapfigure}{r}{0.44\textwidth}
\begin{center}
\includegraphics[width=0.43\textwidth]{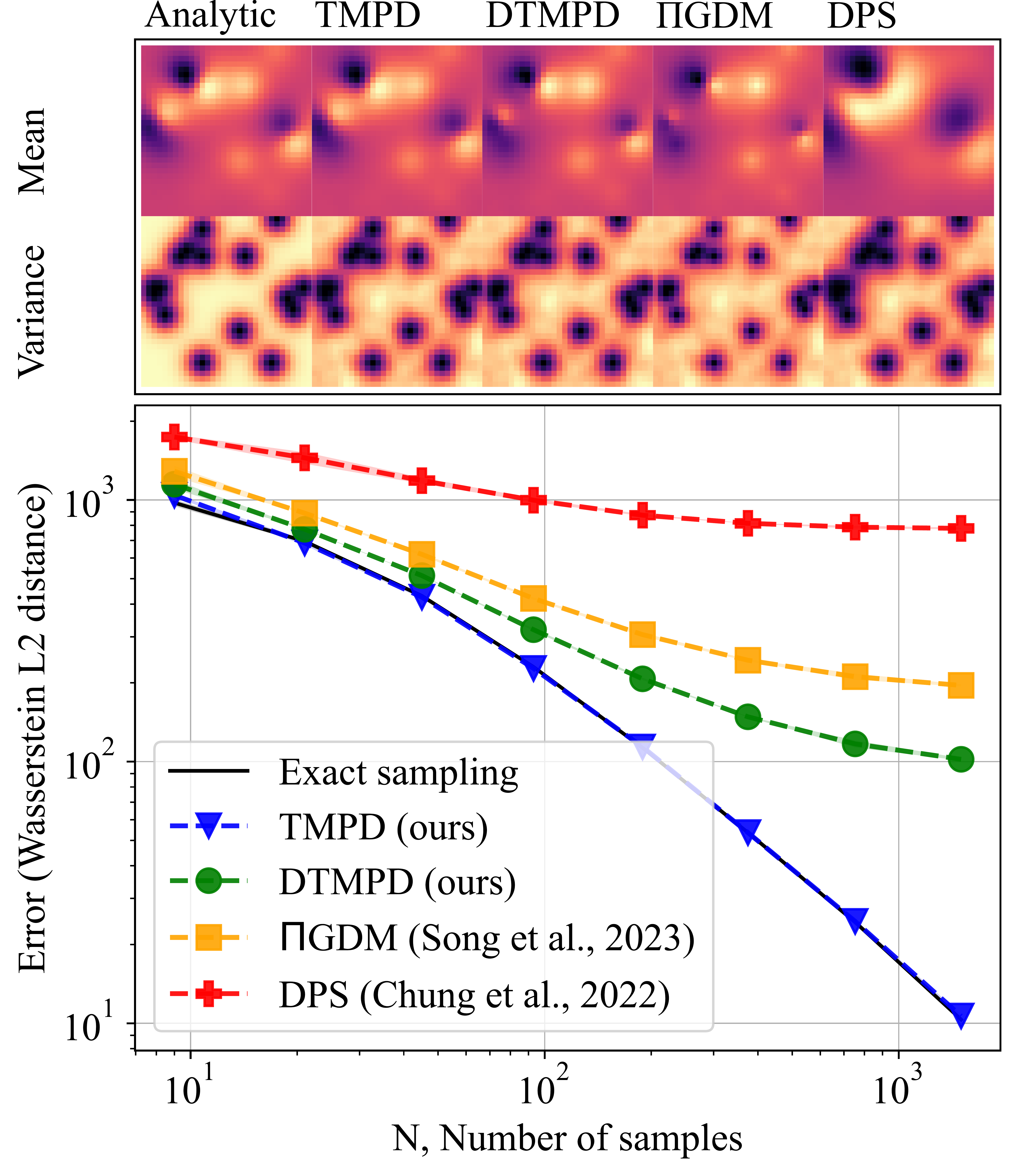}
\end{center}
\caption{Error to target posterior for a \textit{Gaussian random field}. (Top row) visualisation of the empirical mean and variance of the 1500 samples that were used to compute this error against the analytical moments. (Bottom) Wasserstein distances of different methods w.r.t. sample size. For details, see Appendix~\ref{appendix:gaussian}.}
\end{wrapfigure}
\label{fig:1} to obtain both the mean and the covariance for approximating diffused likelihoods, to be used for building the final posterior score approximation. This is as opposed to previous works which only utilised first moment approximations using Tweedie's formula \citep{Chung2022, Song2023}. We show that utilising covariance approximation, with moment projections, provides a principled scheme with improved performance, which we term \textit{Tweedie Moment Projected Diffusions} (TMPD).

To demonstrate our method briefly, Figure~\ref{fig:1} demonstrates a sampling scenario of a \textit{Gaussian random field} (GRF) whose mean and variance entries are plotted under ``Analytic'' column\footnote{We only plot the variances for visualisation while the GRF has a full covariance.}. We demonstrate the approximations under this setting provided by our method (TMPD) and its diagonal (cheaper) version (DTMPD), compared with $\Pi$GDM \citep{Song2023}, and DPS \citep{Chung2022}. The figure demonstrates the optimality of our method: Our first and second moment approximations become exact in this case. This results in a drastic performance improvement stemming from the statistical optimality of our method for near-Gaussian settings and also unlocks a possible line for theoretical research for understanding similar diffusion models for inverse problems.

In what follows, we will first introduce the technical background in Section~\ref{sec:background} and then describe TMPD in detail in Section~\ref{sec:methods}. We will then provide some theoretical results about our method in Section~\ref{sec:theory} and provide a discussion to closely related work in literature in Section~\ref{sec:related_work}. Finally, Section~\ref{sec:experiments} will present experiments on Gaussian mixtures, image inpainting and super-resolution, demonstrating quantitative and qualitative improvements provided by TMPD.

\section{Technical Background}\label{sec:background}
In score based generative models (SGMs) \citep{song2020score} and denoising diffusion probabilistic models (DDPM) \citep{Ho2020}, the goal is to sample from a target distribution $p_0 := p_\text{data}$. To that end, an artificial path $p_t$ is introduced, with the property that $p_t$ will approach $\mathcal{N}(0, I)$ for large $t$, i.e., $p_t \to \mathcal{N}(0, I_d)$ as $t \to \infty$. Then, one learns to reverse this process, in order to transform samples from a standard normal distribution into samples from $p_\text{data}$. More recent advances in generative modelling include methods with artificial paths based on stochastic interpolants or continuous normalizing flows, see, e.g., \citet{albergo2022building, lipman2022flow}. Conditional flow matching draws inspiration from the denoising score matching approach, but generalizes to matching vector fields directly. We have focused our approach to denoising-diffusion models in the SGM paradigm because of readily available pretrained diffusion models.

In the SGM paradigm, a stochastic differential equation (SDE) is used to noise the data, and the interpolation parameter $t$ will take continuous values in $t \in [0, T]$. In the DDPM setting, $t$ is discrete. However, the DDPM Markov chain can be seen as a discretization of the SDE \citep{song2020score}.

Recent developments have improved the noising schedule of score-based diffusion models for image data that parameterise the transition kernels of the forward process in terms of the signal-to-noise ratio $\sigma_{t}^{2}$, $p_{t}(\rvx_{t}| \rvx_{0}) = \mathcal{N}(\rvx_{t}; s_{t}\rvx_{0}, s_{t}^{2}\sigma_{t}^{2}\mI_{d_{x}})$ \citep{karras2022elucidating, karras2024analyzing}. In this paper, we focus derivations on the Variance Preserving (VP) SDE formulation, although our approach can be generalised and derived for other SDEs, such as the Variance Exploding (VE) SDE (see Appendix \ref{appendix:ve}). The VP transition kernel that we choose to focus our derivation is given by setting
\begin{equation*}
    p_{t}(\rvx_{t}| \rvx_{0}) = \mathcal{N}(\rvx_t; \sqrt{\alpha_{t}} \rvx_{0}, v_{t} \mI_{d_{\rx}}) \quad \text{where \ } \alpha_{t} := \exp\left(-\int_{0}^{t}\beta(s)\md s\right) \text{ \ and \ } v_{t} := 1 - \alpha_{t},
\end{equation*}
which are the transition kernels for the time-rescaled Ornstein-Uhlenbeck process:
\begin{equation}
    \md \rvx_{t} = -\frac{1}{2}\beta(t) \rvx_{t} \md t 
    + \sqrt{\beta(t)} \md \rvw_{t}, \quad \rvx_{0} \sim p_{0}=p_{\text{data}}.
\end{equation}\label{eq:forward_SDE}
The corresponding reverse SDE is then given by
\begin{equation}
    \md \rvz_{t} = \frac{1}{2}\beta(T-t)\rvz_{t}\md t+ \beta(T-t)\nabla_{\rvz_{t}} \log p_{T-t}(\rvz_{t})\md t +\sqrt{\beta(T-t)} \md \bar{\rvw}_{t},\label{eq:VP} \quad \rvz_0 \sim p_{T}. \nonumber
\end{equation}
A parameterisation that performs well in practice is $\beta(t) = \beta_{\text{min}} + t(\beta_{\text{max}} - \beta_{\text{min}})$.
In the diffusion modelling literature, the time-rescaled OU process is also sometimes called a Variance Preserving SDE. This is not the only SDE that is suitable for the forward process. See Appendix~\ref{appendix:ve} for details on a time-rescaled Brownian motion (Variance Exploding SDE \citep{song2020score}).

There are two usual approximations to solve the SDE in \Eqref{eq:VP}. First, we do not know $p_T$, since it is a noised version of the distribution $p_\text{data}$. However for $T$ large enough, we can approximate $p_T \approx p_\text{ref} = \mathcal{N}(0, I_{d_{\rx}})$. We also do not have $\nabla \log p_{T-t}$ which we need for the drift in \Eqref{eq:VP}. This can be circumvented by approximating the drift using score-matching techniques \citep{hyvarinen05a, Ho2020}. These methods construct an estimate of the score function by solving the score matching problem in the form of $\mathsf{s}_{\theta}(\rvx_t, t) \approx \nabla_{\rvx_t} \log p_t(\rvx_t)$. This score can also be used in the setting of DDPM \citep{Ho2020}.

\subsection{Conditional sampling for the linear inverse problem}\label{sec:conditional}
In the preceding section we introduced diffusion models as a method to sample from a target distribution $p_\text{data}$. We now suppose that we have access to measurements, or observations $\rvy \in \mathbb{R}^{d_{\ry}}$ of $\rvx_{0} \in \mathbb{R}^{d_{\rx}}$:
\begin{equation}
    \rvy = \mH \rvx_{0} + \rvu \label{eq:static_observation}, \quad \rvu \sim \mathcal{N}(0, \sigma_{\ry}^{2}\mI_{d_{\ry}}).
\end{equation}
We would then be interested in sampling from the conditional distribution of $\rvx_0$ given $\rvy$, i.e., $p_\text{data}(~\cdot~| \rvy)$. To that end, we have to modify the reverse SDE. We would like to sample from the reverse SDE targeting $p_\text{data}(~\cdot~| \rvy)$, instead of the one targeting $p_\text{data}$.

Optimally, we would want to replace the score $\nabla_{\rvz_{t}} \log p_{T-t}(\rvz_{t})$ in \Eqref{eq:VP} with the posterior score $\nabla_{\rvz_t} \log p_{T-t | \rvy}(\rvz_t | \rvy)$. Written in terms of the forward process, this coincides with 
\begin{equation}
    \nabla_{\rvx_{t}} \log p_{t|\rvy}(\rvx_{t}|\rvy) = \nabla_{\rvx_{t}} \log p_{t}(\rvx_{t}) + \nabla_{\rvx_{t}} \log p_{\rvy|t}(\rvy|\rvx_{t}).
    \label{equ:score_decomposition}
\end{equation}
The term $p_{\rvy|t}(\rvy| \rvx_{t})$ is given by the integral
\begin{equation}\label{eq:likelihood}
    p_{\rvy|t}(\rvy| \rvx_{t}) = \int p_{\rvy|0}(\rvy| \rvx_{0}) p_{0|t}(\rvx_{0}| \rvx_{t}) \md \rvx_{0},
\end{equation}
which involves a marginalization over $\rvx_{0}$. The above integral is difficult to evaluate since the term $p_{0 | t}(\rvx_0 | \rvx_t)$ is only defined implicitly through running the diffusion model. One way around this is to train a neural network to directly approximate $\nabla \log p_{t | \rvy}(\rvx_t | \rvy)$ \citep{batzolis2021conditional, karras2022elucidating}. Alternatively, if one already has access to an approximation of $\nabla_{\rvx_t} \log p_t(\rvx_t)$, one can train an auxiliary network to approximate the term $\nabla \log p_{\rvy | t}(\rvy | \rvx_t)$ in \Eqref{equ:score_decomposition}, \citep{dhariwal2021diffusion, denker2024deft}. However, these methods can be time and training-data intensive, as it is necessary to retrain networks for each conditional task as well have access to paired training data from the joint distribution of $(\rvx_0, \rvy)$. Alternatively, one could try to do a Monte-Carlo approximation of the score corresponding to \Eqref{eq:likelihood}. But this needs evaluating the probability flow ODE together with its derivative \citep[Section D.2]{song2020score} for each sample, which is prohibitive, also suffers from high variance \citep[Section 3]{mohamed2020monte}.

\section{Tweedie Moment Projected Diffusions}\label{sec:methods}
In this section, we first introduce \textit{Tweedie moment projections} in Section~\ref{sec:tweedie} below. Our method relies on the approximation $p_{0|t}(\rvx_{0}|\rvx_{t}) \approx \mathcal{N}\left(\rvx_0; \rvm_{0|t}(\rvx_t), \mC_{0 | t}(\rvx_t)\right)$ to make the sampling process tractable. In that case, since the conditional distribution of $\rvy$ given $\rvx_0$ is also Gaussian, we can \emph{compute the integral in \Eqref{eq:likelihood} analytically} --- $p_{\rvy | t}(\rvy | \rvx_t)$ will just be another Gaussian in that case, its mean and covariance being determined through $\rvm_{0|t}$, $\mC_{0|t}$, $\mH$ and $\sigma_y$. In particular, we can then use this Gaussian to approximate $\nabla \log p_{\rvy | t}(\rvy | \rvx_t)$, since the score of a Gaussian is available in closed form.

\subsection{Tweedie moment projections}\label{sec:tweedie}
Instead of just approximating the variance of $p_{0|t}(\rvx_{0}|\rvx_{t})$ heuristically, we approximate it by projecting onto the closest Gaussian distribution using Tweedie's formula for the second moment. Our approximation at this stage consists of two main steps: (i) Find the mean and covariance of $p_{0|t}(\rvx_{0}|\rvx_{t})$ using Tweedie's formula, and (ii) approximate this density with a Gaussian using the mean and covariance of $p_{0|t}(\rvx_{0}|\rvx_{t})$ (moment projection). Due to this approximation, we will refer to the resulting methods as \textit{Tweedie Moment Projected Diffusions} (TMPD). We will first introduce Tweedie's formula for the mean and covariance and then describe the moment projection.
\begin{prop}[Tweedie's formula]\label{prop:tweedie} Let ${\rvm}_{0|t}$ and $\mC_{0|t}$ be the mean and the covariance of $p_{0|t}(\rvx_0|\rvx_t)$, respectively. Then given the marginal density $p_t(\rvx_t)$, the mean is given as
\begin{align}
    \rvm_{0|t} = \mathbb{E}[\rvx_0 | \rvx_t] = \frac{1}{\sqrt{\alpha_t}}(\rvx_t + v_t \nabla_{\rvx_t} \log p_t(\rvx_t)),
\end{align}
and the covariance $\mC_{0|t}$ is given by
\begin{equation}\label{eq:tweedie_cov}
    \begin{split}
        \mC_{0|t} &= \mathbb{E}\left[(\rvx_0 - \rvm_{0|t}) (\rvx_0 - \rvm_{0|t})^\top \left. \right| \rvx_t \right]\\ &= \frac{v_t}{\alpha_t}(\mI_{d_x} + v_t \nabla^2 \log p_t(\rvx_t)) = \frac{v_t}{\sqrt{\alpha_t}} \nabla_{\rvx_t} \rvm_{0|t}.
    \end{split}
\end{equation}
\end{prop}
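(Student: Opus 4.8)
The plan is to work directly from the Gaussian transition kernel $p_t(\rvx_t|\rvx_0) = \mathcal{N}(\rvx_t; \sqrt{\alpha_t}\rvx_0, v_t\mI)$ and exploit the fact that its $\rvx_t$-gradient is an affine function of $\rvx_0$ times the kernel itself. First I would record the elementary identity
\begin{equation*}
    \nabla_{\rvx_t} p_t(\rvx_t|\rvx_0) = -\frac{1}{v_t}(\rvx_t - \sqrt{\alpha_t}\rvx_0)\, p_t(\rvx_t|\rvx_0),
\end{equation*}
which rearranges to $\sqrt{\alpha_t}\,\rvx_0\, p_t(\rvx_t|\rvx_0) = \rvx_t\, p_t(\rvx_t|\rvx_0) + v_t \nabla_{\rvx_t} p_t(\rvx_t|\rvx_0)$.

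For the mean, I would multiply this identity by $p_0(\rvx_0)$, integrate over $\rvx_0$, and interchange $\nabla_{\rvx_t}$ with the integral to recover $p_t(\rvx_t) = \int p_t(\rvx_t|\rvx_0)p_0(\rvx_0)\md\rvx_0$ and its gradient. Dividing through by $p_t(\rvx_t)$ and invoking Bayes' rule $p_{0|t}(\rvx_0|\rvx_t) = p_t(\rvx_t|\rvx_0)p_0(\rvx_0)/p_t(\rvx_t)$ turns the left-hand side into $\sqrt{\alpha_t}\,\E[\rvx_0|\rvx_t]$ and the gradient term into $v_t \nabla_{\rvx_t}\log p_t(\rvx_t)$, yielding the stated formula for $\rvm_{0|t}$ after dividing by $\sqrt{\alpha_t}$.

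For the covariance, the cleanest route is to prove the last equality $\mC_{0|t} = \frac{v_t}{\sqrt{\alpha_t}}\nabla_{\rvx_t}\rvm_{0|t}$ first and then obtain the Hessian form by substitution. To get it, I would differentiate $\rvm_{0|t}(\rvx_t) = N(\rvx_t)/p_t(\rvx_t)$ with $N(\rvx_t) = \int \rvx_0\, p_t(\rvx_t|\rvx_0)p_0(\rvx_0)\md\rvx_0$ via the quotient rule, applying the gradient identity above to each factor. The Jacobian then splits into a term proportional to $\E[\rvx_0\rvx_0^\top|\rvx_t]$ and one proportional to $\rvm_{0|t}\rvm_{0|t}^\top$; the pieces linear in $\rvx_t$ cancel, leaving exactly $\frac{\sqrt{\alpha_t}}{v_t}\bigl(\E[\rvx_0\rvx_0^\top|\rvx_t] - \rvm_{0|t}\rvm_{0|t}^\top\bigr) = \frac{\sqrt{\alpha_t}}{v_t}\mC_{0|t}$. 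Finally, substituting the mean formula just proved gives $\nabla_{\rvx_t}\rvm_{0|t} = \frac{1}{\sqrt{\alpha_t}}\bigl(\mI + v_t\nabla^2\log p_t(\rvx_t)\bigr)$, and multiplying by $v_t/\sqrt{\alpha_t}$ recovers the $\frac{v_t}{\alpha_t}(\mI + v_t\nabla^2\log p_t)$ expression.

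The main obstacle is analytic rather than algebraic: justifying the interchange of differentiation and integration at each step. This requires mild integrability and decay assumptions on $p_0$ (e.g.\ a finite second moment together with enough tail decay that the $\rvx_0$- and $\rvx_0\rvx_0^\top$-weighted Gaussian integrals and their $\rvx_t$-derivatives converge and admit a local dominating function in $\rvx_t$); given the Gaussian kernel these bounds are routine but should be stated explicitly. The only other place demanding care is the matrix bookkeeping in the quotient-rule step, where one must track the outer-product structure so that the cancellation of the $\rvx_t$-linear terms is transparent and the residual is manifestly the conditional second central moment.
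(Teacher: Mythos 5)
Your proposal is correct, but it takes a genuinely different route from the paper. The paper's proof (Appendix~\ref{proof:prop:tweedie}) adapts Theorem~1 of \citet{meng2021estimating}: it reparametrises the Gaussian transition kernel in exponential-family form with natural parameter $\bm{\eta}_0 = \rvx_0\sqrt{\alpha_t}/v_t$, introduces $\lambda(\rvx_t) = \log\bigl(p_t(\rvx_t)/q(\rvx_t)\bigr)$ for the Gaussian base measure $q$, and reads off the posterior mean and covariance of $\bm{\eta}_0$ as the Jacobian and Hessian of $\lambda$ before converting back to $\rvx_0$. You instead work directly from the identity $\nabla_{\rvx_t} p_{t|0}(\rvx_t|\rvx_0) = -v_t^{-1}(\rvx_t - \sqrt{\alpha_t}\rvx_0)\,p_{t|0}(\rvx_t|\rvx_0)$, integrate against $p_0$, and obtain the covariance by differentiating $\rvm_{0|t} = N(\rvx_t)/p_t(\rvx_t)$ with the quotient rule; the cancellation of the $\rvx_t$-linear terms you describe does occur, leaving exactly $\frac{\sqrt{\alpha_t}}{v_t}\bigl(\E[\rvx_0\rvx_0^\top|\rvx_t] - \rvm_{0|t}\rvm_{0|t}^\top\bigr)$, and the Hessian form then follows by substituting the mean formula, just as you say. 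Your route is more elementary and self-contained, and it naturally delivers the identity $\mC_{0|t} = \frac{v_t}{\sqrt{\alpha_t}}\nabla_{\rvx_t}\rvm_{0|t}$ (the form actually used in the algorithm) as the primary object rather than as a corollary; the paper's route buys generality, since the exponential-family argument extends beyond isotropic Gaussian kernels and lets the authors cite an existing result rather than re-derive moment identities. Your caveat about justifying the interchange of differentiation and integration is well placed (the paper glosses over this too); for the Gaussian kernel a finite second moment of $p_0$ and standard dominated-convergence bounds suffice.
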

The proof is an adaptation of \citet[Theorem~1]{meng2021estimating}, see Appendix~\ref{proof:prop:tweedie}. While $\rvm_{0|t}$ and $\mC_{0|t}$ give us the moments of the density $p_{0|t}(\rvx_0|\rvx_t)$, we do not have the exact form of this density. At this stage, we employ \textit{moment projection}, i.e., we choose the closest Gaussian in Kullback-Leibler (KL) divergence which is a distribution with the same first and second moments, as formalised next.
\begin{prop}[Moment projection]\label{prop:moment_projection}
Let $p_{0|t}(\rvx_0|\rvx_t)$ be a distribution with mean $\rvm_{0|t}$ and covariance $\mC_{0|t}$. Let $\hat{p}_{0|t}(\rvx_0|\rvx_t)$ be the the closest Gaussian in KL divergence to $p_{0|t}(\rvx_0|\rvx_t)$, i.e.,
\begin{align}
    \hat{p}_{0|t}(\rvx_0|\rvx_t) = \arg \min_{q \in \mathcal{Q}} \KL(p_{0|t}(\rvx_0|\rvx_t) || q),
\end{align}
where $\mathcal{Q}$ is the family of multivariate Gaussian distributions. Then
\begin{align}
    \hat{p}_{0|t}(\rvx_0|\rvx_t) = \mathcal{N}(\rvx_0; \rvm_{0|t}, \mC_{0|t}).
\end{align}
\end{prop}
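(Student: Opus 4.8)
The plan is to reduce the KL minimisation to a cross-entropy maximisation and then exploit the fact that the log-density of \emph{any} Gaussian is a quadratic form, so that its expectation is determined entirely by the first two moments of the measure one integrates against. Write $q^\star := \mathcal{N}(\rvx_0; \rvm_{0|t}, \mC_{0|t})$ for the moment-matched Gaussian, and assume $\mC_{0|t} \succ 0$ so that $q^\star$ is non-degenerate and has a density (the degenerate case is handled by restricting to the affine support). First I would expand
\begin{equation*}
    \KL(p_{0|t} \| q) = -H(p_{0|t}) - \E_{p_{0|t}}[\log q],
\end{equation*}
where $H(p_{0|t})$ is the differential entropy of $p_{0|t}$ and does \emph{not} depend on the candidate $q \in \mathcal{Q}$. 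Hence minimising the KL divergence over $\mathcal{Q}$ is equivalent to maximising the cross-entropy $\E_{p_{0|t}}[\log q]$.

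The key observation is that for any Gaussian $q = \mathcal{N}(\vmu, \mSigma)$ with $\mSigma \succ 0$, the integrand
\begin{equation*}
    \log q(\rvx_0) = -\tfrac{1}{2}\log\det(2\pi \mSigma) - \tfrac{1}{2}(\rvx_0 - \vmu)^\top \mSigma^{-1}(\rvx_0 - \vmu)
\end{equation*}
is an inhomogeneous quadratic polynomial in $\rvx_0$. Its expectation under any measure therefore depends on that measure only through its mean and covariance. Since, by construction, $p_{0|t}$ and $q^\star$ share the same mean $\rvm_{0|t}$ and covariance $\mC_{0|t}$, I would conclude that $\E_{p_{0|t}}[\log q] = \E_{q^\star}[\log q]$ for every $q \in \mathcal{Q}$, and in particular $\E_{p_{0|t}}[\log q^\star] = \E_{q^\star}[\log q^\star]$.

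From here the result follows by a one-line \,``information-Pythagoras''\, computation. Subtracting the two expansions gives
\begin{equation*}
    \KL(p_{0|t} \| q) - \KL(p_{0|t} \| q^\star) = \E_{p_{0|t}}[\log q^\star] - \E_{p_{0|t}}[\log q] = \E_{q^\star}[\log q^\star] - \E_{q^\star}[\log q] = \KL(q^\star \| q) \geq 0,
\end{equation*}
with equality if and only if $q = q^\star$ by strict positivity of the KL divergence between distinct distributions (Gibbs' inequality). This shows that $q^\star = \mathcal{N}(\rvx_0; \rvm_{0|t}, \mC_{0|t})$ is the unique minimiser, as claimed.

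The computation itself is routine; the thing to get right is the regularity bookkeeping rather than any hard inequality. Specifically, I must ensure (i) that $\mC_{0|t}$ is positive definite so that $q^\star$ genuinely lies in $\mathcal{Q}$ and $\KL(q^\star\|q)$ is well defined, and (ii) that $p_{0|t}$ has finite second moments and finite differential entropy, so that the expansion $\KL = -H - \E[\log q]$ and the moment-matching step are legitimate; both hold here because $p_{0|t}(\rvx_0\mid\rvx_t)$ inherits integrability from the Gaussian transition kernel. A more pedestrian alternative would be to optimise $\E_{p_{0|t}}[\log q]$ directly over $(\vmu,\mSigma)$ via first-order conditions, obtaining $\vmu = \rvm_{0|t}$ from the term quadratic in $\vmu$ and $\mSigma = \mC_{0|t}$ from $\partial_{\mSigma}\!\left(-\tfrac12\log\det\mSigma - \tfrac12\Tr(\mSigma^{-1}\mC_{0|t})\right) = 0$; I prefer the Pythagorean argument since it yields global optimality and uniqueness without a separate second-order check.
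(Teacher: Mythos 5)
Your proof is correct. Note that the paper does not actually supply an argument for Proposition~\ref{prop:moment_projection}: it defers entirely to the moment-matching result in Bishop (Section~10.7), where the standard derivation writes the Gaussian as an exponential family and sets the gradient of the cross-entropy with respect to the natural parameters to zero --- essentially the ``pedestrian alternative'' you sketch at the end. Your primary route is genuinely different and arguably stronger. The observation that $\log q$ is an inhomogeneous quadratic in $\rvx_0$, so that $\E_{p_{0|t}}[\log q]=\E_{q^\star}[\log q]$ for \emph{every} Gaussian $q$, upgrades the decomposition $\KL(p_{0|t}\|q) = -H(p_{0|t}) - \E_{p_{0|t}}[\log q]$ to the exact Pythagorean identity $\KL(p_{0|t}\|q)=\KL(p_{0|t}\|q^\star)+\KL(q^\star\|q)$; global optimality and uniqueness of $q^\star$ then follow in one line from Gibbs' inequality, with no second-order check and no need to parametrise the family. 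Your regularity bookkeeping is also the right list of caveats: one needs $\mC_{0|t}\succ 0$ and finite second moments with $H(p_{0|t})$ finite so that the decomposition is not of the form $\infty-\infty$; if $\mC_{0|t}$ is singular or $p_{0|t}$ has no Lebesgue density, $\KL(p_{0|t}\|q)=+\infty$ for every nondegenerate Gaussian $q$ and the minimisation is vacuous, consistent with your remark about restricting to the affine support. The claim that $p_{0|t}$ inherits the needed integrability from the Gaussian transition kernel is correct (the factor $p_{t|0}(\rvx_t|\rvx_0)$ forces Gaussian decay in $\rvx_0$), though stating that one sentence explicitly would make the proof fully self-contained.
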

This is a well-known moment matching result, see, e.g., \citet[Section~10.7]{bishop2006pattern}. Merging Propositions~\ref{prop:tweedie} and \ref{prop:moment_projection} leads to the following \textit{Tweedie moment projection}:
\begin{align}
        p_{0|t}(\rvx_{0}|\rvx_{t}) &\approx \mathcal{N}\left(\rvx_0; \rvm_{0|t}, \frac{v_{t}}{\sqrt{\alpha_{t}}} \nabla_{\rvx_{t}} \rvm_{0|t}\right),\label{eq:tweedie_moment_proj}
\end{align}
where $\rvm_{0|t}$ is given in Proposition~\ref{prop:tweedie}. In the next section, we demonstrate how to use this approximation to obtain approximate likelihoods.

\subsection{Tweedie Moment Projected Likelihood Approximation}
We next use the approximation in \Eqref{eq:tweedie_moment_proj} to compute the following integral \textit{analytically}
\begin{equation}\label{eq:likelihood_tmpd_approx}
\begin{split}
     p_{\rvy | t}(\rvy | \rvx_t) &\approx \int p_{\rvy | 0}(\rvy | \rvx_0) \hat{p}_{0|t}(\rvx_0 | \rvx_t) \md \rvx_t \\ &= \mathcal{N}\left(\rvy; \mH \rvm_{0|t}, \mH \mC_{0|t}\mH^{\top} + \sigma^{2}_{\ry} \mI_{d_{\ry}}  \right).
\end{split}
\end{equation}
Let us recall that mean and covariance terms are a function of $\rvx_t$ by making them explicit in the notation, i.e., $\rvm_{0|t}(\rvx_t)$ and $\mC_{0|t}(\rvx_{t})$ for the mean and covariance respectively. To compute $\nabla_{\rvx_{t}} \log p_{\rvy | t}(\rvy | \rvx_t)$, we require further approximations since we have $\rvx_t$ dependence in both the mean and the covariance of \Eqref{eq:likelihood_tmpd_approx} which is computationally infeasible to differentiate through. For this reason, we treat the matrix $\mC_{0|t}$ like a \textit{constant} w.r.t. $\rvx_t$ when computing the gradient (which is the case if $\pdata$ is Gaussian). For non-Gaussian $\pdata$, this results in a computationally efficient sampler using $\mC_{0|t}$ as a preconditioner for the step size, as otherwise, the resulting terms can be expensive to compute. This leads to an approximation of the gradient
\begin{align}
    f^\rvy(\rvx_t) :=& \nabla_{\rvx_{t}} \rvm_{0|t}(\rvx_t) \mH^{\top}(\mH \mC_{0|t}(\rvx_t) \mH^{\top} + \sigma^{2}_{\ry} \mI_{d_{\ry}})^{-1}  (\rvy - \mH \rvm_{0|t}(\rvx_t)) \label{equ:TMPD}
\\ \approx& \nabla_{\rvx_{t}} \log p_{\rvy|t}(\rvy|\rvx_{t}), \nonumber
\end{align}
where $\nabla_{\rvx_t}$ only operates on $\rvm_{0|t}$ in \Eqref{equ:TMPD}. Another way to write this approximation is to use \Eqref{eq:tweedie_cov}, which leads to
\begin{align*}
    f^\rvy(\rvx_t) :=& \frac{\sqrt{\alpha_t}}{v_t} \mC_{0|t}(\rvx_t) \mH^{\top}(\mH \mC_{0|t}(\rvx_t) \mH^{\top} + \sigma^{2}_{\ry} \mI_{d_{\ry}})^{-1}  (\rvy - \mH \rvm_{0|t}(\rvx_t))
\end{align*}

\subsection{Algorithms}
Plugging the approximation in \Eqref{equ:TMPD} into the reverse SDE in \Eqref{eq:VP} together with the prior score as described in Section~\ref{sec:conditional} results in a TMPD for conditional sampling to solve inverse problems. The SDE we will approximate numerically to sample from the conditional distribution is given by
\begin{align}
    \md \rvz_t =& \frac{1}{2}\beta(T-t) \rvz_t \md t 
    + \beta(T-t) (\nabla_{\rvz_t} \log p_{T-t}(\rvz_t) + f^{\rvy}_{T-t}(\rvz_t)) \md t + \sqrt{\beta(T-t)} \md \bar{\rvw}_t\label{eq:TMPD_SDE}
\end{align}
where $\rvz_0 \sim p_T$ and $f^{\rvy}_t(\rvz_t) \approx \nabla \log p_{\rvy | T-t}(\rvy|\rvz_t)$ is our approximation to the data likelihood, given by \Eqref{equ:TMPD}. We call this SDE the \emph{TMPD SDE}.

We have two options to convert TMPD SDE into implementable methods: (1) score-based samplers \citep{song2020generative}, which we abbreviate as TMPD since they are Euler-Maruyama discretizations of the TMPD SDE; and (2) denoising diffusion models (TMPD-D). The denoising diffusion approach is derived from approximate reverse Markov chains and is the approach of DDPM and DPS methods \citep{Ho2020, Chung2022}. We note that the Gaussian projection can be used in this discrete setting, assuming that the conditional density is available analytically as in \citet{Ho2020}, and can be written as $p_{n|0}(\rvx_{n}|\rvx_{0}) = \mathcal{N}(\rvx_{n}; \sqrt{\alpha_{n}} \rvx_{0}, v_{n}\mI_{d_{x}})$. The idea is to update the unconditional mean $\rvm_{0|n}(\rvx_n)$ of the density $p_{0|n}(\rvx_0 | \rvx_n)$ with a Bayesian update: $p(\rvx_0 | \rvx_n, \rvy) \propto p(\rvy | \rvx_n) p_{0|n}(\rvx_0 | \rvx_n)$. Given a similar formulation as above, assuming we have a readily available approximation $p_{0|n}(\rvx_0 | \rvx_n) \approx \mathcal{N}(\rvx_0; \rvm_{0|n}, \mC_{0|n})$ and a likelihood similar to \Eqref{eq:likelihood_tmpd_approx} where $t$ can be replaced by $n$, we can compute the moments of $p(\rvx_0 | \rvx_n, \rvy)$ analytically, which we denote $\rvm_{0|n}^{\rvy}$ and $\mC_{0|n}^\rvy$. The Bayes update for Gaussians gives \citep{bishop2006pattern}
\begin{align}
\rvm_{0|n}^{\rvy} = \rvm_{0|n} + \mC_{0|n} \mH^\top &(\mH \mC_{0|n} \mH^\top + \sigma_y^2 \mI_{d_x})^{-1} (\rvy - \mH \rvm_{0|n}). \label{eq:posterior_mean_tmpd_d}
\end{align}
Incorporating \Eqref{eq:posterior_mean_tmpd_d} for $n = {N-1, \ldots, 0}$ into the usual Ancestral sampling \citep{Ho2020} steps leads to Algorithm~\ref{algorithm:1}, termed TMPD-Denoising (TPMD-D). The update in \Eqref{eq:posterior_mean_tmpd_d} can be used in any discrete sampler such as denoising diffusion implicit models (DDIM) \citep{song2021denoising}.

\subsection{Computationally cheaper approximation of Moment Projection}\label{DTMPD}
We show in our experiments promising results for TMPD motivating the exploration of less computationally expensive approximations to the full Jacobian. In particular, we empirically study a computationally inexpensive method that applies to inpainting and super-resolution, below.

To make the computational cost of TMPD smaller, we can make an approximation of the Gaussian Projection that requires fewer vector-Jacobian products and does not require linear solves. One approximation that we found useful for sampling from high dimensional diffusion models, e.g., high resolution images, is denoted here as diagonal Tweedie Moment Projection (DTMPD). Instead of the full second moment, DTMPD uses the diagonal of the second moment $\nabla_{\rvx_{t}}\rvm_{0|t} \approx \text{diag}(\nabla_{\rvx_{t}}\rvm_{0|t})$. Intuitively, this approximation will perform well empirically since it is a similar approximation to $\Pi$GDM that assumes dimensional independence of the distribution $p(\rvx_{0}|\rvx_{t})$, but unlike $\Pi$GDM, this diagonal approximation is the same as using the closest dimensionally independent Gaussian in KL divergence to $p_{0|t}(\rvx_{0}|\rvx_{t})$.

The biggest drawback of our method is that without further approximation, it doesn't scale up to the high dimensions of image data. This is because even calculating the diagonal of a Jacobian requires computing $d_{x}$ vector-Jacobian products since in general every element of the Jacobian at a location $\rvx_{t}$, $\nabla_{\rvx_{t}}\rvm_{0|t}$ depends on every element of $\rvx_{t}$. Therefore we must resort to a further approximation that exploits knowledge of the observation operator $\mH$. 

For the cases of super-resolution and inpainting, a further approximation that allows scaling up to the dimensions of image data is approximating the diagonal of the Jacobian by the row sum of the Jacobian which only requires a single vector-Jacobian product and brings the memory and time complexity of DTMPD down to that of $\Pi$GDM. We exploit the sparsity of $\mH$ to make the rowsum approximation of the diagonal more accurate by masking out (zeroing) the values in the vector-Jacobian product that that will not contribute to the diagonal of $\mH \mC_{0|n} \mH^{\top}$. We discuss a justification of this approximation in Ap.~\ref{app:sec:comp_complex}. We use this approximation in the image experiments and find that in practice it is only (1.5 $\pm$ 0.1) $\times$ slower than $\Pi$GDM and DPS across all of our experiments (Sec.~\ref{sec:experiments}), with competitive sample quality for noisy inverse problems and without the need for expensive hyperparameter tuning. Finally, we note a very recent work \citep{rozet2024learning} that circumvents our heuristic by applying the conjugate gradient (CG) method, unlocking using the approximation Eq.~\ref{equ:TMPD} to be used in practice for non-sparse $\mH$ (see Ap.~\ref{app:sec:comp_complex} for more details).

\section{Theoretical Guarantees}\label{sec:theory}
Because of the approximations, our method, as well as $\Pi$GDM \citet{Song2023} and DPS \citep{Chung2022} do not sample the exact posterior for general prior distributions. Therefore, one cannot hope for these methods to sample the true posterior and a priori it is not even clear how the sampled distribution relates to the true posterior. Without further justification, such methods should only be interpreted as \emph{guidance methods}, where paths are guided to regions where a given observation $\rvy$ is more likely, not as posterior sampling methods.

We justify our approximation by showing that the TMPD-SDE in \Eqref{eq:TMPD_SDE} is able to sample the exact posterior in the Gaussian case. One can see that this contrasts with $\Pi$GDM and DPS in our numerical experiments or by explicitly evaluating their approximations on simple one-dimensional examples.
\begin{prop}[Gaussian data distribution]\label{prop:exact}
    Assume that $\pdata$ is Gaussian. Then, the posterior score expression using \Eqref{equ:TMPD} is \textit{exact}, i.e., if there are no errors in the initial condition and drift approximation $\mathsf{s}_{\theta}(\rvx_t, t) = \nabla_{\rvx_t} \log p_t(\rvx_t)$, the TMPD will sample $\pdata(\cdot | \rvy)$ at its final time.
\end{prop}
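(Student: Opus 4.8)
The plan is to show that, when $\pdata$ is Gaussian, every approximation made in deriving the TMPD drift $f^{\rvy}$ collapses to an exact identity, so that the TMPD SDE in \eqref{eq:TMPD_SDE} becomes precisely the reverse-time SDE whose drift is the true posterior score; the claim then follows from the standard time-reversal correspondence between the forward diffusion and its reverse. The starting observation is that the forward dynamics are linear (Ornstein--Uhlenbeck), with Gaussian transition kernel $p_t(\rvx_t|\rvx_0) = \mathcal{N}(\rvx_t; \sqrt{\alpha_t}\rvx_0, v_t \mI_{d_{\rx}})$. Hence if $\pdata = \mathcal{N}(\vmu_0, \mSigma_0)$, the pair $(\rvx_0, \rvx_t)$ is jointly Gaussian for every $t$, so both the marginal $p_t$ and the conditional $p_{0|t}(\rvx_0|\rvx_t)$ are exactly Gaussian. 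Crucially, for jointly Gaussian variables the conditional covariance $\mC_{0|t}$ does not depend on the conditioning value $\rvx_t$.

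Next I would use these two facts to eliminate the approximations one at a time. Because $p_{0|t}(\rvx_0|\rvx_t)$ is genuinely Gaussian, Proposition~\ref{prop:moment_projection} gives $\hat p_{0|t} = p_{0|t}$: the closest Gaussian in KL to a Gaussian is the distribution itself. The Tweedie moment projection \eqref{eq:tweedie_moment_proj} therefore holds with equality, its mean and covariance supplied exactly by Proposition~\ref{prop:tweedie}. Substituting $\hat p_{0|t} = p_{0|t}$ into the marginalization \eqref{eq:likelihood} then makes \eqref{eq:likelihood_tmpd_approx} exact, i.e.\ $p_{\rvy|t}(\rvy|\rvx_t) = \mathcal{N}(\rvy; \mH\rvm_{0|t}, \mH\mC_{0|t}\mH^\top + \sigma_{\ry}^2\mI_{d_{\ry}})$ is the true diffused likelihood. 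The only remaining approximation in passing to \eqref{equ:TMPD} is freezing $\mC_{0|t}$ with respect to $\rvx_t$ before differentiating; since $\mC_{0|t}$ really is constant in $\rvx_t$, differentiating the log of this Gaussian (with only $\rvm_{0|t}(\rvx_t)$ varying) reproduces \eqref{equ:TMPD} exactly, so $f^{\rvy}(\rvx_t) = \nabla_{\rvx_t}\log p_{\rvy|t}(\rvy|\rvx_t)$.

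To finish, I would assemble the posterior score via \eqref{equ:score_decomposition}. Under the assumption $\mathsf{s}_\theta(\rvx_t,t) = \nabla_{\rvx_t}\log p_t(\rvx_t)$, the TMPD drift $\nabla_{\rvx_t}\log p_t(\rvx_t) + f^{\rvy}(\rvx_t)$ equals the exact posterior score $\nabla_{\rvx_t}\log p_{t|\rvy}(\rvx_t|\rvy)$ for every $t$. Consequently the TMPD SDE \eqref{eq:TMPD_SDE} coincides with the exact reverse-time SDE targeting $\pdata(\cdot|\rvy)$, and with no error in the initial condition the time-reversal identity guarantees that its marginal at the final time is exactly $\pdata(\cdot|\rvy)$.

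I expect the main (and essentially only substantive) obstacle to be verifying that $\mC_{0|t}$ is genuinely independent of $\rvx_t$, which is what makes the ``frozen covariance'' step exact and is the property that separates TMPD from first-moment-only schemes. This is a standard Gaussian-conditioning fact, but it is the linchpin that forces all three layered approximations --- Gaussian projection of $p_{0|t}$, analytic marginalization of the likelihood, and the constant-covariance gradient --- to become equalities simultaneously; everything else reduces to the algebra of the reverse SDE and the time-reversal statement.
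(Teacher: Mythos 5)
Your proposal is correct and follows essentially the same route as the paper's proof: joint Gaussianity of $(\rvx_0,\rvx_t)$ under the linear forward SDE makes $p_{0|t}$ exactly Gaussian with the Tweedie moments, so the moment projection, the likelihood marginalization, and the score decomposition all become equalities. You are in fact slightly more careful than the paper's own (very terse) appendix proof in explicitly verifying that $\mC_{0|t}$ is independent of $\rvx_t$, which is the point that makes the frozen-covariance differentiation step exact; the paper states this only in passing in the main text and in the explicit Gaussian calculation of Appendix~\ref{appendix:gaussian}.
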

The proof is given in Appendix~\ref{proof:prop:exact}. 
However, most distributions will not be Gaussian. The following theorem generalizes the above proposition to \emph{non-Gaussian distributions}, as long as they have a density with respect to a Gaussian. We study how close our sampled measure will be to the true posterior distribution and give explicit bounds on the total variation distance in terms of the regularity properties of the density:
\begin{thm}[General data distribution]\label{thm:bound} Assume that the data distribution $p_{\textup{data}}$ can be written as
\begin{equation}\label{eq:data_dist_form}
        \pdata(\rvx_0) = \exp(\Phi(\rvx_0)) \mathcal{N}(\rvx_0; \bm{\mu}_0, \bm{\Sigma}_0),
    \end{equation}
    for some $\bm{\mu}_0$ and $\bm{\Sigma}_0$. We furthermore assume that for some $M \ge 1$, it holds that $1/M \le \exp(\Phi(\rvx)) \le M$ and $        \|\nabla_{\rvx} \Phi(\rvx)\| \le L$ for all $\rvx \in \mathbb{R}^{d_x}$. Then
    \begin{equation}
    \begin{split}
        &\|\pdata(\cdot| \rvy) - q_T(\cdot)\|_\textup{TV} \le C(1 + T^{1/2})\left((M^{5/2} - 1)(L^{1/2}+1) + L^{1/2}\right),
    \end{split}
    \label{eq:thm_bound}
    \end{equation}
where $q_t$ denotes the law of the corresponding reverse-time process for \Eqref{eq:TMPD_SDE} at time $t$ and the constant $C$ that only depends on $\rvy, \mH, \sigma_{y}, \bm{\mu}_0$ and $\bm{\Sigma}_0$.
\end{thm}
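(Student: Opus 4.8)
The plan is to bound the terminal total variation distance through a Girsanov comparison of the two reverse-time diffusions, reducing the whole estimate to the mismatch of the guidance terms. The exact posterior reverse SDE (whose terminal law is $\pdata(\cdot\mid\rvy)$) and the TMPD SDE of \Eqref{eq:TMPD_SDE} share the unconditional part of the drift, $\tfrac12\beta(T-t)\rvz_t+\beta(T-t)\nabla\log p_{T-t}(\rvz_t)$, and the same diffusion coefficient $\sqrt{\beta(T-t)}$; they differ only in that one uses $\nabla_{\rvz_t}\log p_{\rvy\mid T-t}(\rvy\mid\rvz_t)$ and the other uses $f^{\rvy}_{T-t}$. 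Because the diffusion coefficients agree, Girsanov's theorem bounds the KL divergence between the two path measures by $\tfrac12\int_0^T\beta(T-t)\,\mathbb{E}\,\|f^{\rvy}_{T-t}(\rvz_t)-\nabla\log p_{\rvy\mid T-t}(\rvy\mid\rvz_t)\|^2\,\md t$, and Pinsker's inequality together with the data-processing inequality convert this into the desired bound on $\|\pdata(\cdot\mid\rvy)-q_T\|_{\textup{TV}}$. The residual discrepancy in the initial condition (true noised posterior versus the reference $p_T$) is exponentially small in $T$ for the VP process and is absorbed into $C$. Under the inherited assumption that the prior score $\mathsf{s}_\theta=\nabla\log p_t$ is exact, the only genuine error is the guidance mismatch.

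The key structural fact is that the conditional law $p_{0\mid t}$ is again a tilted Gaussian with the \emph{same} tilt. Writing $p_{0\mid t}(\rvx_0\mid\rvx_t)\propto\mathcal{N}(\rvx_t;\sqrt{\alpha_t}\rvx_0,v_t\mI)\,\pdata(\rvx_0)$ and inserting \Eqref{eq:data_dist_form}, the product of the two Gaussian factors in $\rvx_0$ is $\mathcal{N}(\rvx_0;\tilde\rvm_t(\rvx_t),\tilde\mC_t)$ with $\tilde\mC_t=(\tfrac{\alpha_t}{v_t}\mI+\bm{\Sigma}_0^{-1})^{-1}$ independent of $\rvx_t$ and $\tilde\rvm_t$ affine in $\rvx_t$, so that $p_{0\mid t}(\rvx_0\mid\rvx_t)\propto\exp(\Phi(\rvx_0))\,\mathcal{N}(\rvx_0;\tilde\rvm_t(\rvx_t),\tilde\mC_t)$ satisfies the very same bounds $1/M\le e^{\Phi}\le M$ and $\|\nabla\Phi\|\le L$. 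The problem therefore becomes self-similar: everything reduces to estimating, for a generic tilted Gaussian $\propto e^{\Phi}\mathcal{N}(\rvm,\mC)$, how far its first moment $\rvm_{0\mid t}$, its second moment $\mC_{0\mid t}$, and the $\rvx_t$-derivative of that second moment deviate from those of the base Gaussian, with errors driven purely by $M$ and $L$. When $\Phi$ is constant ($M=1$, $L=0$) all deviations vanish and we recover Proposition~\ref{prop:exact}, the sanity check that the bound collapses to zero.

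I would then split the guidance mismatch as $f^{\rvy}_{t}-\nabla\log p_{\rvy\mid t}=\big(f^{\rvy}_{t}-\nabla\log\mathcal{N}(\rvy;\mH\rvm_{0\mid t},\mH\mC_{0\mid t}\mH^\top+\sigma_y^2\mI)\big)+\big(\nabla\log\mathcal{N}(\rvy;\mH\rvm_{0\mid t},\mH\mC_{0\mid t}\mH^\top+\sigma_y^2\mI)-\nabla\log p_{\rvy\mid t}\big)$. The first bracket is the \emph{frozen-covariance} error, namely the $\nabla_{\rvx_t}\mC_{0\mid t}$ contribution that $f^{\rvy}$ discards; the second is the \emph{moment-projection} error from replacing $p_{0\mid t}$ by its Gaussian moment match before integrating against the likelihood. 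I would bound both by writing the moments of the tilted Gaussian as ratios of Gaussian expectations, sandwiching the normalising constants with $1/M\le e^{\Phi}\le M$ to extract the powers of $M$, and using Gaussian integration by parts (Stein's identity) to turn derivatives that fall on the tilt into factors of $\nabla\Phi$ bounded by $L$; the well-conditioning of $(\mH\mC_{0\mid t}\mH^\top+\sigma_y^2\mI)^{-1}$, guaranteed by the $\sigma_y^2\mI$ floor, keeps the inverse uniformly bounded and contributes to $C$. Since the hypotheses are uniform in space, these yield pointwise-in-$\rvx_t$ bounds with no moment control on the process; the amplitude powers of $M$, the Stein-generated dependence on $L$, the Cauchy--Schwarz steps, and the square root from Pinsker combine to produce the half-integer exponents $M^{5/2}$ and $L^{1/2}$, while integrating the time-dependent prefactor over $[0,T]$ against bounded $\beta$ yields the $(1+T^{1/2})$ factor.

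The main obstacle will be the frozen-covariance estimate together with the bookkeeping of the time-dependent scales. Controlling $\nabla_{\rvx_t}\mC_{0\mid t}$ requires bounding third-order cumulants of the tilted Gaussian and tying them to $\|\nabla\Phi\|\le L$ through a careful Stein argument rather than to the cruder amplitude bound $M$. At the same time the prefactor $\sqrt{\alpha_t}/v_t$ entering $f^{\rvy}$ grows without bound as $v_t\to0$ (i.e.\ $t\to0$), while the base covariance $\tilde\mC_t\to0$ there so that the tilt-induced deviations simultaneously shrink; the delicate point is to show that these two effects balance, so that the squared mismatch remains integrable on $[0,T]$ and the resulting time integral is genuinely $O(1+T)$. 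Extracting the precise exponents $M^{5/2}$ and $L^{1/2}$ from this interplay, rather than a looser polynomial in $M$ and $L$, is the step that demands the most care.
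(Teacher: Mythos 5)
Your architecture differs from the paper's in a way that leaves the hardest step unsupported. You compare the TMPD SDE directly to the exact posterior reverse SDE via Girsanov, which forces you to bound the full guidance mismatch $f^{\rvy}_t - \nabla\log p_{\rvy\mid t}(\rvy\mid\cdot)$, and in particular your second bracket: the gap between $\nabla_{\rvx_t}\log\mathcal{N}\bigl(\rvy;\mH\rvm_{0\mid t},\mH\mC_{0\mid t}\mH^\top+\sigma_y^2\mI\bigr)$ and the score of the true smoothed likelihood $\nabla_{\rvx_t}\log\int p(\rvy\mid\rvx_0)\,p_{0\mid t}(\rvx_0\mid\rvx_t)\,\md\rvx_0$. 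Matching first and second moments of $p_{0\mid t}$ does not control the log-gradient of its Gaussian-convolved marginal; your proposed tools (sandwiching normalising constants by $M$, Stein's identity) do not obviously close this, and the natural way to close it is to compare both quantities to $\nabla\log p^{G}_{\rvy\mid t}$, the smoothed likelihood of the \emph{untilted} Gaussian prior --- which is exactly the intermediate object the paper builds its whole proof around. The paper never touches $\nabla\log p_{\rvy\mid t}$ at all: it writes $\|p^c_0-q_T\|_{\textup{TV}}\le\|p^c_0-p^{G,c}_0\|_{\textup{TV}}+\|p^{G,c}_0-q_T\|_{\textup{TV}}$, bounds the first term directly by $M^2-1$ using the density ratio at time $0$, and applies Girsanov only between the Gaussian-prior conditioned reverse SDE (for which the TMPD drift formula is \emph{exact}, so there is no frozen-covariance or moment-projection error to speak of) and the algorithm SDE, reducing everything to differences of conditional moments between a tilted and an untilted Gaussian. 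Your ``self-similar tilted Gaussian'' observation and your account of where the half-integer exponents come from are both correct and match the paper, but without the intermediate process the moment-projection bracket is a genuine gap.

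Two smaller points. First, your claim that the hypotheses yield ``pointwise-in-$\rvx_t$ bounds with no moment control on the process'' is wrong: the bound on $\nabla_{x_t}\mathbb{E}[\rvx_0\mid\rvx_t=x_t]-\nabla_{x_t}\mathbb{E}[\rvx_0^G\mid\rvx_t^G=x_t]$ grows linearly in $\|x_t\|$, so the squared drift mismatch grows like $\|x_t\|^2$ and you need second moments of the reference process to integrate it (and to verify an iterated Novikov condition for Girsanov). The paper gets these for free because its reference path measure is the Gaussian conditioned process; under your choice of reference (the true conditioned process) you would have to transfer them across the bounded density ratio. Second, the theorem's $q_T$ is the law of the reverse process started at the true noised posterior, so the initialization error you propose to absorb into $C$ does not appear in the statement and should not be introduced.
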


See Appendix~\ref{proof:thm:bound} for a proof.

In the limit, when $\pdata$ becomes more similar to a Gaussian, the $\Phi$ in \Eqref{eq:data_dist_form} converges to zero, and therefore $M \to 1$ and $L \to 0$. In particular, the right hand side in \Eqref{eq:thm_bound} converges to $0$ and we recover the result of Proposition \ref{prop:exact}. When $\pdata$ is not Gaussian, the right hand side of \Eqref{eq:thm_bound} gives us an upper bound of our sample distribution to the true posterior.

\section{Related Work}\label{sec:related_work}
In this section, we review two closely related methods, that we use as benchmarks, and summarise the relationship between them.

The first work by \citet{Chung2022}, abbreviated DPS-D here\footnote{Since the authors run DDPM-type sampler, we re-abbreviate DPS as DPS-D in this work (as DPS approximation itself can also be run with Euler-Maruyama schemes).} introduced the use of Tweedie's formula to approximate $p_t(\rvx_0 | \rvx_t)$ with a Dirac delta (point) distribution centred at $\rvm_{0|t}$. In our framework this corresponds to choosing a zero covariance, i.e.,
\begin{align}\label{eq:dps_approx}
\rvm_{0|t}^{\textup{DPS-D}} = \rvm_{0|t} \quad \quad \text{and} \quad \quad \mC_{0|t}^{\text{DPS-D}} = 0.
\end{align}
In the work \citet{Song2023}, abbreviated $\Pi$GDM-D here\footnote{Since the authors run DDIM-type sampler, we re-abbreviate $\Pi$GDM as $\Pi$GDM-D in this work (as $\Pi$GDM approximation itself can also be run with Euler-Maruyama schemes).} the same estimator for the mean is chosen. However, the variance is set to a multiple of the identity, corresponding to choices
\begin{align}\label{eq:PiGDM_approx}
 \rvm_{0|t}^{\text{$\Pi$G}} = \rvm_{0|t} \quad \quad \text{and} \quad \quad \mC_{0|t}^{\text{$\Pi$G}} = r_t^2 \mI_{d_x}
\end{align}
The choice of $r_t$ is such that it matches the variance of the reverse SDE if $\pdata$ would be a standard normal distribution. Since they employ a different forward SDE to ours (variance exploding SDE), $r_t$ is set, when the variance of the data is 1, to be equal to $v_t / (v_t + 1)$. In our case, with the OU/variance preserving SDE as a forward process, $r_t$ would be equal to $v_t / (v_t + \alpha_{t})$, when the variance of the data is 1.

Another relevant very recent work by \citet{finzi2023user} arrives at the approximation matching ours in the context of modelling physical constraints. However, we focus on general linear inverse problems outside the physical domain in this paper together with a novel theoretical result. Also, we note the work of \citet{stevens2023removing} who consider the maximum-a-posteriori (MAP) approach to find the moments of $p_{0|t}$.

Finally, sequential Monte Carlo (SMC) methods are particle-based posterior sampling schemes that are asymptotically exact in the limit of infinitely many particles. \citet{cardoso2023monte} target the posterior for linear inverse problems, and our first experiment makes a comparison to \citet{cardoso2023monte} for a Gaussian Mixture model. However, for high dimensional data (such as images), this method can exhibit the well studied SMC problem of weight degeneracy, where particle weights collapse. We note a recent state of the art work \citep{wu2024practical} that tackles weight degeneracy by introducing the Twisted Diffusion Sampler (TDS), with heuristic twisting functions approximating $p_{\rvy|t}(\rvy | \rvx_{t})$ that increase computational efficiency of SMC. The method is still asymptotically exact as long as the approximation of $p_{\rvy|t}(\rvy | \rvx_{t})$ converges to $p(\rvy|\rvx_{0})$ as $t$ approaches $0$. The authors note that the efficiency of TDS \citep{wu2024practical} depends on how closely the twisting function approximates the exact likelihood. TMPD (\Eqref{eq:likelihood_tmpd_approx}) could be a good twisting candidate.

\section{Experiments}\label{sec:experiments}
In this section, we demonstrate our results as well as the peformance of other approximations to the likelihood provided in \citet{Chung2022, Song2023}. In particular, we perform comparisons for two of our methods TMPD (an SGM using our approximation) and TMPD-D (a DDPM sampler using \Eqref{eq:posterior_mean_tmpd_d}). We compare these to DPS  (an SGM sampler using the posterior approximation in \Eqref{eq:dps_approx}), DPS-D \citep{Chung2022} (a DDPM-type sampler using \Eqref{eq:dps_approx}), $\Pi$GDM \citep{Song2023} (an SGM sampler using the posterior approximation in \Eqref{eq:PiGDM_approx}), and finally $\Pi$GDM-D (a DDIM-type sampler using \Eqref{eq:PiGDM_approx}, but in our experiments a DDPM-type sampler since we set the DDIM hyperparameter $\eta=1.0$ which is defined in Algorithm 1 by \citet{song2021denoising} who show that this is equivalent to a DDPM-type sampler).

The code for all of the experiments and instructions to run them are available at \href{https://github.com/bb515/tmpdjax}{github.com/bb515/tmpdjax} and \href{https://github.com/bb515/tmpdtorch}{github.com/bb515/tmpdtorch}.
\subsection{Gaussian Mixture Model}\label{experiment:gmm}
We now demonstrate a nonlinear SDE example and follow the Gaussian mixture model example of \citet{cardoso2023monte} where the data distribution $p_{0}(\rvx_{0})$ is a mixture of 25 Gaussian distributions. The means and variances of the components of the mixture are given in Appendix~\ref{appendix:gmm}. In this case, for each choice of observation $\rvy$, observation map $\mH$ and measurement noise standard deviation $\sigma_{\ry}$, the target posterior can be computed explicitly (see Appendix~\ref{appendix:gmm}).

\begin{figure}[t]
\begin{center}
\includegraphics[width=0.97\textwidth]{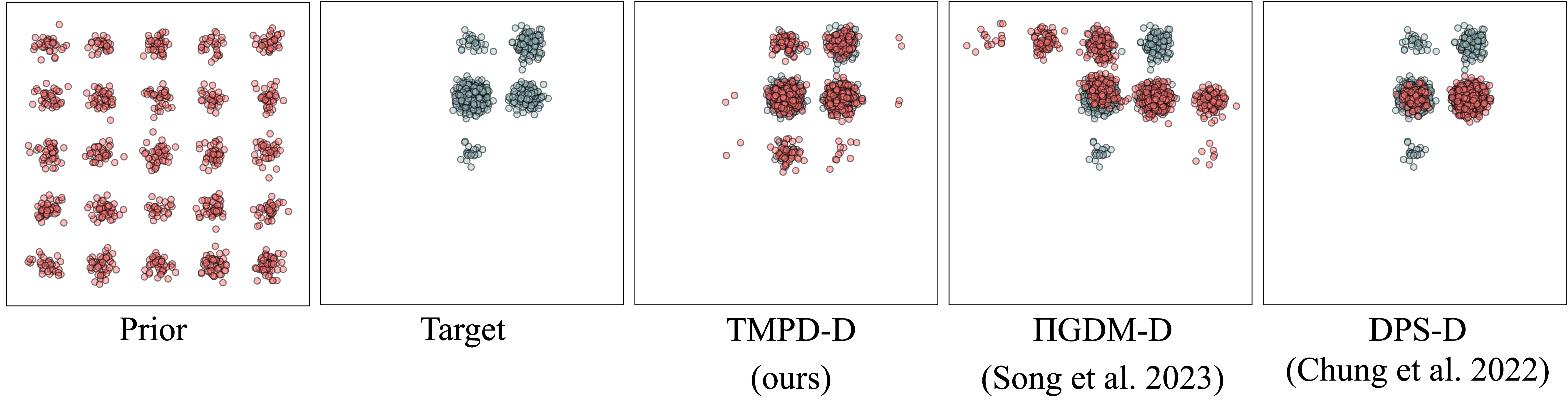}
\end{center}
\caption{We display the first two dimensions of the GMM inverse problem for one of the measurement models tested ($\mH, \sigma_{\ry}=0.1, (d_{\rx}, d_{\ry}) = (80, 1)$). The blue dots represent samples from the target posterior, while the red dots correspond to samples generated by each of the algorithms used (the names of the algorithms are given at the bottom of each column).}
\label{fig:2}
\end{figure}

To investigate the performance of posterior sampling methods, for each pair of dimensions and observation noise $(d_{\rx}, d_{\ry}, \sigma_{\ry}) \in \{8, 80, 800\} \times \{1, 2, 4\} \times \{10^{-2}, 10^{-1}, 10^{0}\}$ we randomly generate multiple measurement models $(\rvy, \mH) \in \mathbb{R}^{d_{\ry}} \times \mathbb{R}^{d_{\ry} \times d_{\rx}},$ and equally weight each component of the Gaussian mixture. Further details are given in Appendix~\ref{appendix:gmm}. We chose to control the dimension to gain insight into the performance of posterior sampling methods under varying dimensions. We also chose to control the noise level since the different posterior sampling methods have accuracy that depends on the signal-to-noise ratio. Through randomly varying the observation model, we gain an insight into the performance of the posterior sampling methods with different levels of posterior multimodality. This example is interesting because it allows us to study the behaviour of our methods on non-Gaussian problems in high dimensions whilst having access to the target posterior with which to compare (usually, obtaining a `ground-truth' posterior is not feasible for non-Gaussian problems).

\begin{table}
\caption{Sliced Wasserstein for the GMM case. The full table is in Ap. \ref{appendix:gmm}.}
\label{table:0}
\begin{center}
\begin{sc}
\begin{tabular}{llllllllll}
\hline
$d_{x}$    & 8            & 8            & 8            & 80           & 80           & 80           & 800          & 800          & 800          \\
$d_{y}$    & 1            & 2            & 4            & 1            & 2            & 4            & 1            & 2            & 4            \\ \hline
TMPD-D     & 1.6          & \textbf{0.7} & \textbf{0.3} & 2.7          & \textbf{1.0} & \textbf{0.3} & 3.1          & \textbf{1.4} & \textbf{0.4} \\
DTMPD-D    & 1.8          & 3.3          & \textbf{0.4} & 2.8          & 3.2          & 0.7          & 3.7          & 3.5          & 0.7          \\
DPS-D      & 4.7          & 1.8          & 0.7          & 5.6          & 3.2          & 1.2          & 5.8          & 3.5          & 1.4          \\
$\Pi$GDM-D & 2.6          & 2.1          & 3.8          & 3.2          & 2.8          & 0.6          & 3.5          & 3.1          & \textbf{0.4} \\ \hline
TMPD-D     & \textbf{1.4} & \textbf{0.9} & \textbf{0.3} & \textbf{2.3} & \textbf{1.2} & \textbf{0.4} & \textbf{2.9} & \textbf{1.3} & \textbf{0.4} \\
DTMPD-D    & 1.8          & 2.7          & 0.5          & 2.6          & 3.2          & 0.8          & 3.4          & 3.4          & 0.8          \\
DPS-D      & 4.7          & 1.5          & 0.8          & 5.1          & 3.1          & 1.0          & 5.7          & 3.1          & 1.3          \\
$\Pi$GDM-D & 2.2          & 1.6          & 3.8          & 2.9          & 2.7          & 0.6          & 3.3          & 2.7          & 0.4          \\ \hline
TMPD-D     & \textbf{0.9} & \textbf{0.9} & \textbf{0.6} & \textbf{1.5} & \textbf{1.1} & \textbf{0.9} & \textbf{1.5} & \textbf{1.2} & 0.9          \\
DTMPD-D    & \textbf{0.9} & 1.7          & 0.9          & 1.4          & 2.1          & 0.9          & 1.4          & 2.0          & 1.1          \\
DPS-D      & 5.2          & 3.5          & 2.5          & 6.9          & 3.9          & 1.7          & 6.8          & 4.7          & 0.9          \\
$\Pi$GDM-D & 1.5          & 2.3          & 1.8          & \textbf{1.6} & 1.4          & \textbf{0.9} & 2.0          & 2.0          & \textbf{0.6} \\ \hline
\end{tabular}
\end{sc}
\end{center}
\vskip -0.1in
\end{table}
We use the sliced Wasserstein (SW) distance defined in Appendix~\ref{appendix:gmm} to compare the posterior distribution estimated by each algorithm with the target posterior distribution. We use $10^{4}$ slices for the SW distance and compare 1000 samples of TMPD-D, $\Pi$GDM-D and DPS-D in Tables~\ref{table:0} obtained using 1000 denoising steps and 1000 samples of the true posterior distribution.

Table~\ref{table:0} indicates the Central Limit Theorem (CLT) 95\% confidence intervals obtained by considering 20 randomly selected measurement models ($\mH$) for each setting ($d_{\rx}, d_{\rx}, \sigma_{\ry}$). Figure~\ref{fig:2} shows the first two dimensions of the estimated posterior distributions corresponding to the configurations ($80, 1$) from Table~\ref{table:0} for one of the randomly generated measurement model ($\mH, \sigma_{\ry}=0.1$). These illustrations give us insight into the behaviour of the algorithms and their ability to accurately estimate the posterior distribution. We observe that TMPD-D estimates the target posterior well compared to $\Pi$GDM-D and DPS-D. TMPD-D covers all of the modes, whereas $\Pi$GDM-D and DPS-D do not.

We perform the same experiment using 1000 samples of TMPD, DTMPD, DPS and $\Pi$GDM, obtained using 1000 Euler-Maruyama time-steps, and results are shown in Appendix~\ref{appendix:gmm}.

A direct comparison to \citet{cardoso2023monte} using their original experimental setup is shown in Table~\ref{table:x} in Appendix~\ref{appendix:gmm}, which shows competitive performance for posterior sampling compared to Sequential Monte-Carlo, an exact sampling method.

\subsection{Noisy observation inpainting and super-resolution}
We consider inpainting and super-resolution problems on the FFHQ $256 \times 256$ \citep{karras2019style} and CIFAR-10 $32 \times 32$ \citep{krizhevsky2009learning} datasets. We compare TMPD to $\Pi$DGM and DPS. We also compare score-based diffusion models with their denoising-diffusion counterparts (denoted with suffix, -D).

Firstly, we follow the benchmark used by \citet{Chung2022} and use a Variance Preserving (VP) SDE, using a DDPM sampler, on FFHQ $256\times256$ using 1k validation images. The pre-trained diffusion model for FFHQ was taken from \citet{Chung2022} and was used directly without any finetuning. We follow \citet{Chung2022} and use various forward operators. For super-resolution, we use a downsampling ratio of 4 ($256 \times 256 \rightarrow 64 \times 64$) and bicubic interpolation; for `box' mask inpainting we mask out $128 \times 128$ region and for `random' mask inpainting we choose a random mask for each image masking between 30\% and 70\% of the pixels. Images are normalized to the range $[0, 1]$ and it is on this scale that we add Gaussian measurement noise with standard deviation $\sigma_{\ry} \in \{0.01, 0.05, 0.1, 0.2\}$. For quantitative comparison, we focus on two widely used perception distances, Fréchet Inception Distance (FID) and Learned Perceptual Image Patch Similarity (LPIPS) distance. FID evaluates consistency with the whole dataset using summary statistics from the FFHQ-50k dataset. We also evaluate observation data similarity using various distances between a sampled image and ground truth image: LPIPS, mean-squared-error (MSE), peak signal-to-noise-ratio (PSNR) and structural similarity index measure (SSIM). For $\Pi$GDM-D we use the algorithm and default hyperparameters as described in \citet{Song2023}. For DPS-D we use the algorithm in the codebase provided by the authors \citet{Chung2022} and we use their default hyperparameters such as their suggested step-size hyperparameter for this task, and static-thresholding (clipping the denoised image at each step to a range $[-1, 1]$) whereas TMPD-D does not require hyperparameter tuning or static-thresholding. The results for FFHQ sampled using VP DDPM are shown in the appendix Table~\ref{table:aFFHQVP}. We observe that DTMPD-D is competitive with DPD-D over a range of noise levels, however, $\Pi$GDM-D is not able to produce high quality reconstructions for larger noise levels.

We note that the heuristics used in the DPS and $\Pi$GDM implementations have been designed to work with the VP-SDE (DDPM sampler), and therefore the performance may not be robust to the choice of SDE. Fig~\ref{fig:3} illustrates this when each method is applied to the VE-SDE on a sample from the FFHQ validation dataset. We next compare performance to TMPD across VP and VE-SDE samplers and a range of noise levels on CIFAR-10 $64\times 64$ using 1k validation images.

We use pretrained denoising networks for CIFAR-10 that are \href{https://github.com/yang-song/score_sde}{available here}. For inpainting, we use `box' and `half' mask. For `half'-type inpainting, we mask out a $16 \times 16$ right half region of the image; for box-type inpainting, we mask out an $8 \times 8$ box region following \citet{cardoso2023monte}. For super-resolution, we use a downsampling ratio of 2 on each axis ($32 \times 32 \rightarrow 16 \times 16$) with a nearest-neighbour downsampling method; and a downsampling ratio of 4 ($32 \times 32 \rightarrow 8 \times 8$) with bicubic downsampling. Images are normalized to the range $[0, 1]$ and it is on this scale that we add Gaussian measurement noise with standard deviation $\sigma_{\ry} \in \{0.01, 0.05, 0.1\}$. Whereas no hyperparameters are required for our method, we chose the DPS scale hyperparameter by optimising LPIPS, MSE, PSNR and SSIM on a validation set of 128 images (see Fig.~\ref{fig:4} for an example). We found that static thresholding (clipping the denoised image estimate to a range $[0, 1]$ at each sampling step) is critical for the stability and performance of both DPS-D and $\Pi$GDM-D. Stability was noted as a limitation in \citet{Chung2022}, and they suggest that devising methods to stabilize the samplers would be a promising direction of research. We find that our TMPD-D method and the diagonal approximation DTMPD-D is stable across SDE, noise-level and observation maps, without the need for static thresholding. Whilst for VE $\Pi$GDM-D we found the original algorithm in \citet{Song2023} to be stable, for VP $\Pi$GDM-D, the original algorithm, whilst stable for FFHQ, was not stable, even with static-thresholding, for CIFAR-10. We chose to bring $\Pi$GDM-D a step closer to our algorithm by substituting their likelihood score into an Ancestral sampling algorithm, instead of a DDIM algorithm as suggested in \citet{Song2023}, which produced stable samples.
\begin{table}[t]
\caption{Summary of results using the VE and VP-SDE for increasingly noisy $\sigma_{\ry}\in \{0.01, 0.05, 0.1\}$ observation inpainting and super-resolution problems on FFHQ 1k validation set. The full results are in Ap.~\ref{appendix:inpainting}.}
\label{table:2}
\begin{center}
\begin{sc}
\begin{tabular}{llllll}
\hline
\multicolumn{2}{l}{SDE}           & \multicolumn{2}{l}{VE-SDE}                                                                                        & \multicolumn{2}{l}{VP-SDE}                                                                                        \\ \hline
Problem              & Method     & FID $\downarrow$               & LPIPS $\downarrow$                                                               & FID $\downarrow$               & LPIPS $\downarrow$                                                               \\ \hline
$\sigma_{\ry}=0.01$  & DTMPD-D     & \textbf{32.3} & \textbf{0.203} \tiny{$\pm$ 0.039} & \textbf{29.6} & 0.230 \tiny{$\pm$ 0.034}                           \\
$4 \times$ `bicubic' & DPS-D      & 47.0                           & 0.273 \tiny{$\pm$ 0.031}                           & 31.4                           & 0.234 \tiny{$\pm$ 0.048}                           \\
super-resolution     & $\Pi$GDM-D & 37.4                           & 0.244 \tiny{$\pm$ 0.030}                           & 29.7                           & \textbf{0.198} \tiny{$\pm$ 0.037} \\ \hline
$\sigma_{\ry}=0.05$  & DTMPD-D     & \textbf{32.1} & \textbf{0.268} \tiny{$\pm$ 0.048} & 32.7                           & 0.304 \tiny{$\pm$ 0.043}                           \\
$4 \times$ `bicubic' & DPS-D      & 105.9                          & 0.590 \tiny{$\pm$ 0.036}                           & \textbf{29.3} & 0.280 \tiny{$\pm$ 0.051}                           \\
super-resolution     & $\Pi$GDM-D & 106.8                          & 0.592 \tiny{$\pm$ 0.041}                           & 45.1                           & 0.311 \tiny{$\pm$ 0.047}                           \\ \hline
$\sigma_{\ry}=0.1$   & DTMPD-D     & \textbf{32.7} & \textbf{0.310} \tiny{$\pm$ 0.053} & 38.0                           & 0.348 \tiny{$\pm$ 0.048}                           \\
$4 \times$ `bicubic' & DPS-D      & 114.0                          & 0.569 \tiny{$\pm$ 0.044}                           & \textbf{30.9} & 0.318 \tiny{$\pm$ 0.051}                           \\
super-resolution     & $\Pi$GDM-D & 206.0                          & 0.724 \tiny{$\pm$ 0.034}                           & 119.6                          & 0.589 \tiny{$\pm$ 0.047}                           \\ \hline
$\sigma_{\ry}=0.01$  & DTMPD-D     & 30.2                           & 0.114 \tiny{$\pm$ 0.029}                           & \textbf{25.7} & 0.153 \tiny{$\pm$ 0.033}                           \\
`box' mask           & DPS-D      & \textbf{23.9} & \textbf{0.093} \tiny{$\pm$ 0.019} & 31.5                           & 0.175 \tiny{$\pm$ 0.038}                           \\
inpainting           & $\Pi$GDM-D & 27.1                           & 0.108 \tiny{$\pm$ 0.025}                           & 143.8                          & 0.247 \tiny{$\pm$ 0.024}                           \\ \hline
$\sigma_{\ry}=0.05$  & DTMPD-D     & \textbf{33.6} & \textbf{0.186} \tiny{$\pm$ 0.036} & \textbf{27.0} & 0.240 \tiny{$\pm$ 0.038}                           \\
`box' mask           & DPS-D      & 39.7                           & 0.318 \tiny{$\pm$ 0.044}                           & 30.7                           & 0.228 \tiny{$\pm$ 0.046}                           \\
inpainting           & $\Pi$GDM-D & 49.5                           & 0.354 \tiny{$\pm$ 0.044}                           & 159.3                          & 0.448 \tiny{$\pm$ 0.046}                           \\ \hline
$\sigma_{\ry}=0.1$   & DTMPD-D     & \textbf{34.0} & \textbf{0.223} \tiny{$\pm$ 0.041} & 29.6                           & 0.292 \tiny{$\pm$ 0.049}                           \\
`box' mask           & DPS-D      & 59.1                           & 0.467 \tiny{$\pm$ 0.053}                           & \textbf{29.3} & 0.259 \tiny{$\pm$ 0.049}                           \\
inpainting           & $\Pi$GDM-D & 72.6                           & 0.529 \tiny{$\pm$ 0.047}                           & 165.7                          & 0.539 \tiny{$\pm$ 0.083}                           \\ \hline
\end{tabular}
\end{sc}
\end{center}
\end{table}

The methods TMPD, $\Pi$GDM and DPS all have the same numerical solver of their respective reverse-SDE, and DTMPD-D, $\Pi$GDM-D and DPS-D all use DDPM since DDIM and DDPM are equivalent algorithms with our chosen DDIM hyperparameter $\eta=1.0$. Therefore, the sampling methods being compared only differ in the $\nabla_{\rvx_{t}} \log p_{\rvy| t}(\rvy|\rvx_{t})$ term in their reverse-SDE, and so this experiment allows us to study the behaviour of our method on inpainting and super-resolution compared to the different approximations of the smoothed likelihood. A summary of the results for CIFAR-10 sampled using VE DDPM are shown in Table~\ref{table:2}. The complete results are in Tables~\ref{table:a2a} and \ref{table:a2b} for VP and VE DDPM respectively, and  Tables~\ref{table:a3a} and \ref{table:a3b} for score-based VP and VE-SDE, respectively. For more experimental details including illustration of samples used to generate the tables can be found in Appendix~\ref{appendix:inpainting}. Our method is the only method able to provide high-quality reconstructions independently of the SDE, time discretization or noise level used. On the other hand, we see that DPS-D and $\Pi$GDM-D are not able to provide high-quality reconstructions for the VE-SDE. For the continuous time methods, $\Pi$GDM and DPS are outperformed by TMPD for both VE and VP-SDEs in the majority of tasks.

\section{Discussions, limitations and future work}
In this paper, we introduced TMPD, a diffusion modelling approach to solve inverse problems and sample from conditional distributions using unconditional diffusion models. On various tasks on the VP-SDE, TMPD achieves competitive quality with other methods that aim to solve the noisy, linear inverse problem while avoiding the expensive, problem-specific training of conditional models. Our method is more versatile since it can also be used for the VE-SDE, and for large noise and different time discretizations.

TMPD is slower, as each iteration costs more memory and compute due to the Jacobian over the score model. Even with a diagonal and row-sum approximation to the Jacobian, the method is around $1.5 \times$ slower than DPS and $\Pi$GDM. The row-sum approximation is not suitable for inverse problems with more complicated, non-diagonal and nonlinear observation operators, therefore, it would be helpful to explore methods to circumvent heuristics. For example, heuristics can be circumvented by noting in the definition of \Eqref{equ:TMPD} does not require the inverse but rather solving a system of linear equations with right hand side $\rvy - \mH \rvm_{0|t}(\rvx_{t})$. A very recent work \citep{rozet2024learning}, uses the natural choice of the conjugate gradient (CG) method to solve this linear system with success on non-diagonal and nonlinear observation operators, even for a small number of iterations of the CG method.

Using flow based methods \citep{albergo2022building, lipman2022flow} as unconditional priors and building conditional sampling methods that leverage flows as pretrained models \citep{ben2024d, pandey2024fast} is related to the approach used in this paper. For example, it is possible to directly substitute in the TMPD approximation to the PiGDM method applied to flows \citep{pokle2023training}, which would be a fruitful direction for future work.

Many state of the art diffusion models operate in a latent space which would make any observation operator nonlinear, and thus linear observations maps do not suffice in the latent diffusion setting. Furthermore, for inverse problems in general it may be excessive to train a diffusion model for the full data distribution, and it could be difficult to outperform a diffusion model trained for a specific inverse problem. For example, using diffusion models trained specifically for the super-resolution problem in a cascade together with noise conditioning augmentation \citep{ho2022cascaded} has been extremely effective in progressively generating high-fidelity images, see e.g., \citet{saharia2022photorealistic}.

A limitation of our method is that without approximations, the full method may be costly to implement in high dimensions of image data. On the positive side, our approach does not require any hyperparameter tuning, does not require static-thresholding of the denoised image, and is more principled when compared to existing approaches, as shown in Section~\ref{sec:theory}. We show that our method, unlike $\Pi$GDM, does not fail for the cases where the additive Gaussian noise is significant. We provided a way to analyse similar methods and our moment approximations can be analysed more rigorously to provide deeper theoretical results for these kinds of methods. Our future work plans include expanding the analysis we provided in this work.


\section*{Acknowledgements}
This work has been supported by The Alan Turing Institute through the Theory and Methods Challenge Fortnights event \textit{Accelerating generative models and nonconvex optimisation}, which took place on 6-10 June 2022 and 5-9 Sep 2022 at The Alan Turing Institute headquarters. JP and SR acknowledge funding by Deutsche Forschungsgemeinschaft (DFG) -- Project-ID 318763901 - SFB1294. M. Girolami was supported by a Royal Academy of Engineering Research Chair grant RCSRF1718/6/34, and EPSRC grants EP/W005816/1, EP/V056441/1, EP/V056522/1, EP/R018413/2, EP/R034710/1, EP/Y028805/1, and EP/R004889/1. The authors would also like to thank the Isaac Newton Institute for Mathematical Sciences, Cambridge, for support and hospitality during the programme  {\it The Mathematical and Statistical Foundation of Future Data-Driven Engineering} where work on this paper was undertaken. This work was supported by EPSRC grant no EP/R014604/1. BB gratefully acknowledges the EPSRC for funding this research through the EPSRC Centre for Doctoral Training in Future Infrastructure and Built Environment: Resilience in a Changing World (EPSRC grant reference number EP/S02302X/1); and the support of nPlan, and in particular Damian Borowiec and Peter A. Zachares, for the invaluable facilitation of work that was completed whilst on internship with nPlan and access to A100 GPUs.


\bibliography{main}
\bibliographystyle{tmlr}

\appendix

\section{Proofs for Section~\ref{sec:methods}}
\subsection{Proof of Proposition~\ref{prop:tweedie}}\label{proof:prop:tweedie}
In order to prove this result, we adapt Theorem~1 of \citet{meng2021estimating}. We write the proof for generic exponential family which can be adapted to our case easily. Let us consider
\begin{align*}
p_{t|0}(\rvx_t | \rvx_0) = \mathcal{N}(\rvx_t; \sqrt{\alpha_t} \rvx_0, v_t \mI_{d_x}).
\end{align*}
and write $p_t(\rvx_t) = \int p_{t|0}(\rvx_t | \rvx_0) p_0(\rvx_0) \md \rvx_0$. We are interested in finding the moments of the posterior
\begin{align*}
    p_{0|t}(\rvx_0 | \rvx_t) = \frac{p_{t|0}(\rvx_t | \rvx_0) p_0(\rvx_0)}{p_t(\rvx_t)}.
\end{align*}
Let us redefine the right handside in this Bayes' rule using the exponential family parameterisation of the Gaussian $p_{t|0}(\rvx_t | \rvx_0)$
\begin{align*}
    p(\bm{\eta}_0 | \rvx_t) = \frac{p_{t|0}(\rvx_t | \bm{\eta}_0) q_0(\bm{\eta}_0)}{p_t(\rvx_t)},
\end{align*}
where $\bm{\eta}_0 = \rvx_0 \sqrt{\alpha}_t/v_t$ and
\begin{align*}
    p_{t|0}(\rvx_t | \bm{\eta}_0) = e^{\bm{\eta}_0^\top \rvx_t - \psi(\bm{\eta}_0)} q(\rvx_t),
\end{align*}
where
\begin{align*}
    q(\rvx_t) = ((2\pi)^d v^{2d})^{-1/2} e^{-\rvx_t^\top \rvx_t / 2v_t}
\end{align*}
Now let
\begin{align*}
    \lambda(\rvx_t) = \log \frac{p_t(\rvx_t)}{q(\rvx_t)},
\end{align*}
we can rewrite
\begin{align*}
    p(\bm{\eta}_0 | \rvx_t) = e^{\bm{\eta}_0^\top \rvx_t - \psi(\rvx_t) - \lambda(\rvx_t)} q_0(\bm{\eta}_0).
\end{align*}
It is then easy to show that the moments of $\bm{\eta}_0$ are given by \citep{meng2021estimating}
\begin{align*}
    \mathbb{E}[\bm{\eta}_0 | \rvx_t] &= \mathbf{J}_\lambda(\rvx_t), \\
    \mathbb{E}[\bm{\eta}_0^\top | \rvx_t] &= \mathbf{J}_\lambda(\rvx_t)^\top, \\
    \mathbb{E}[\bm{\eta}_0 \bm{\eta}_0^\top | \rvx_t] &= \mathbf{S}_\lambda(\rvx_t) + \mathbf{J}_\lambda(\rvx_t) \mathbf{J}_\lambda(\rvx_t)^\top ,
\end{align*}
where $\mathbf{J}_\lambda$ is the Jacobian of $\lambda$ w.r.t. $\rvx_t$ and $\mathbf{S}_\lambda$ is the Hessian of $\lambda$ w.r.t. $\rvx_t$. Recall now that $\bm{\eta}_0 = \rvx_0 \sqrt{\alpha_t}/v_t$ and
\begin{align*}
    \lambda(\rvx_t) &= \log p_t(\rvx_t) + \frac{\rvx_t^\top \rvx_t}{2 v_t} + C \\
    \mathbf{J}_\lambda(\rvx_t) &= \nabla_{\rvx_t} \log p_t(\rvx_t) + \frac{\rvx_t}{v_t} \\
    \mathbf{S}_\lambda(\rvx_t) &= \nabla^2_{\rvx_t} \log p_t(\rvx_t) + \frac{1}{v_t} \mI_{d_x}.
\end{align*}
This implies that
\begin{align*}
    \mathbb{E}[\rvx_0 | \rvx_t] = \frac{1}{\sqrt{\alpha_t}} (\rvx_t + v_t \nabla_{\rvx_t} \log p_t(\rvx_t)),
\end{align*}
which proves the Tweedie's formula for the mean. For the covariance, note that
\begin{align*}
    \text{Cov}(\bm{\eta_0} | \rvx_t) &= \mathbb{E}[\bm{\eta}_0 \bm{\eta}_0^\top | \rvx_t] - \mathbb{E}[\bm{\eta}_0 | \rvx_t] \mathbb{E}[\bm{\eta}_0 | \rvx_t]^\top \\
    &= \nabla^2 \log p_t(\rvx_t) + \frac{1}{v_t} \mI_{d_x}.
\end{align*}
Since
\begin{align*}
    \text{Cov}(\bm{\eta_0} | \rvx_t) = \frac{\alpha_t}{v_t^2} \text{Cov}(\rvx_0 | \rvx_t),
\end{align*}
we conclude
\begin{align*}
    \text{Cov}(\rvx_0 | \rvx_t) = \frac{v_t}{\alpha_t} (\mI_{d_x} + v_t \nabla^2 \log p_t(\rvx_t)),
\end{align*}
which concludes the proof. \hfill $\square$

\section{Proofs for Section~\ref{sec:theory}}
\subsection{Proof of Proposition~\ref{prop:exact}}\label{proof:prop:exact}
If $\pdata$ is Gaussian, then the full process $\rvx_t$ is a Gaussian process. In particular, $(\rvx_0, \rvx_t)$ are jointly Gaussian:
\begin{align*}
    \rvx_0 | \rvx_t \propto \mathcal{N}(\mathbb{E}[\rvx_0 | \rvx_t], \mathbb{V}[\rvx_0 | \rvx_t]),
\end{align*}
where $\mathbb{V}$ denotes the covariance. However, the right-hand side is precisely the approximation we make, due to Proposition \ref{prop:tweedie}. Therefore, the approximation we make in \Eqref{equ:TMPD} is correct. Adding this to the learned drift $\mathsf{s}_\theta(\rvx_t, t) = \nabla \log p_t(\rvx_t)$, we get an expression for $\nabla \log p_t(\rvx_t | \rvy)$ by \Eqref{equ:score_decomposition}.

\subsection{Proof of Theorem~\ref{thm:bound}}\label{proof:thm:bound}
We first introduce three SDEs. 
\paragraph{Conditioned SDEs}: The first pair of SDEs is given by an OU process, but started in the conditional distribution:
\begin{equation}
    \md \rvx_t^c
    = -\frac{1}{2} \rvx_t^c \md t
    + \md \rvw_t, \quad \rvx_0^c \sim p_0^c := \pdata(\rvx_0 | \rvy).
    \label{eq:proof_forwardsde}
\end{equation}
We denote the marginals of $\rvx_t^c$ by $p_t^c$ to differentiate them from the marginals $p_t$ of the OU process started in the correct distribution, defined in \Eqref{eq:forward_SDE}. Note that $p_t^c = p_t(\rvx_t | \rvy )$. The time reversal $\rvz_t = \rvx_{T-t}$ then satisfies the SDE
\begin{equation}
\begin{aligned}
    \md \rvz_{t} 
    =& \frac{1}{2}\rvz_{t}\md t 
    + \nabla_{\rvz_{t}} \log p^c_{T-t}(\rvz_{t})\md t 
    + \md \rvw_{t} \\
    =& \frac{1}{2}\rvz_{t}\md t 
    + \nabla_{\rvz_{t}} \log p_{T-t}(\rvz_{t})\md t
    + \nabla_{\rvz_{t}} \log p_{\rvy | T-t}(\rvy | \rvz_{t})\md t
    + \md \rvw_{t} \\
    \quad \rvz_0 \sim& p^c_{T},
\end{aligned}
\label{eq:proof_reverseSDE}
\end{equation}
where we used \Eqref{equ:score_decomposition}.

Solutions to the above reverse SDE will sample our target measure $p_0^c = \pdata(\rvx_0 | \rvy)$ at final time. Therefore, we want to study how our algorithm approximates solutions of $\rvx^c$.

\paragraph{Intermediate Gaussian SDE:} Instead of bounding the distance of solutions to the above conditioned reverse SDE to our algorithm, we will instead introduce an intermediate process, which we will later use in a triangle inequality. The process will be a Gaussian process. Therefore, we denote it with a superscript $G$.

The process $\rvx^G_t$ will be defined analogous to \Eqref{eq:forward_SDE}, but assuming that it is started in $\mathcal{N}(\rvx_0; \bm{\mu}_0, \bm{\Sigma}_0)$ instead of $\pdata$. Since the forward SDE in \Eqref{eq:forward_SDE} is linear, all of the marginals of $\rvx_t^G$, which we denote by $p_t^G$, will also be Gaussian. 

Again, we define a conditioned version of $\rvx$, called $\rvx^{G,c}$ analogous to \Eqref{eq:proof_forwardsde}, i.e. $\rvx_0^{G,c}$ will be distributed as $\rvx_0^{G}$ conditioned on $\rvy = y$, i.e. $p^G(\rvx_0 | \rvy = y)$. Since we have a linear observation model, $p^{G}(\rvx_0 | \rvy = y)$ is still a Gaussian, and therefore $\rvx^{G,c}$ will also be a Gaussian process. Its reverse SDE $\rvz^{G, c}$ is defined through \Eqref{eq:proof_reverseSDE}, just that every appearance of $p$ will also have a superscript $G$,
\begin{equation}
    \md \rvz_{t} 
    = \frac{1}{2}\rvz_{t}\md t 
    + \nabla_{\rvz_{t}} \log p^G_{T-t}(\rvz_{t})\md t
    + \nabla_{\rvz_{t}} \log p^G_{T-t}(\rvy | \rvz_{t})\md t
    + \md \rvw_{t} \quad \rvz_0 \sim p^c_{T},
\label{eq:gaussian_reverse}
\end{equation}
\paragraph{Algorithm SDE:} 
Finally, we define a third reverse SDE, which is the SDE that we are discretizing when implementing our algorithm. It is given by
\begin{equation}
    \md \rvz_{t} 
    = \frac{1}{2}\rvz_{t}\md t 
    + \nabla_{\rvz_{t}} \log p_{T-t}(\rvz_{t})\md t
    + f_{T-t}(\rvz_{t})\md t
    + \md \rvw_{t}
    \quad \rvz_0 \sim p^c_{T},
\label{eq:algorithm_reverse}
\end{equation}
where 
\begin{equation}
    f_t(\rvz_t) 
    = \nabla_{\rvz_t} \mathbb{E}[\rvx_0 | \rvx_t = z_t] \mH^{\top} (\mH \mathbb{V}[\rvx_0 | \rvx_t = z_t] \mH^{\top} + \sigma_{y}^{2} \mI)^{-1}(\rvy - \mH \mathbb{E}[ \rvx_0 | \rvx_t = {z_t}])
\label{eq:our_drift}
\end{equation}
and $\mathbb{V}$ denotes the conditional covariance. Except for approximation errors due to time discretization and the initial conditions, this is the SDE we are sampling from in our algorithm. We denote the marginal of $\rvz_t$ by $q_t$. Now we are ready to write our proof.
\begin{proof}[Proof of Theorem~\ref{thm:bound}]
    We have that $\pdata(\cdot | \rvy) = p^c_0$
    \[
        \|p^c_0 - q_T\|_\text{TV}
        \le \|p^c_0 - p^{G,c}_0\|_\text{TV} + \|p^{G,c}_0 - q_T\|_\text{TV}
    \]
    We will bound the second term using Pinsker's inequality to get a KL term on the path space, as done in \citet{chen2023sampling}. The proof then consists of bounding the first total variation term and the resulting KL term.
    \paragraph{Bounding the first term:}
    The first term is the total variation distance between $p_0(\rvx_0 | \rvy = y)$ and $p^{G}_0(\rvx_0^G | \rvy = y)$. Now,
    \begin{align*}
      p_0(\rvx_0 = x_0 | \rvy = y) 
        &= \frac{p_{\rvy | 0}(\rvy = y | \rvx_0 = x_0)p_0(\rvx_0 = x_0)}{p_{\rvy}(\rvy = y)}\\
        &= \frac{p^G_{\rvy | 0}(\rvy = y | \rvx_0 = x_0)p^G_0(\rvx_0 = x_0)}{p^G_{\rvy}(\rvy = y)} 
            \frac{p_0(\rvx_0 = x_0)}{p_0^G(\rvx_0^G = x_0^G)}
            \frac{p_{\rvy}^G(\rvy = y)}{p_{\rvy}(\rvy = y)} \\
        &= p_{\rvy | 0}^G(\rvx_0 = x_0 | \rvy = y) 
            \exp(\Phi(x_0))
            \frac{p_{\rvy}^G(\rvy = y)}{p_{\rvy}(\rvy = y)},
    \end{align*}
    where we used that the conditional distribution of $\rvy$ given $\rvx_0$ does not depend on the distribution of $\rvx_0$.
    For the last term we get
    \begin{align*}
        p_{\rvy}(\rvy = y) 
        &= \int p_{\rvy | 0}(\rvy = y | \rvx_0 = x_0) p_0(\rvx_0 = x_0) \md x_0 \\
        &= \int p^G_{\rvy | 0}(\rvy = y | \rvx_0 = x_0) p^G_0(\rvx_0 = x_0) \exp(\Phi(x_0)) \md x_0 \\
        &= N p^G_{\rvy}(\rvy = y),
    \end{align*}
    with $N \in [1/M, M]$. Therefore, 
    \begin{align*}
        \|p^c_0 - p^{G, c}_0\|_\text{TV} 
        \le \int |\frac{p^c_0}{p^{G, c}_0}(x_0) - 1| p^{G, c}_0(x_0) \md x_0
        \le M^2 - 1,
    \end{align*}
    where we used that $M - 1$ is always greater or equal to $1 - \frac{1}{M}$, since $M \ge 1$.
    \paragraph{Bounding the second term:}
    For the second term we get that $p_0^{G,c}$ is the final marginal of \Eqref{eq:gaussian_reverse}, while $q_T$ is the final marginal of \Eqref{eq:algorithm_reverse}. We now use Pinsker's inequality,
    \[
        \|p^{G,c}_0 - q_T\|_{\text{TV}} \le\sqrt{\text{KL}(p^{G,c}_0 \| q_T)}.
    \]
    We denote the path measures induced by \Eqref{eq:gaussian_reverse} and \Eqref{eq:algorithm_reverse} by $\mathbb{P}^{G,c}$ and $\mathbb{Q}$ respectively. They have final time marginals $p_0^{G,c}$ and $q_T$. Therefore, we can bound the KL-Divergence of the marginals $p^{G,c}$ and $q_T$ by the KL-Divergence on the full path space:
    \[
        \text{KL}(p^{G,c}_0 \| q_T) \le \text{KL}(\mathbb{P}^{G,c} \| \mathbb{Q}).
    \]
    We will assume that we can apply Girsanovs theorem and show later that this is justified. Using Girsanovs theorem, we can evaluate the Radon-Nikodym derivative $\frac{\md \mathbb{Q}}{\md \mathbb{P}^{G,c}}$ on the path space and therefore calculate the KL-divergence:
    \begin{align*}
        \text{KL}(\mathbb{P}^{G,c} \| \mathbb{Q})
        =& \mathbb{E}_{\mathbb{P}^{G,c}}[\log \frac{\md \mathbb{P}^{G,c}}{\md \mathbb{Q}}] \\
        =& \mathbb{E}_{\mathbb{P}^{G,c}}\left[-\int_0^t \nabla \log p_t(x_t) + f_{T-t}(x_t) - \nabla \log p_{T-t}^{G, c}(x_t) \md w_t\right] \\
        +& \mathbb{E}_{\mathbb{P}^{G,c}}\left[\int_0^t \|\nabla \log p_t(x_t) + f_{T-t}(x_t) - \nabla \log p_{T-t}^{G,c}(x_t)\|^2 \md t \right] \\
        =&\int_0^t \mathbb{E}_{p_t^{G,c}}[\|\nabla \log p_t(x_t) + f_{T-t}(x_t) - \nabla \log p_{T-t}^{G,c}(x_t)\|^2] \md t,
    \end{align*}
    where the term on the second line drops out because stochastic integrals have expectation $0$. 
    Now we have the drift of the Gaussian SDE given by
    \begin{equation}
    \begin{aligned}
        &\nabla \log p_t^{G, c}(x_t) \\
        =& \nabla \log p_t^G(x_t) + \nabla \log p^G(\rvy | \rvx_t^G = x_t) \\
        =& \nabla \log p_t^G(x_t) \\
        +& \nabla_{x_t}\mathbb{E}[\rvx^G_0 | \rvx_t^G = x_t] \mH^{\top}
        (\mH \mathbb{V}[\rvx^G_0 | \rvx^G_t = x_t] \mH^{\top} + \sigma_y^2 \mI_{d_y})^{-1}
        (\rvy - \mH\mathbb{E}[\rvx^G_0 | \rvx_t^G = x_t]) \\
        =:& \nabla \log p_t^G(x_t) + \tilde{f}_t(x_t),
    \end{aligned}
    \label{eq:drift_gaussian}
    \end{equation}
    see \Eqref{eq:linear_posterior_score}, while the drift of the algorithm SDE is given by
    \begin{equation}
    \begin{aligned}
        &\nabla \log p_t(x_t) + f_t(x_t),
    \end{aligned}
    \label{eq:drift_algorithm}
    \end{equation}
    where $f_t$ is given by \Eqref{eq:our_drift}.
    We see that the difference in the SDE drifts mainly consists of differences between conditional moments of $p_t$ and $p_t^G$, as well as the derivatives of the conditional expectations. Therefore, the main difficulty of the proof is to bound these.
    
    The density of the conditional distribution of $p_{0 | t}$ is given by
    \begin{equation}
    \begin{aligned}
        p_{0 | t}(x_0 | x_t) 
        =& \frac{p_{0,t}(x_0, x_t)}{p_t(x_t)}
        = \frac{p^G_{0,t}(x_0, x_t)\exp(\Phi(x_0))}{p(x_t)}
        = \frac{p^G_{0|t}(x_0| x_t)p^G(x_t) \exp(\Phi(x_0))}{p(x_t)} \\
        =& p^G_{0|t}  \exp(\Phi(x_0)-\Phi(x_t)),
    \end{aligned}
    \label{eq:tk_densities}
    \end{equation}
    where
    \begin{align}
        \exp(\Phi_t(x_t)) = \frac{p_t(x_t)}{p^G_t(x_t)}
        = \mathbb{E}[\exp(\Phi(\rvx_0)) | \rvx_t = x_t].
        \label{eq:marginal_densities}
    \end{align}
    In the last equality, we used that 
    \[
        \frac{\md \mathbb{P}}{\md \mathbb{P}^G}(x_{[0,t]}) = \exp(\Phi(x_0)),
    \]
    where we denoted by $\mathbb{P}$ the path measure induced by \Eqref{eq:proof_forwardsde}. Therefore, also their marginals $p_t = \mathbb{P}^G_t$ and $p_t = \mathbb{P}_t$ have relative densities, which are given by integrating out the density to time $t$, as we did in \Eqref{eq:marginal_densities}.

    By assumption $\exp(\Phi)$ is bounded from above and below by $M$ and $1/M$ respectively, and by \Eqref{eq:marginal_densities} the same holds for $\exp(\Phi_t)$. Therefore, by \Eqref{eq:tk_densities}, $p_{t|0}$ is absolutely continuous with respect to $p_{t|0}^G$ with a density that is bounded from above and below by $M^2$ and $1/M^2$ respectively. 
    We now obtain
    \begin{align*}
        &|\mathbb{E}[\rvx_0 | \rvx_t = x_t] - \mathbb{E}[\rvx_0^G | \rvx_t^G = x_t]| \\
        =& \left|\int x_0 [p(x_0 | x_t) - p^G(x_0 | x_t)] \md x_0\right|
        = \left|\int x_0 p^G(x_0 | x_t) (\exp(\Phi(x_0) - \Phi(x_t))-1) \md x_0\right|\\
        \le& 
        (M^2 - 1)\mathbb{E}[|\rvx_0^G| | \rvx_t^G = x_t]
    \end{align*}
    The same holds for every entry of the covariance matrix. We denote 
    \begin{align*}
        V_{ij} &:= \mathbb{V}[\rvx_0 | \rvx_t]_{ij}, \\
        V^{G}_{ij} &:= \mathbb{V}[\rvx_0^{G} | \rvx_t^G]_{ij} = N_{ij}, \\
        |V_{ij} - V^G_{ij}| &\le (M^2 - 1) N_{ij} \\
        N_{ij} &:= \mathbb{E}[|\rvx^{G}_{0,i}\rvx^G_{0,j}|~|\rvx_t^G]
    \end{align*}
    Now we get that
    \begin{align*}
        \|\mV - \mV^{G}\|_F
        &= \left(\sum_{ij} (V_{ij} - V^{G}_{ij})^2 \right)^{1/2}
        \le (M^2 - 1) \|\mV^{G}\|_F \max_{ij}(N_{ij}).
    \end{align*}
    We can bound $N_{ij}$ by 
    \begin{equation}
        N_{ij} \le \mathbb{E}[(\rvx_{0, i}^G)^2 + (\rvx_{0, j}^G)^2 | \rvx_t^G]
        \le \text{Tr}(\mathbb{V}[\rvx_0^G | \rvx_t^G = x_t]).
    \end{equation}
    The latter term does not actually depend on $x_t$, but only on $t$. Using the formulas in \Eqref{eq:p0t_explicit_formula} we see that it can be bounded independently of $t$ (only depending on $m_0$ and $\Sigma_0$). Therefore, we can put it into a constant. We get that
    \[
        \|\mV - \mV^G\|_F \lesssim (M^2 - 1) \|\mV^G\|_F,
    \]
    where
    \[
        a \lesssim b \quad \Leftrightarrow \quad a \le C b,
    \]
    with a constant $C$ only depending on $m_0$, $\Sigma_0$, $H$, $\sigma_{y}$ and the observation $y$.
    Now let $v$ be a vector with $\|v\| = 1$. It holds that
    \begin{equation}
    \begin{aligned}
        v^{\top}V_{ij}v 
        =& \int (v^{\top}(x_0 - \mathbb{E}[\rvx_0 | \rvx_t = x_t]))^2 p^G_{0 | t}(x_0 | x_t) \exp(\Phi(x_0) - \Phi_t(x_t)) \md x_0 \\
        =& \frac{1}{M} v^{\top} V_{ij}^G v
        + N\int (v^{\top}(\mathbb{E}[\rvx_0^G | \rvx_t^G = x_t] - \mathbb{E}[\rvx_0 | \rvx_t = x_t]))^2 p_{0 | t}(x_0 | x_t) \md x_0 \\
        \ge& v^{\top}V_{ij}^Gv \frac{1}{M} 
    \end{aligned}
    \label{eq:lower_ev_bound}
    \end{equation}
    with $N \ge 1/M$. Since $V$ and $V^{G}$ are positive semidefinite, symmetric matrices, this implies that all eigenvalues of $V$ are bounded by the lowest eigenvalue of $V^{G}$ times $1/M$. We define
    \begin{align}
        \label{eq:T1}
        \mT_1 &= \mH \mV^G \mH^{\top} + \sigma_y^2 \mI_{d_y}\\
        \mT_2 &= \mH \mV \mH^{\top} + \sigma_y^2 \mI_{d_y}.
    \end{align}
    We then need to bound
    \[
        \|\mT_1^{-1} - \mT_2^{-1}\| \le \|\mT_1 - \mT_2\| (\|\mT_1^{-1}\| + \|\mT_2^{-1}\|).
    \]
    We start with
    \begin{align*}
        \|\mT_1 - \mT_2\|_\text{op} \le \|\mH\|^2 \|\mV - \mV^{G}\|_\text{op} 
        \lesssim \|\mV - \mV^{G}\|_F \lesssim (M^2-1) \|\mV^G\|_F.
    \end{align*}
    We also used the equivalence of the Frobenius and operator norm here. Due to our eigenvalue bound (\Eqref{eq:lower_ev_bound}) on $\mV$, we also get an analogous bound on $\mV + \sigma_{\ry} \mI_{d_y}$. Therefore,
    \[
        \|\mT_1^{-1} - \mT_2^{-1}\|_\text{op}
        \lesssim (M^2-1) \|\mV^{G}\|_F\|\mV^{G}\|_{F}^{-1}(1+M) 
        \le (M^2-1)(M+1)
        \le M^3 - 1
    \]
    where we used that the operator norm of the inverse, is the inverse of the operator norm and the equivalence of the operator norm to the Frobenius norm.

    Finally, we need to bound
    \begin{equation}
    \begin{aligned}
        &\nabla_{x_t} \int x_0 p(x_0 | x_t) \md x_0 \\
        =& \int x_0 \nabla_{x_t} p^G(x_0 | x_t)\exp(\Phi(x_0) - \Phi(x_t)) \md x_0 \\
        =& \int x_0 \exp(\Phi(x_0) - \Phi(x_t)) \nabla_{x_t} p^G(x_0 | x_t) \md x_0
        + \int x_0 \nabla_{x_t} \Phi(x_t) \exp(\Phi(x_0) - \Phi(x_t)) p^G(x_0 | x_t) \md x_0 \\
        =& \int x_0 \exp(\Phi(x_0) - \Phi(x_t)) \nabla_{x_t} p^G(x_0 | x_t) \md x_0
        + \nabla_{x_t}\Phi(x_t) \mathbb{E}[\rvx_0 | \rvx_t = x_t].
    \end{aligned}
    \label{eq:der_cond_exp}
    \end{equation}
    Also
    \begin{equation}
    \begin{aligned}
        &\left\|\int x_0 \exp(\Phi(x_0) - \Phi(x_t)) \nabla_{x_t} p^G(x_0 | x_t) \md x_0
        - \nabla_{x_t} \mathbb{E}[\rvx_0^G | \rvx_t^G = x_t]\right\|\\
        =& \left\|\int \nabla_{x_t} p^G(x_0 | x_t) x_0^{\top} (\exp(\Phi(x_0) - \Phi(x_t))-1)  \md x_0\right\| \\
        \le& (M^2 - 1)\int p^G(x_0 | x_t) \|\nabla \log p^G(x_0 | x_t) x_0^{\top} \| \md x_0  \\
        \le& (M^2 - 1)(\int p^G(x_0 | x_t) \|\nabla \log p^G(x_0 | x_t) x_0^{\top} \| \md x_0  )^{1/2}\\
        \le& (M^2 - 1)(\int p^G(x_0 | x_t) x_0^{\top}\Sigma_{0|t}^{-1}(x_0 - m_{0|t}^{x_t}) \md x_0  )^{1/2} \\
        \lesssim& (M^2 - 1)(\int p^G(x_0 | x_t) (x_0 - m_{0|t}^{x_t})^{\top}\Sigma_{0|t}^{-1}(x_0 - m_{0|t}^{x_t}) \md x_0  )^{1/2} 
        \lesssim \|x_t\|(M^2 - 1)  \sqrt{d_x} \\
        \lesssim& \|x_t\|(M^2 - 1),
    \end{aligned}
    \label{eq:bound_dcondexp}
    \end{equation}
    here we used that we can upper bound the operator norm of a positive semidefinite matrix by its trace from the third to the fourth line. We denoted by $m_{0|t}^{x_t}$ and $\Sigma_{0|t}$ the mean and covariance of $p^G(\rvx_0^G | \rvx_t^G=x_t)$. From the second to last to the last line we used that the $m_{0|t}^{x_t}$ depends on $x_t$ linearly, and the magnitude of the linear dependence can be bounded uniformly in $t$ (see \Eqref{eq:p0t_explicit_formula}). The integral in the last line is the variance of a standard normal random variable, which evaluates to $d_x$. Furthermore,
    \[
        \nabla \Phi_t(x_t) = \mathbb{E}[\nabla \Phi_0(\rvx_0) | \rvx_t = x_t] \le L, 
    \]
    see for example the Proof of Theorem 1 in \citet{pidstrigach2023infinite}.
    Therefore,
    \begin{align*}
        &\|\nabla_{x_t}\mathbb{E}[\rvx_0^G | \rvx_t^G = x_t] - \nabla_{x_t}\mathbb{E}[\rvx_0 | \rvx_t = x_t]\| \\
        \lesssim& \|x_t\| (M^2 - 1) + L\|\mathbb{E}[\rvx_0 | \rvx_t = x_t]\| \\
        \le& \|x_t\| (M^2 - 1) + L(\|\mathbb{E}[\rvx_0 | \rvx_t = x_t] - \mathbb{E}[\rvx_0^G | \rvx_t^G = x_t]\| + \|\mathbb{E}[\rvx_0^G | \rvx_t^G = x_t]\|) \\
        \lesssim& \|x_t\| (M^2 - 1) + L(M^2 - 1 + \|x_t\|) = \|x_t\|(M^2-1 + L) + (M^2-1)L =: B_{x_t}
    \end{align*}
    Furthermore, from \Eqref{eq:der_cond_exp} we see that
    \begin{align*}
        \|\nabla_{x_t}\mathbb{E}[\rvx_0 | \rvx_t = x_t]\| 
        \le& \|x_t\| (M^2 - 1) + L(M^2 - 1 + \|x_t\|) = B_{x_t}
    \end{align*}
    too.
    We now subtract the full drifts in \Eqref{eq:drift_gaussian} and \Eqref{eq:drift_algorithm} and arrive at
    \begin{equation*}
    \begin{aligned}
        &\|\nabla \tilde{f}(x_t) - f_t(x_t) \| \\
        \le& B_{x_t}
            \times \|\mT_1^{-1}(y - H\mathbb{E}[\rvx^G_0 | \rvx_t^G = x_t])\| \\
        &+ B_{x_t} \|\mT_1^{-1} - \mT_2^{-1}\|_{\text{op}}
        \|y - H \mathbb{E}[\rvx^G_0 | \rvx_t^G = x_t]\| \\
        &+ B_{x_t}\|\mT_2^{-1}\|_{\text{op}} \|\mH \mathbb{E}[\rvx^G_0 | \rvx_t^G = z_t] - \mH\mathbb{E}[\rvx_0 | \rvx_t = z_t]\| \\
        \lesssim& 
        B_{x_t} 
        + B_{x_t}(M^3 - 1)
        + B_{x_t}M(M^2-1) \\
        \le& B_{x_t}(M^3-1) = (M^5 - 1)(\|x_t\| + L) + (M^3-1)L\|x_t\|.
    \end{aligned}
    \end{equation*}
    Furthermore, we get that
    \begin{align*}
        \|\nabla \log p_t^G(x_t) - \nabla \log p_t(x_t)\|
        \le \|\nabla \log \frac{p_t^G(x_t)}{p_t(x_t)}\|
        \le \|\nabla_{x_t} \Phi_t(x_t)\|
        \le L.
    \end{align*}
    Plugging in \Eqref{eq:drift_gaussian}, we get that
    \begin{equation*}
        \|\nabla \log p_t(x_t) + f_t(x_t) - p_t^{G,c}(x_t)\| 
        \le (M^5 - 1)(\|x_t\| + L) + (M^3-1)L\|x_t\| + L.
    \end{equation*}
    Using this expression, we see that the drift change is linear in $\|x_t\|$. We now want to apply an iterated version of Novikov's condition to show that our application of Girsanov's Theorem is justified. To that end, we follow the argument \citet[Corollary 3.5.16]{karatzas2012brownian}. We repeat it here for completeness. We see that
    \begin{align*}
        &\mathbb{E}_{\mathbb{P}^{G,c}}\left[\exp\left(\frac{1}{2}\int_{t_i}^{t_{i+1}} \|\nabla \log p_t(x_t) + f_t(x_t) - \nabla \log p_t^{G,c}(x_t)\|^2 \md t\right)\right] \\
        \le& \mathbb{E}_{\mathbb{P}^{G,c}}\left[\exp\left(c\int_{t_i}^{t_{i+1}} 1 + \|x_t\|^2 \md t\right)\right].
    \end{align*}
    Since the expectation is regarding a Gaussian random variable $x_t$, we can make it finite as long as we pick $\Delta_{i} = t_{i+1} - t_{i}$ small enough. 
    By setting $t_0 = 0$ and $t_1 > 0$, we can show equivalence on $[t_0, t_1]$. We can then iterate this procedure, to get equivalence on $[t_1, t_2]$ and so on. Since the lowest and highest eigenvalues of $\Sigma_t$ are bounded from below and above respectively, and $m_t$ is also bounded, the $\Delta_i$ can be bounded from below. Therefore, we get equivalence on $[0,T]$ this way in at most $\lceil T/\Delta\rceil$ steps. 
    
    Furthermore, we see that the likelihood KL-divergence can be bounded by
    \begin{align*}
        &\mathbb{E}_{\mathbb{P}^{G,c}}\left[\frac{1}{2}\int_0^t \|\nabla \log p_t(x_t) + f_t(x_t) - \nabla \log  p_t^{G,c}(x_t)\|^2 \md t\right] \\
        \lesssim& T((M^5 - 1)(L+1) + (M^3-1)L + L) 
        \le T((M^5 - 1)(L+1) + L),
    \end{align*}
    where we used that the mean and covariance of the Gaussian process $x_t$ (under $\mathbb{P}^{G,c}$) can be bounded by a constant only depending on $y$, $\Sigma_t$ and $m_t$. Taking the square root proves our theorem.
\end{proof}

\section{Variance Exploding SDE (time-rescaled Brownian motion)}\label{appendix:ve}
A second example of an SDE used frequently in denoising-diffusion is the Variance Exploding \citep{song2020score} or time-rescaled Brownian motion, is
\begin{equation}
    \md \rvx_{t} = \sqrt{\frac{\md v(t)}{\md t}} \md \rvw_{t}, \quad \rvx_{0} \sim p_{0} = p_{\text{data}}
\end{equation}
where $(\rvw_{t})_{t \in [0, T]}$ is a Brownian motion and $v : \mathbb{R}_{\geq 0} \rightarrow \mathbb{R}_{\geq 0}$ is an increasing function where $v(0) = 0$. The transition kernel of the time-rescaled Brownian motion is
\begin{equation}
    p_{t|0}(\rvx_{t}|\rvx_{0}) = \mathcal{N}(\rvx_{0}, v_{t} \mI_{d_{\rx}}),
\end{equation}
denoting $v(t) = v_{t}$. The signal $\rvx_{0}$ is sampled from a second SDE, the reverse process, given in forward time as 
\begin{equation}
    \md \rvx_{t} = -\nabla_{\rvx_{t}} \log p_{t}(\rvx_{t}) \md t - \sqrt{\frac{ \md v_{t}}{ \md t}} \md \bar{\rvw}_{t}, \quad \rvx_{T} \sim p_{T}.
\end{equation}

\section{Tweedie's formula}\label{appendix:tweedie}
In this section, we give an alternative derivation of the Tweedie's formula for VP-SDEs and VE-SDEs.

Tweedie's formula \citep{Robbins1956, Efron2011} gives the minimum mean squared error (MMSE) estimator of $\rvx_{0}|\rvx_{t}$ when $\rvx_{t} | \rvx_{0}$ is Gaussian and the score $\nabla_{\rvx_{t}} \log p_{t}(\rvx_{t})$ is available. Tweedie's formula applied to the time-rescaled SDEs is given below.

\subsection{Variance Preserving SDE (time-rescaled Ornstein-Uhlenbeck process)}\label{OU_Tweedie}
Using the Variance Preserving SDE or time-rescaled Ornstein-Uhlenbeck process in \Eqref{eq:VP}, given the random variable $$\sqrt{\frac{\alpha_{t}}{v_{t}}}\rvx_{0} := \rvv_{0} \sim p_{\rvv_{0}},$$ where, we now define $p_{\rvv_{0}}(\rvv_{0})$ as the probability density of the random variable $\rvv_{0}$ evaluated at the realization $\rvv_{0}$, then the random variable  $$\frac{\rvx_{t}}{\sqrt{v_{t}}} := \rvv_{t} \sim \mathcal{N}(\rvv_{0}, \mI_{d_{\rx}}),$$ has a marginal probability density evaluated at the realization $\rvv_{t}$ is the convolution of the random variable $\rvv_{0}$ with the Gaussian kernel $$p_{\rvv_{t}}(\rvv_{t}) = \int \phi(\rvv_{t} - \rvv_{0}) p_{\rvv_{0}}(\rvv_{0}) \md \rvv_{0}.$$ Then, Tweedie's formula \citep{Robbins1956, Efron2011} gives
$$
\mathbb{E}_{\rvv_{0} \sim p_{\rvv_{0}|\rvv_{t}}}[\rvv_{0}] = \rvv_{t} + \nabla_{\rvv_{t}} \log p_{\rvv_{t}} (\rvv_{t})
$$
From continuous change of random variables, $
p_{\rvv_{t}}(\rvv_{t}) = p_{\rvx_{t}}(\rvv_{t}\sqrt{v_{t}}) \left|\frac{\md \rvx_{t}}{\md \rvv_{t}} \right| = p_{\rvx_{t}}(\rvx_{t}) \sqrt{v_{t}},
$
giving
$
\log p_{\rvv_{t}}(\rvv_{t}) = \log p_{\rvx_{t}}(\rvx_{t}) + \log \sqrt{v_{t}},
$ which yields
$$
\nabla_{\rvv_{t}} \log p_{\rvv_{t}}(\rvv_{t}) = \sqrt{v_{t}} \nabla_{\rvx_{t}} \log p_{\rvx_{t}}(\rvx_{t}).
$$
Now, substituting into Tweedie's formula,
$$
\mathbb{E}_{\rvv_{0} \sim p_{\rvv_{0}|\rvv_{t}}}[\rvv_{0}]= \frac{\rvx_{t}}{\sqrt{v_{t}}} + \sqrt{v_{t}}\nabla_{\rvx_{t}} \log p_{\rvx_{t}} (\rvx_{t}),
$$
giving
$$
\mathbb{E}_{\rvx_{0} \sim p_{\rvx_{0}|\rvx_{t}}}[\rvx_{0}] = \frac{1}{\sqrt{\alpha_{t}}} \rvx_{t} + \frac{1}{ \sqrt{\alpha_{t}}} v_{t}\nabla_{\rvx_{t}} \log p_{\rvx_{t}}(\rvx_{t}).
$$

The covariance is given as \citep{Robbins1956, Efron2011},
$
\mathbb{V}_{\rvv_{0} \sim p_{\rvv_{0}|\rvv_{t}}}[\rvv_{0}|\rvv_{t}] = I + \nabla_{\rvv_{t}} \nabla_{\rvv_{t}} \log p_{\rvv_{t}} (\rvv_{t}),
$
where
\begin{align}
    \nabla_{\rvv_{t}} \nabla_{\rvv_{t}} \log p_{\rvv_{t}}(\rvv_{t}) &= \nabla_{\rvv_{t}} (\sqrt{v_{t}} \nabla_{\rvx_{t}} \log p_{\rvx_{t}}(\rvx_{t}))\\
    &= v_{t} \nabla^2_{\rvx_{t}} \log p_{\rvx_{t}}(\rvx_{t})\\
\end{align}
and so
$$
\mathbb{V}_{\rvv_{0} \sim p_{\rvv_{0}|\rvv_{t}}}[\rvv_{0}] = \mI_{d_{\rx}} +  v_{t} \nabla^2_{\rvx_{t}} \log p_{\rvx_{t}}(\rvx_{t})
$$
giving
$$
\mathbb{V}_{\rvx_{0} \sim p_{\rvx_{0}|\rvx_{t}}}[\rvx_{0}] = \frac{v_{t}}{\alpha_{t}}(\mI_{d_{\rx}} +  v_{t} \nabla_{\rvx_{t}} \log p_{\rvx_{t}}(\rvx_{t})).
$$
In practice, we make the approximation $\nabla_{\rvx_{t}} \log p_{\rvx_{t}}(\rvx_{t}) \approx \mathsf{s}_{\theta}(\rvx_{t}, t)$, giving 
$$\mathbb{E}_{\rvx_{0} \sim p_{\rvx_{0}|\rvx_{t}}}[\rvx_{0}] \approx \rvm_{0|t} := \frac{1}{\sqrt{\alpha_{t}}} \rvx_{t} + \frac{1}{ \sqrt{\alpha_{t}}} v_{t} \mathsf{s}_{\theta}(\rvx_{t}, t)$$ and $$
\mathbb{V}_{\rvx_{0} \sim p_{\rvx_{0}|\rvx_{t}}}[\rvx_{0}] \approx \frac{v_{t}}{\alpha_{t}}(\mI_{d_{\rx}} +  v_{t} \nabla_{\rvx_{t}} \mathsf{s}_{\theta}(\rvx_{t}, t)),
$$
which yields a Gaussian approximation that matches the first and second moments,
\begin{align}
    p_{\rvx_{0}| \rvx_{t}}(\rvx_{0}| \rvx_{t}) &\approx \mathcal{N}\left(\rvm_{0|t}, \frac{v_{t}}{\alpha_{t}}(\mI_{d_{\rx}} +  v_{t} \nabla_{\rvx_{t}} \mathsf{s}_{\theta}(\rvx_{t}, t))\right)\\
    &= \mathcal{N}\left(\rvm_{0|t}, \frac{v_{t}}{\sqrt{\alpha_{t}}} \nabla_{\rvx_{t}} \rvm_{0|t}\right).
\end{align}
The matrix $\frac{v_{t}}{\alpha_{t}}(\mI_{d_{\rx}} +  v_{t} \nabla_{\rvx_{t}} \mathsf{s}_{\theta}(\rvx_{t}, t))$ needs to be inverted to calculate the log likelihood, and therefore must be both symmetric and positive definite for all time, which puts the requirement that $\mathsf{s}_{\theta}(\rvx_{t}, t)$ can be written as the negative gradient of a potential, however it is only approximated as such $\mathsf{s}_{\theta}(\rvx_{t}, t) \approx \nabla \log p_{\rvx_{t}}(\rvx_{t})$.

\subsection{Variance Exploding SDE (time-rescaled Brownian motion)}
Using the Variance Exploding SDE \citep{song2020score} or time-rescaled Brownian motion, given the random variable
\begin{equation}
    \sqrt{\frac{1}{v_{t}}} \rvx_{0} := \rvv_{0} \sim p_{\rvz_{0}},
\end{equation}
a similar calculation as in \ref{OU_Tweedie} gives
$$
\mathbb{E}_{\rvx_{0} \sim p_{\rvx_{0}|\rvx_{t}}}[\rvx_{0}] = \rvx_{t} + v_{t}\nabla_{\rvx_{t}} \log p_{\rvx_{t}} (\rvx_{t}),
$$ and
$$
\mathbb{V}_{\rvx_{0} \sim p_{\rvx_{0}|\rvx_{t}}}[\rvx_{0}] = v_{t}(\mI_{d_{\rx}} +  v_{t} \nabla_{\rvx_{t}} \log p_{\rvx_{t}}(\rvx_{t})).
$$
Again, in practice, we make the approximation $\nabla_{\rvx_{t}} \log p_{\rvx_{t}}(\rvx_{t}) = \mathsf{s}_{\theta}(\rvx_{t}, t)$, giving
$$\mathbb{E}_{\rvx_{0} \sim p_{\rvx_{0}|\rvx_{t}}}[\rvx_{0}] \approx \rvm_{0|t} := \rvx_{t} + v_{t} \mathsf{s}_{\theta}(\rvx_{t}, t)$$ and $$
\mathbb{V}_{\rvx_{0} \sim p_{\rvx_{0}|\rvx_{t}}}[\rvx_{0}] \approx v_{t}(\mI_{d_{\rx}} +  v_{t} \nabla_{\rvx_{t}} \mathsf{s}_{\theta}(\rvx_{t}, t)),
$$
which yields a Gaussian approximation that matches the first and second moments,

\begin{align}
    p_{\rvx_{0}|\rvx_{t}}(\rvx_{0}| \rvx_{t}) &\approx \mathcal{N}\left(\rvm_{0|t}, v_{t}(\mI_{d_{\rx}} +  v_{t} \nabla_{\rvx_{t}} \mathsf{s}_{\theta}(\rvx_{t}, t))\right)\\
    &= \mathcal{N}\left(\rvm_{0|t}, v_{t}\nabla_{\rvx_{t}} \rvm_{0|t}\right).
\end{align}

\section{Algorithmic details and numerics}\label{appendix:numerics}
The code for all of the experiments and instructions to run them are available at \href{https://github.com/bb515/tmpdjax}{github.com/bb515/tmpdjax} and \href{https://github.com/bb515/tmpdtorch}{github.com/bb515/tmpdtorch}.

\begin{algorithm}[h]
    \caption{TMPD-D (Ancestral sampling, VP)}
    \label{algorithm:1}
    \begin{algorithmic}
\INPUT $\rvy$, $\sigma_{\ry}$
\STATE $\rvx_{N} \sim \mathcal{N}(0, \mI_{d_{\rvx}})$
\FOR{$n = {N-1, \ldots, 0}$}
    \STATE $\rvm_{0|t} \leftarrow \frac{1}{\sqrt{\alpha_{n}}} (\rvx_{n} + v_{n}\mathsf{s}_{\theta}(\rvx_{n}, t_{n}))$
    \STATE $\rvm_{0|t}^{\rvy} \leftarrow \rvm_{0|t}  + \frac{{v_{n}}}{\sqrt{\alpha_{n}}}\nabla_{\rvx_{n}}\rvm_{0|t} \mH^{\top} (\mH\frac{{v_{n}}}{\sqrt{\alpha_{n}}}\nabla_{\rvx_{n}}\rvm_{0|t}\mH^{\top} + \sigma_{\ry}^{2} \mI_{d_{\rvy}} )^{-1}(\rvy - \mH \rvm_{0|t})$
    \STATE $\rvz_{n} \sim \mathcal{N}(0, \mI_{d_{\rvx}})$
    \STATE $\sigma_{n} \leftarrow \sqrt{(1 - \alpha_{n-1})\beta_{n} / (1 - \alpha_{n})}$
    \STATE $\rvx_{n-1} \leftarrow \frac{\sqrt{1 - \beta_{n}} (1 - \alpha_{n-1})}{1 - \alpha_{n}} \rvx_{n} + \frac{\sqrt{\alpha_{n-1}}\beta_{n}}{1 - \alpha_{n}}\rvm_{0|t}^{\rvy} + \sigma_{n} \rvz_{n}$
\ENDFOR \\
\OUTPUT $\rvx_{0}$
\end{algorithmic}
\end{algorithm}

\subsection{Computational Complexity}\label{app:sec:comp_complex}
Let $N$ is the number of noise scales, and let $d_{\ry}$ is the dimensions of the observation, $d_{\rx}$ are the dimension of the image and $T_{s}$ is the time complexity of evaluating the score network. Computing a vector-Jacobian product has time complexity $T_{s}$. Then, the time complexity of TMPD (not including fast matrix-vector products) is $\mathcal{O}(N (d_{\ry}^{3} + T_{s} d_{\ry} + T_{s} + T_{s}))$. $\Pi$GDM comes at a smaller computational cost due to needing only $\mathcal{O}(1)$ vector-jacobian-products, instead of $\mathcal{O}(d_{\rvy})$, resulting in a time complexity of $\mathcal{O}(N (T_{s} + T_{s}))$. DPS comes at the same complexity as $\Pi$GDM.

Whereas TMPD requires calculating the Jacobian which has memory complexity of atleast $\mathcal{O}(d_{\rx} d_{\ry})$. This is too large for high resolution image problems where the dimension of the observation is large. In comparison, the memory complexity of $\Pi$GDM depends on the observation operator, but for the class of problems that are explored in \citet{Song2023} is $\mathcal{O}(d_{\rx})$. DPS comes at the same memory complexity as $\Pi$GDM.

As mentioned in the text, we make the following approximation
$$\text{DTMPD} \quad \nabla_{\rvx_{t}} \log p_{\rvy|t}(\rvy|\rvx_{t}) \approx
\nabla_{\rvx_{t}}\rvm_{0|t}\mH^{\top}(\mH \frac{v_{t}}{\sqrt{\alpha_{t}}}\text{diag}(\nabla_{\rvx_{t}}\rvm_{0|t}) \mH^{\top}  + \sigma^{2}_{\ry} \mI_{d_{\ry}})^{-1}(\rvy - \mH\rvm_{0|t}).
$$
Whilst this approximation does not require a linear solve, taking out the $\mathcal{O}(N d_{\rvy}^{3})$ time complexity term, we would still like to take out the $\mathcal{O}(N T_{s} d_{\ry})$ time complexity and $\mathcal{O}(d_{\rx} d_{\ry})$ memory complexity term from calculating and storing the Jacobian, respectively, since this is too large for solving high resolution image applications. A further approximation approximates the diagonal of the Jacobian by the row sum of the Jacobian which only requires one vector-jacobian product and brings the memory and time complexity of DTMPD down to that of $\Pi$GDM and DPS. We use this approximation in the image experiments and find that in practice it is only (1.5 $\pm$ 0.1) $\times$ slower than $\Pi$GDM and DPS across all of our experiments. The row sum will be a good approximation of the diagonal when the Jacobian is approximately diagonal, which happens when there is small linear correlation between observation pixels, which we found to work well for super-resolution and inpainting. In inpainting, we have further improved the accuracy of the rowsum by instead of calculating the vector-Jacobian product evaluated at $\mH\vone$, masking out values of the vector in the vector-Jacobian product using the inpainting observation operator since they won't contribute to the diagonal values of the of the variance.

However, heuristics can be circumvented by noting in the definition of \Eqref{equ:TMPD} does not require the inverse but rather solving a system of linear equations with right hand side $\rvy - \mH \rvm_{0|t}(\rvx_{t})$. A very recent work \citep{rozet2024learning}, uses the natural choice of the conjugate gradient (CG) method to solve this linear system noting that $\mH\mC_{0|t}\mH^{\top}$ + $\sigma_{y}^{2}\mI$ is symmetric positive definite (SPD) and is therefore compatible with the conjugate gradient (CG) method. The CG method is an iterative algorithm to solve linear systems of the form $\mM \vv = \vb$ where SPD matrix $\mM$ and vector $\vb$ are known. Importantly, the CG method only requires implicit access to $\mM$ through an operator that performs the matrix-vector product $\mM\vv$ given a vector $\vv$. In our case, the linear system to solve is $\rvy - \mH\rvm_{0|t} = (\mH\mC_{0|t}\mH^{\top} + \sigma_{y}^{2}\mI)\vv = \mH\frac{\sqrt{\alpha}}{v_{t}}\nabla_{\rvx_{t}} \rvm_{0|t} \mH^{\top}\vv + \sigma_{y}^{2}\mI\vv$. The vector-jacobian product $\vv^{\top}\mH \nabla_{\rvx_{t}}^{\top} \rvm_{0|t}$ can be cheaply evaluated. In practice, there is no restriction on the score network to be the gradient of a potential, and the gradient of the score need not be SPD. Due to this and numerical errors, \citet{rozet2024learning} observed that CG method becomes unstable after a large number of iterations and fails to reach an exact solution. Fortunately, the authors find that using very few iterations (1 to 3) of the CG method as part of the computation of the posterior score approximation leads to significant improvements over using heuristics for the covariance. \citet{rozet2024learning} have successfully applied the CG method to non-sparse $\mH$, such as accelerated MRI (where $\mH$ is the composition of the Fourier transform and frequency subsampling).

\subsection{Gaussian}\label{appendix:gaussian}
When the data distribution $p_{0}(\rvx_{0})$ is a (multivariate) Gaussian, then the reverse SDE is a linear SDE and we can calculate all of the terms needed to sample from the target posterior using diffusion explicitly. Moreover, we can sample from the target posterior using a direct or implicit method such as Cholesky decomposition. In this simple example, we compare samples from the direct method to various conditional diffusion methods (TMPD, $\Pi$GDM, DPM), by plotting a sample estimate of the $L^{2} $ Wasserstein distance between the sample and the target Gaussian measures \citep{Clark1984} by using the analytical mean and covariance of the target distribution and empirical estimate of the mean and covariance of the sample distribution. To generate $p_{0}(\rvx_{0})$, we use an equally spaced grid of vectors $\rvu_{i} \in [-5.0, 5.0]^{2}$ for $i \in {1, 2, ..., 32}^{2}$, pick a Matern 5/2 kernel for the covariance function $k(\rvu_{i}, \rvu_{j}) = \left( 1 + \sqrt{5}|\rvu_{i} - \rvu_{j}| + \frac{5}{3} |\rvu_{i} - \rvu_{j}|^2 \right)\exp\left(-\sqrt{3}|\rvu_{i} - \rvu_{j}|\right)$ which defines the prior $p_{0}(\rvx_{0}) = \mathcal{N}(\rvm_{0}, \mC_{0})$ covariance $\mC_{0} \in \mathbb{R}^{1024 \times 1024}$ where ${\mC_{0}}_{ij} = k(\rvu_{i}, \rvu_{j})$ and we define the mean as a zero vector $\rvm_{0} = \vzero \in \mathbb{R}^{1024}$. To compute analytically the distribution of $p_{0|\rvy}(\rvx_0|\rvy)$, we sample $\rvy = \mH\rvx_{0} + \rvz, \rvz \sim \mathcal{N}(\vzero, \sigma_{\ry}^{2} \mI_{d_{\ry}})$ and use the standard Gaussian formula to calculate the mean and covariance of $\rvx_0|\rvy$ which in this case are a complete description of $p_{0|\rvy}(\rvx_0|\rvy)$. The $L^{2} $ Wasserstein estimate is plotted over an increasing sample size $N \in [9, 1500]$ and for $\sigma_{\ry}=0.1$ in Figure~\ref{fig:1}.

We provide an illustration of the mean and uncertainty captured by the diffusion samples using $1500$ samples of each diffusion model to produce a Monte-Carlo estimate of the mean and diagonal variance vector, and compare these to the exact mean and diagonal variance.

Below, we provide a calculation comparing the exact diffusion posterior for the linear diffusion posterior sde to the approximations used in $\Pi$GDM \citep{Song2023} and TMPD. Let us assume that the target distribution be known $p_{0}(\rvx_{0}) = \mathcal{N}(\rvm_{0}, \mC_{0})$. Then, via Bayes' rule,
\begin{align*}
    \log p_{0|t}(\rvx_{0}| \rvx_{t}) =& \log p_{t|0}(\rvx_{t}| \rvx_{0}) +  \log p_{0}(\rvx_{0}) + {\text{constant}}\\
    =& -\frac{1}{2}(\rvx_{t} - \sqrt{\alpha_{t}} \rvx_{0})^{\top} (v_{t} \mI_{d_{\rx}})^{-1}(\rvx_{t} - \sqrt{\alpha_{t}} \rvx_{0})\\
    &- \frac{1}{2}(\rvx_{0} - \rvm_{0})^{\top} \mC_{0}^{-1} (\rvx_{0} - \rvm_{0}) + \text{constant}\\
    =& -\frac{1}{2}(\sqrt{\alpha_{t}} \rvx_{0})^{\top}(v_{t} \mI_{d_{\rx}})^{-1}(\sqrt{\alpha_{t}} \rvx_{0}) + \rvx_{t}^{\top}(v_{t} \mI_{d_{\rx}})^{-1}(\sqrt{\alpha_{t}} \rvx_{0})\\ & - \frac{1}{2}\rvx_{0}^{\top} \mC_{0}^{-1} \rvx_{0} +  \rvm_{0}^{\top} \mC_{0}^{-1} \rvx_{0} + \text{constant}\\
    =& - \frac{1}{2}(\rvx_{0} - \rvm_{t})^{\top}\mSigma_{t}^{-1}(\rvx_{0} - \rvm_{t}),
\end{align*}
so
\begin{equation}
    p_{0|t}(\rvx_{0}|\rvx_{t})= \mathcal{N}(\rvm_{t}, \mSigma_{t})
    \label{eq:p0t_explicit_formula}
\end{equation}
where 
\begin{equation}
\mSigma_{t} = \left((\frac{v_{t}}{\alpha_{t}} \mI_{d_{\rx}})^{-1} + \mC_{0}^{-1}\right)^{-1}
\label{eq:covariance_conditional_gaussian}
\end{equation}
and $$\rvm_{t} = \mSigma_{t} \left(\frac{\sqrt{\alpha_{t}}}{v_{t}}\rvx_{t} + \mC_{0}^{-1}\rvm_{0}\right).$$ We also have that, $$p_{t}(\rvx_{t}) = \mathcal{N}(\sqrt{\alpha_{t}}\rvm_{0}, \mC_{t})$$
where $\mC_{t} = \alpha_{t} \mC_{0} + v_{t}\mI_{d_{\rx}},$ and thus $$\log p_{t}(\rvx_{t}) = -\frac{1}{2}(\rvx_{t} - \rvm_{0}\sqrt{\alpha_{t}})^{\top}\mC_{t}^{-1}(\rvx_{t} - \rvm_{0}\sqrt{\alpha_{t}}).$$ From $$p_{\rvy|t}(\rvy|\rvx_{t}) = \int p_{\rvy|0}(\rvy|\rvx_{0})p_{0|t}(\rvx_{0}|\rvx_{t}) d \rvx_{0} = \mathcal{N}(\mH \rvm_{t}, \mH \mSigma_{t} \mH^{\top} + \sigma_{\ry}^{2}\mI_{d_{\ry}}),$$ we have that 
$$\log p_{t}(\rvy | \rvx_{t}) = -\frac{1}{2}(\rvy - \mH \rvm_{t})^{\top}(\mH\mSigma_{t} \mH^{\top} + \sigma_{\ry}^{2}\mI_{d_{\ry}})^{-1}(\rvy - \mH \rvm_{t}).$$
The posterior score can be calculated (see \citet{cardoso2023monte}) from applying Bayes' theorem, $$\log p_{t|\rvy}(\rvx_{t}|\rvy) = \log p_{t}(\rvx_{t}) + \log p_{\rvy|t}(\rvy|\rvx_{t}) + \text{Constant},$$ and taking gradients with respect to the state $\rvx_{t}$ gives
\begin{align}
\nabla_{\rvx_{t}} \log p_{t|y}(\rvx_{t}|\rvy) =& \nabla_{\rvx_{t}} \log p_{t}(\rvx_{t}) + \nabla_{\rvx_{t}} \log p_{\rvy|t}(\rvy|\rvx_{t})\\
=&  -(\alpha_{t}\mC_{0} + v_{t}\mI_{d_{\rx}})^{-1}(\rvx_{t} - \rvm_{0}\sqrt{\alpha_{t}})\\
&+ \frac{\sqrt{\alpha_{t}}}{v_{t}}\mSigma_{t}^{\top}\mH^{\top}(\mH \mSigma_{t} \mH^{\top} + \sigma^{2}_{\ry}\mI_{d_{\ry}})^{-1}(\rvy - \mH\mSigma_{t} (\frac{\sqrt{\alpha_{t}}}{v_{t}}\rvx_{t} + \mC_{0}^{-1}\rvm_{0})). \label{eq:linear_posterior_score}
\end{align}
Setting $\rvm_0 = \vzero$ in (\ref{eq:linear_posterior_score}) gives,
\begin{align}
\nabla_{\rvx_{t}} \log p_{t|\rvy}(\rvx_{t}|\rvy) =&  -(\alpha_{t}\mC_{0} + v_{t}\mI_{d_{\rx}})^{-1}\rvx_{t}\\
&+ \frac{\sqrt{\alpha_{t}}}{v_{t}}\mSigma_{t}^{\top}\mH^{\top}(\mH \mSigma_{t} \mH^{\top} + \sigma_{\ry}^{2}\mI_{d_{\ry}})^{-1}(\rvy - \frac{\sqrt{\alpha_{t}}}{v_{t}}\mH\mSigma_{t}\rvx_{t})
\label{eq:linear_posterior_score_mean_zero}
\end{align}

\subsubsection{Comparison of the exact posterior to the approximation made in Pseudo-Inverse-Guidance}
We aim to compare (\ref{eq:linear_posterior_score_mean_zero}), to the approximation used in $\Pi$GDM \citep{Song2023};
$$
p_{0|t}(\rvx_{0}|\rvx_{t}) \approx \mathcal{N}(\rvm_{0|t}, r_{t}^{2}\mI_{d_{\rx}}),
$$
where $\rvm_{0|t} = \rvx_{t} + v_{t} \nabla_{\rvx_{t}} \log p_{t}(\rvx_{t})$, which is the minimum mean squared error (MMSE) estimate of $\rvx_{0}|\rvx_{t}$ and $r_{t}$ is chosen empirically. This gives $p_{\rvy|t}(\rvy|\rvx_{t}) \approx \mathcal{N}(\mH\rvm_{0|t}, r_{t}^2\mH\mH^{\top} + \sigma_{\ry}^{2}\mI_{d_{\ry}})$, which in turn gives
\begin{align}
\nabla_{\rvx_{t}} \log p_{t|y}(\rvx_{t}|\rvy) \approx& \nabla_{\rvx_{t}} \log p_{t}(\rvx_{t})\\
&+ \nabla_{\rvx_{t}}\rvm_{0|t} \mH^{\top}(r_{t}^{2} \mH \mH^{\top} + \sigma_{\ry}^{2}\mI_{d_{\ry}})^{-1}(\rvy - \mH\rvm_{0|t})
\end{align}
which is computationally tractable in the standard diffusion model setting where the score is nonlinear and approximated as $\nabla_{\rvx_{t}} \log p_{t}(\rvx_{t}) \approx \mathsf{s}_{\theta}(\rvx_{t}, t)$.
Substituting in the known score and again setting $\rvm_{0} = \vzero$ score to compare to the linear case (\ref{eq:linear_posterior_score_mean_zero}) gives,
\begin{align}
\nabla_{\rvx_{t}} \log p_{t|\rvy}(\rvx_{t}|\rvy) \approx& \nabla_{\rvx_{t}} \log p_{t}(\rvx_{t})\\
&+ \nabla_{\rvx_{t}}\rvm_{0|t} \mH^{\top}(r_{t}^{2} \mH \mH^{\top} + \sigma_{\ry}^{2}I_{d_{\ry}})^{-1}(y - \mH\rvm_{0|t})\\
=&  -(\alpha_{t}\mC_{0} + v_{t}\mI_{d_{\rx}})^{-1}\rvx_{t}\\
& +\frac{1}{\sqrt{\alpha_{t}}}(\mI_{d_{\rx}} - v_{t} (\alpha_{t}\mC_{0} + v_{t}\mI_{d_{\rx}})^{-1}) \mH^{\top}(r_{t}^{2}\mH\mH^{\top} + \sigma_{\ry}^{2}\mI_{d_{\ry}})^{-1}\\
& \quad \quad (\rvy - \frac{1}{\sqrt{\alpha_{t}}}\mH(\mI_{d_{\rx}} - v_{t}(\alpha_{t}\mC_{0} + v_{t}\mI_{d_{\rx}})^{-1})\rvx_{t})
\end{align}
comparing terms, $\Pi$GDM is making the approximation $$\nabla_{\rvx_{t}}\rvm_{0|t} = \frac{1}{\sqrt{\alpha_{t}}}\mI_{d_{\rx}} - \frac{v_{t}}{\sqrt{\alpha_{t}}}(\alpha_{t}\mC_{0} + v_{t}\mI_{d_{\rx}})^{-1} \approx \frac{\sqrt{\alpha_{t}}}{v_{t}}\mSigma_{t}.$$
Note that since,
\begin{align}
\frac{\sqrt{\alpha_{t}}}{v_{t}}\mSigma_{t} &= \frac{\sqrt{\alpha_{t}}}{v_{t}}(\mC_{0}^{-1} + (\frac{v_{t}}{\alpha_{t}} I_{d_{\rx}})^{-1})^{-1}\\
&= \frac{1}{v_{t}\sqrt{\alpha_{t}}}((\alpha_{t} \mC_{0})^{-1} + (v_{t} \mI_{d_{\rx}})^{-1})^{-1}\\
&= \frac{1}{v_{t}\sqrt{\alpha_{t}}}(v_{t}\mI - v_{t}^{2}(\alpha_{t} \mC_{0} + v_{t} \mI_{d_{\rx}})^{-1}) \quad \text{Woodbury identity}\\
&= \frac{1}{\sqrt{\alpha_{t}}}(\mI - v_{t}(\alpha_{t} \mC_{0} + v_{t} \mI_{d_{\rx}})^{-1})\\
&= \nabla_{\rvx_{t}}\rvm_{0|t},
\end{align}
this is exact. The other approximation $\Pi$GDM is making is $r_{t}^{2}\mI \approx \mSigma_{t}$, and note that this approximation is accurate with $r^{2}_{t} = v_{t}$ as $t \rightarrow 0$ but inaccurate as $t \rightarrow 1$. But the exact linear SDE can be recovered by instead using TMPD,
\begin{align}\label{eq:exact_linear_score}
\nabla_{\rvx_{t}} \log p_{t|y}(\rvx_{t}|\rvy) =& \nabla_{\rvx_{t}} \log p_{t}(\rvx_{t})\\
&+ \nabla_{\rvx_{t}}\rvm_{0|t} \mH^{\top}(\mH \frac{v_{t}}{\sqrt{\alpha_{t}}}\nabla_{\rvx_{t}}\rvm_{0|t} \mH^{\top} + \sigma_{\ry}^{2}I_{d_{\ry}})^{-1}(\rvy - \mH\rvm_{0|t})\\
=& \nabla_{\rvx_{t}} \log p_{t}(\rvx_{t})\\
&+ \frac{\sqrt{\alpha_{t}}}{v_{t}}\mSigma_{t} \mH^{\top}(\mH \mSigma_{t} \mH^{\top} + \sigma_{\ry}^{2}I_{d_{\ry}})^{-1}(\rvy - \frac{\sqrt{\alpha_{t}}}{v_{t}}\mH\mSigma_{t} \rvx_{t})
\end{align}

\subsection{GMM}\label{appendix:gmm}
For a given dimension $d_{\rx}$, we consider $p_{0}$ a mixture of 25 Gaussian random variables. The components have mean $\mu_{i, j} := (8i, 8j, ..., 8i, 8j) \in \mathbb{R}^{d_{\rx}} \text{for} (i, j) \in {-2, -1, 0, 1, 2}^{2}$ and unit variance. We have set the associated unnormalized weights $\omega_{i, j}=1.0$. We have set $\sigma_{\delta}^{2}=10^{-4}$.

Note that $p_{t}(\rvx_{t}) = \int p_{t|0}(\rvx_{t}|\rvx_{0}) p_{0}(\rvx_{0}) d\rvx_{0}$. As $p_{0}(\rvx_{0})$ is a mixture of Gaussians, $p_{t}(\rvx_{t})$ is also a mixture of Gaussians with means $\sqrt{\alpha_{t}} \mu_{i, j}$ and unitary variances. Therefore, using automatic differentiation libraries, we can calculate $\nabla_{\rvx_{t}} \log p_{t}(\rvx_{t})$. We chose $\beta_{\text{max}} = 500.0$ and $\beta_{\text{min}} = 0.1$. We use 1000 timesteps for the time-discretization. For the pair of dimensions and chosen observation noise standard deviation $(d_{\rx}, d_{\ry}, \sigma_{\ry})$ the measurement model $(y, \mH)$ is drawn as follows:

\begin{itemize}
    \item $\mH$: We first draw $\tilde{\mH} \sim \mathcal{N}(\vzero_{d_{\ry} \times d_{\rx}}, \mI_{d_{\ry} \times d_{\rx}})$ and compute the SVD decomposition of $\tilde{\mH} = \mU \mS \mV^{\top}$. Then, we sample for $(i, j) \in {-2, -1, 0, 1, 2}^{2}$, $s_{i, j}$ according to a uniform in $[0, 1]$. Finally, we set $\mH = \mU \text{diag}({s_{i, j}}_{(i, j) \in {-2, -1, 0, 1, 2}^{2} })\mV^{\top}$.
    \item $\rvy$: We then draw $\rvx_{*} \sim p_{0}$ and set $\rvy := \mH x_{*} + \rvz$ where $\rvz \sim \mathcal{N}(\vzero, \sigma_{\ry}^{2}\mI_{d_{\ry}})$.
\end{itemize}

Once we have drawn both $\rvx_{*} \sim p_{0}$ and $(\rvy, \mH, \sigma_{\ry})$, the posterior can be exactly calculated using Bayes formula and gives a mixture of Gaussians with mixture components $c_{i, j}$ and associated weights $\tilde{\omega}_{i, j}$,
\begin{align}
c_{i, j} &:= \mathcal{N}(\mSigma(\mH^{\top} \rvy / \sigma_{\ry}^{2} + \mu_{i, j}), \mSigma),\\
\tilde{\omega}_{i} &:= \omega_{i} \mathcal{N}(\rvy; \mH \mu_{i,j}, \sigma_{\delta}^{2}\mI_{d_{\rx}} + \mH \mH^{\top}),
\end{align}
where $\mSigma = (\mI_{d_{\rx}} + \frac{1}{\sigma_{\delta}^{2}} \mH^{\top} \mH)^{-1}$.

\textbf{Euler-Maruyama solver} To compare the posterior distribution estimated by each algorithm with the target posterior distribution, we use $10^{4}$ slices for the SW distance and compare 1000 samples of the continuous SDEs defined by the TMPD, DTMPD, \citet{Song2023} and \citet{Chung2022} approximations obtained using 1000 Euler-Maruyama time-steps with 1000 samples of the true posterior distribution. Table~\ref{table:a1} indicates the Central Limit Theorem (CLT) 95\% confidence intervals obtained by considering 20 randomly selected measurement models ($\mH$) for each setting ($d_{\rx}, d_{\rx}, \sigma_{\ry}$).

\textbf{DDPM} Table~\ref{table:a0} compares 1000 samples of TMPD-D, $\Pi$GDM-D and DPS-D which are obtained using 1000 denoising steps and is the extended version of Table~\ref{table:0}. We follow \citet{cardoso2023monte} and compute the sliced Wasserstein distance using Wasserstein-1 distance. Figure~\ref{fig:2} shows the first two dimensions of the estimated posterior distributions corresponding to the configurations ($80, 1$) and ($800, 1$) from Table~\ref{table:a0} for one of the randomly generated measurement model ($\mH$). These illustrations give us insight into the behaviour of the algorithms and their accuracy in estimating the posterior distribution. We observe that TMPD-D (and the Euler-Maruyama method TMPD) is the only method that covers the modes of the posterior distribution. Finally, a direct comparison to \citet{cardoso2023monte} using their original experimental setup is shown in Table~\ref{table:x}, which shows competitive performance for posterior sampling compared to Sequential Monte-Carlo, an exact sampling method.

\begin{table}
\caption{Sliced Wasserstein for the GMM example using the reverse VP-SDEs discretized with Euler-Maruyama.}
\begin{center}
\begin{tiny}
\begin{sc}
\label{table:a1}
\begin{tabular}{llllllllllllll}
\cline{3-14}
        &         & \multicolumn{4}{l}{$\sigma_{\rvy} = 0.01$}                                       & \multicolumn{4}{l}{$\sigma_{\rvy} = 0.1$}                                        & \multicolumn{4}{l}{$\sigma_{\rvy} = 1.0$}                                        \\ \hline
$d_{x}$ & $d_{y}$ & TMPD       & DTMPD      & $\Pi$GDM & DPS & TMPD       & DTMPD      & $\Pi$GDM & DPS & TMPD       & DTMPD      & $\Pi$GDM & DPS \\ \hline
$8$     & $1$     & 1.5 \tiny $\pm$ 0.5 & 1.5 \tiny $\pm$ 0.5 & 1.5 \tiny $\pm$ 0.4                & 5.7 \tiny $\pm$ 2.2                 & 1.4 \tiny $\pm$ 0.5 & 1.4 \tiny $\pm$ 0.5 & 1.2 \tiny $\pm$ 0.4                & 5.6 \tiny $\pm$ 2.1                 & 0.9 \tiny $\pm$ 0.3 & 0.9 \tiny $\pm$ 0.3 & 0.9 \tiny $\pm$ 0.3                & 0.9 \tiny $\pm$ 0.3                 \\
$8$     & $2$     & 0.7 \tiny $\pm$ 0.3 & 3.2 \tiny $\pm$ 1.4 & 0.4 \tiny $\pm$ 0.3                & 6.2 \tiny $\pm$ 0.8                 & 0.9 \tiny $\pm$ 0.3 & 2.7 \tiny $\pm$ 1.1 & 0.5 \tiny $\pm$ 0.3                & 6.2 \tiny $\pm$ 2.4                 & 0.9 \tiny $\pm$ 0.2 & 1.8 \tiny $\pm$ 0.8 & 1.0 \tiny $\pm$ 0.3                & 1.2 \tiny $\pm$ 0.4                 \\
$8$     & $4$     & 0.3 \tiny $\pm$ 0.3 & 0.6 \tiny $\pm$ 0.4 & 0.1 \tiny $\pm$ 0.1                & -                          & 0.3 \tiny $\pm$ 0.2 & 0.7 \tiny $\pm$ 0.4 & 0.1 \tiny $\pm$ 0.0                & 8.4 \tiny $\pm$ 3.1                 & 0.6 \tiny $\pm$ 0.2 & 0.9 \tiny $\pm$ 0.5 & 0.2 \tiny $\pm$ 0.1                & 0.3 \tiny $\pm$ 0.2                 \\
$80$    & $1$     & 2.7 \tiny $\pm$ 0.7 & 2.7 \tiny $\pm$ 0.7 & 2.9 \tiny $\pm$ 1.4                & 9.1 \tiny $\pm$ 1.3                 & 2.3 \tiny $\pm$ 0.7 & 2.3 \tiny $\pm$ 0.7 & 2.1 \tiny $\pm$ 1.1                & 4.7 \tiny $\pm$ 1.8                 & 1.5 \tiny $\pm$ 0.7 & 1.5 \tiny $\pm$ 0.7 & 1.8 \tiny $\pm$ 0.8                & 1.9 \tiny $\pm$ 0.9                 \\
$80$    & $2$     & 1.0 \tiny $\pm$ 0.5 & 3.3 \tiny $\pm$ 1.0 & 0.8 \tiny $\pm$ 0.7                & 2.2 \tiny $\pm$ 0.9                 & 1.2 \tiny $\pm$ 0.5 & 3.3 \tiny $\pm$ 1.0 & 0.8 \tiny $\pm$ 0.7                & 6.0 \tiny $\pm$ 2.1                 & 1.1 \tiny $\pm$ 0.2 & 2.2 \tiny $\pm$ 1.0 & 1.3 \tiny $\pm$ 0.5                & 1.5 \tiny $\pm$ 0.5                 \\
$80$    & $4$     & 0.3 \tiny $\pm$ 0.1 & 0.9 \tiny $\pm$ 0.5 & 0.1 \tiny $\pm$ 0.0                & -                          & 0.4 \tiny $\pm$ 0.2 & 1.0 \tiny $\pm$ 0.5 & 0.1 \tiny $\pm$ 0.1                & 4.4 \tiny $\pm$ 1.6                 & 0.9 \tiny $\pm$ 0.2 & 1.0 \tiny $\pm$ 0.4 & 0.4 \tiny $\pm$ 0.2                & 0.5 \tiny $\pm$ 0.3                 \\
$800$   & $1$     & 3.1 \tiny $\pm$ 0.7 & 3.1 \tiny $\pm$ 0.7 & 3.2 \tiny $\pm$ 1.0                & 6.8 \tiny $\pm$ 1.2                 & 2.9 \tiny $\pm$ 0.6 & 2.9 \tiny $\pm$ 0.6 & 2.8 \tiny $\pm$ 0.7                & 6.4 \tiny $\pm$ 1.5                 & 1.5 \tiny $\pm$ 0.4 & 1.5 \tiny $\pm$ 0.4 & 1.3 \tiny $\pm$ 0.3                & 1.3 \tiny $\pm$ 0.3                 \\
$800$   & $2$     & 1.3 \tiny $\pm$ 0.4 & 3.6 \tiny $\pm$ 1.2 & 0.8 \tiny $\pm$ 0.5                & 7.4 \tiny $\pm$ 0.9                 & 1.3 \tiny $\pm$ 0.3 & 3.2 \tiny $\pm$ 1.1 & 0.8 \tiny $\pm$ 0.4                & 6.4 \tiny $\pm$ 1.9                 & 1.2 \tiny $\pm$ 0.3 & 1.9 \tiny $\pm$ 0.5 & 1.1 \tiny $\pm$ 0.3                & 1.1 \tiny $\pm$ 0.3                 \\
$800$   & $4$     & 0.3 \tiny $\pm$ 0.2 & 0.9 \tiny $\pm$ 0.6 & 0.6 \tiny $\pm$ 0.5                & -                          & 0.4 \tiny $\pm$ 0.2 & 0.9 \tiny $\pm$ 0.6 & 0.1 \tiny $\pm$ 0.0                & 5.8 \tiny $\pm$ 1.4                 & 0.9 \tiny $\pm$ 0.2 & 1.1 \tiny $\pm$ 0.5 & 0.4 \tiny $\pm$ 0.2                & 0.4 \tiny $\pm$ 0.2                 \\ \cline{1-14}
\end{tabular}
\end{sc}
\end{tiny}
\end{center}
\vskip -0.1in
\end{table}

\begin{table}[]
\caption{Sliced Wasserstein for the GMM example using VP DDPM.}
\label{table:a0}
\vskip 0.1in
\begin{center}
\begin{tiny}
\begin{sc}
\begin{tabular}{llllllllllllll}
\cline{3-14}
        &         & \multicolumn{4}{l}{$\sigma_{\rvy} = 0.01$}                                                               & \multicolumn{4}{l}{$\sigma_{\rvy} = 0.1$}                                                       & \multicolumn{4}{l}{$\sigma_{\rvy} = 1.0$}                                                                \\ \hline
$d_{x}$ & $d_{y}$ & TMPD-D                     & DTMPD-D                    & $\Pi$GDM-D                 & DPS-D             & TMPD-D                     & DTMPD-D           & $\Pi$GDM-D                 & DPS-D             & TMPD-D                     & DTMPD-D                    & $\Pi$GDM-D                 & DPS-D             \\ \hline
$8$     & $1$     & 1.6 \tiny $\pm$ 0.5          & 1.8 \tiny $\pm$ 0.6          & 2.6 \tiny $\pm$ 0.9          & 4.7 \tiny $\pm$ 1.5 & 1.4 \tiny $\pm$ 0.5 & 1.8 \tiny $\pm$ 0.7 & 2.2 \tiny $\pm$ 0.9          & 4.7 \tiny $\pm$ 1.6 & 0.9 \tiny $\pm$ 0.3 & 0.9 \tiny $\pm$ 0.2 & 1.5 \tiny $\pm$ 0.4          & 5.2 \tiny $\pm$ 1.3 \\
$8$     & $2$     & 0.7 \tiny $\pm$ 0.3 & 3.3 \tiny $\pm$ 1.5          & 2.1 \tiny $\pm$ 1.0          & 1.8 \tiny $\pm$ 1.5 & 0.9 \tiny $\pm$ 0.3 & 2.7 \tiny $\pm$ 1.1 & 1.6 \tiny $\pm$ 0.6          & 1.5 \tiny $\pm$ 0.9 & 0.9 \tiny $\pm$ 0.2 & 1.7 \tiny $\pm$ 0.8          & 2.3 \tiny $\pm$ 0.4          & 3.5 \tiny $\pm$ 1.2 \\
$8$     & $4$     & 0.3 \tiny $\pm$ 0.3 & 0.4 \tiny $\pm$ 0.2 & 3.8 \tiny $\pm$ 2.3          & 0.7 \tiny $\pm$ 0.6 & 0.3 \tiny $\pm$ 0.2 & 0.5 \tiny $\pm$ 0.2 & 3.8 \tiny $\pm$ 2.2          & 0.8 \tiny $\pm$ 0.6 & 0.6 \tiny $\pm$ 0.2 & 0.9 \tiny $\pm$ 0.5          & 1.8 \tiny $\pm$ 0.3          & 2.5 \tiny $\pm$ 0.9 \\
$80$    & $1$     & 2.7 \tiny $\pm$ 0.7          & 2.8 \tiny $\pm$ 0.9          & 3.2 \tiny $\pm$ 1.0          & 5.6 \tiny $\pm$ 1.8 & 2.3 \tiny $\pm$ 0.7 & 2.6 \tiny $\pm$ 0.9 & 2.9 \tiny $\pm$ 0.8          & 5.1 \tiny $\pm$ 1.8 & 1.5 \tiny $\pm$ 0.7          & 1.4 \tiny $\pm$ 0.6          & 1.6 \tiny $\pm$ 0.5          & 6.9 \tiny $\pm$ 1.8 \\
$80$    & $2$     & 1.0 \tiny $\pm$ 0.5 & 3.2 \tiny $\pm$ 1.1          & 2.8 \tiny $\pm$ 1.3          & 3.2 \tiny $\pm$ 1.9 & 1.2 \tiny $\pm$ 0.5 & 3.2 \tiny $\pm$ 1.1 & 2.7 \tiny $\pm$ 1.2          & 3.1 \tiny $\pm$ 1.9 & 1.1 \tiny $\pm$ 0.2 & 2.1 \tiny $\pm$ 1.0          & 1.4 \tiny $\pm$ 0.2          & 3.9 \tiny $\pm$ 1.2 \\
$80$    & $4$     & 0.3 \tiny $\pm$ 0.1 & 0.7 \tiny $\pm$ 0.4          & 0.6 \tiny $\pm$ 0.4          & 1.2 \tiny $\pm$ 1.1 & 0.4 \tiny $\pm$ 0.2 & 0.8 \tiny $\pm$ 0.4 & 0.6 \tiny $\pm$ 0.4          & 1.0 \tiny $\pm$ 1.1 & 0.9 \tiny $\pm$ 0.3          & 0.9 \tiny $\pm$ 0.4          & 0.9 \tiny $\pm$ 0.2          & 1.7 \tiny $\pm$ 0.6 \\
$800$   & $1$     & 3.1 \tiny $\pm$ 0.7          & 3.7 \tiny $\pm$ 0.7          & 3.5 \tiny $\pm$ 1.1          & 5.8 \tiny $\pm$ 1.6 & 2.9 \tiny $\pm$ 0.6 & 3.4 \tiny $\pm$ 0.7 & 3.3 \tiny $\pm$ 0.9          & 5.7 \tiny $\pm$ 1.6 & 1.5 \tiny $\pm$ 0.4 & 1.4 \tiny $\pm$ 0.4          & 2.0 \tiny $\pm$ 0.4          & 6.8 \tiny $\pm$ 1.0 \\
$800$   & $2$     & 1.4 \tiny $\pm$ 0.4 & 3.5 \tiny $\pm$ 0.7          & 3.1 \tiny $\pm$ 1.1          & 3.5 \tiny $\pm$ 1.7 & 1.3 \tiny $\pm$ 0.3 & 3.4 \tiny $\pm$ 0.7 & 2.7 \tiny $\pm$ 0.9          & 3.1 \tiny $\pm$ 1.4 & 1.2 \tiny $\pm$ 0.3 & 2.0 \tiny $\pm$ 0.4          & 2.0 \tiny $\pm$ 0.5          & 4.7 \tiny $\pm$ 1.3 \\
$800$   & $4$     & 0.4 \tiny $\pm$ 0.2 & 0.7 \tiny $\pm$ 0.5          & 0.4 \tiny $\pm$ 0.2 & 1.4 \tiny $\pm$ 1.0 & 0.4 \tiny $\pm$ 0.2 & 0.8 \tiny $\pm$ 0.5 & 0.4 \tiny $\pm$ 0.2 & 1.3 \tiny $\pm$ 0.9 & 0.9 \tiny $\pm$ 0.2          & 1.1 \tiny $\pm$ 0.5          & 0.6 \tiny $\pm$ 0.2 & 0.9 \tiny $\pm$ 0.4 \\ \hline
\end{tabular}
\end{sc}
\end{tiny}
\end{center}
\vskip -0.1in
\end{table}

\begin{table}[]
\caption{Comparison to \citet{cardoso2023monte} using their original experimental setup. TMPD-D and DTMPD-D use 1000 steps of DDPM.}
\label{table:x}
\begin{center}
\begin{sc}
\begin{tabular}{lllll}
\hline
$d_{x}$ & $d_{y}$ & MCGdiff                      & TMPD-D                      & DTMPD-D                     \\ \hline
$8$     & $1$     & 1.43 $\pm$ 0.55 & 1.83 $\pm$ 0.5 & 1.82 $\pm$ 0.5 \\
$8$     & $2$     & 0.49 $\pm$ 0.24 & 0.95 $\pm$ 0.3 & 2.27 $\pm$ 0.9 \\
$8$     & $4$     & 0.38 $\pm$ 0.25 & 0.61 $\pm$ 0.3 & 0.72 $\pm$ 0.4 \\
$80$    & $1$     & 1.39 $\pm$ 0.45 & 2.81 $\pm$ 0.8 & 2.81 $\pm$ 0.8 \\
$80$    & $2$     & 0.67 $\pm$ 0.24 & 1.14 $\pm$ 0.4 & 2.82 $\pm$ 0.9 \\
$80$    & $4$     & 0.28 $\pm$ 0.14 & 0.95 $\pm$ 0.5 & 0.95 $\pm$ 0.5 \\
$800$   & $1$     & 2.40 $\pm$ 1.00 & 2.96 $\pm$ 0.6 & 2.96 $\pm$ 0.6 \\
$800$   & $2$     & 1.31 $\pm$ 0.60 & 1.60 $\pm$ 0.5 & 3.07 $\pm$ 1.1 \\
$800$   & $4$     & 0.47 $\pm$ 0.19 & 0.60 $\pm$ 0.2 & 0.84 $\pm$ 0.5 \\ \hline
\end{tabular}
\end{sc}
\end{center}
\end{table}

\subsection{Inpainting and super-resolution}\label{appendix:inpainting}
Since the DPS-D method was derived and tuned specifically for VP-SDE, we look at the VP-SDE experiments in Section~\ref{appendix:VP_experiments} separately from the VE-SDE experiments in Section~\ref{appendix:VE_experiments}. In each comparison, we use the same score network for each method and the same sampling or discretization numerical method. All methods are discretized using 1000 denoising steps. For the Markov chain methods we use DDPM and for the SDE methods we use an Euler-Maruyama discretization. In contrast to DPS(-D) $\Pi$GDM(-D), we observe the robustness of our method across both SDEs and inpainting and super-resolution observation maps.

\subsubsection{VP-SDE}\label{appendix:VP_experiments}
\subsubsubsection{\textbf{Imagenet 256$\times$256}}
For VP-SDE, our Imagenet 256$\times$256 experiment compares SNIPS \citep{kawar2021snips} to diffusion methods (DDPM) DTMPD-D, DPS-D and $\Pi$GDM-D, and results are shown in Table~\ref{table:aImageNetVP}.

\begin{table}[]
\begin{center}
\begin{small}
\caption{Comparison to SNIPS. 4$\times$ noiseless and noisy (images have an additive noise of $\sigma_{y} = 0.05$) super-resolution results on ImageNet 1K (256 × 256).}
\label{table:aImageNetVP}
\begin{tabular}{lllllll}
\hline
\multirow{2}{*}{Method} & \multicolumn{3}{l}{4 $\times$ noiseless super-resolution}                                        & \multicolumn{3}{l}{4 $\times$ noisy super-resolution}            \\
                        & PSNR$\uparrow$                 & SSIM$\uparrow$                 & KID$\downarrow$                & PSNR$\uparrow$ & SSIM$\uparrow$ & KID$\downarrow$                \\ \hline
SNIPS                   & 17.6                           & 0.22                           & 35.2                           & 16.3           & 0.14           & 67.8                           \\
$\Pi$GDM-D              & 26.0 & 0.75 & 1.30                           & 20.7           & 0.43           & 17.8                           \\
DPS-D                   & 25.1                           & 0.69                           & 4.08                           & 23.4           & 0.63           & 3.09 \\
DTMPD-D                  & 26.0 & 0.75 & 1.10 & 23.1           & 0.62           & 8.85                           \\ \hline
\end{tabular}
\end{small}
\end{center}
\end{table}

\subsubsubsection{\textbf{FFHQ 256$\times$256}}
For VP-SDE, our FFHQ 256$\times$256 experiment compares diffusion methods (DDPM) DTMPD-D to DPS-D and $\Pi$GDM-D, and results are shown in Table~\ref{table:aFFHQVP}. Fig~\ref{fig:10a}, Fig~\ref{fig:10b} and Fig~\ref{fig:10c} are a visual summary of Table~\ref{table:aFFHQVP}, plotting the LPIPS, SSIM and FID metrics against increasing noise for different observation maps. Some uncurated samples that were used to generate Table~\ref{table:aFFHQVP} are shown in Fig~\ref{fig:8a}, \ref{fig:8b} and \ref{fig:8c}. We observe that all methods can successfully produce high quality reconstructions in the low noise regime but, visually, only DPS-D and DTMPD-D successfully produce high quality reconstructions in the high noise regime.

Whereas DPS-D requires a hyperparameter search, there are no hyperparameters for DTMPD-D. For $\Pi$GDM-D we use the algorithm and default hyperparameters as described in \citet{Song2023}. For DPS-D we use the algorithm in the codebase provided by the authors \citet{Chung2022} and we use their suggested hyperparameters for this task, such as step-size and using static-thresholding (clipping the denoised image at each step to a range $[-1, 1]$) whereas DTMPD-D does not require any hyperparameter tuning or static-thresholding.

\begin{table}[t]
\caption{Quantitative evaluation of solving linear inverse problems for VP DDPM with increasing noise on FFHQ 256$\times$256-1k validation dataset.}
\label{table:aFFHQVP}
\vskip 0.1in
\begin{center}
\begin{small}
\begin{sc}
\begin{tabular}{lllllll}
\hline
Problem              & Method     & FID $\downarrow$               & LPIPS $\downarrow$                                                               & MSE $\downarrow$                                                                       & PSNR $\uparrow$                                     & SSIM $\uparrow$                                                                  \\ \hline
$\sigma_{\ry}=0.01$  & DTMPD-D     & \textbf{29.6} & 0.230 \tiny{$\pm$ 0.034}                           & 1.60e-03 \tiny{$\pm$ 7.74e-04}                           & 28.4 \tiny{$\pm$ 1.9} & 0.784 \tiny{$\pm$ 0.046}                           \\
$4 \times$ `bicubic' & DPS-D      & 31.4                           & 0.234 \tiny{$\pm$ 0.048}                           & 1.90e-03 \tiny{$\pm$ 1.07e-03}                           & 27.8 \tiny{$\pm$ 2.2} & 0.776 \tiny{$\pm$ 0.062}                           \\
super-resolution     & $\Pi$GDM-D & 29.7                           & \textbf{0.198} \tiny{$\pm$ 0.037} & \textbf{1.56e-03} \tiny{$\pm$ 8.72e-04} & 28.6 \tiny{$\pm$ 2.1} & 0.809 \tiny{$\pm$ 0.051}                           \\ \hline
$\sigma_{\ry}=0.05$  & DTMPD-D     & 32.7                           & 0.304 \tiny{$\pm$ 0.043}                           & 2.90e-03 \tiny{$\pm$ 5.64e-03}                           & 26.0 \tiny{$\pm$ 1.7} & 0.699 \tiny{$\pm$ 0.060}                           \\
$4 \times$ `bicubic' & DPS-D      & \textbf{29.3} & 0.280 \tiny{$\pm$ 0.051}                           & 2.90e-03 \tiny{$\pm$ 5.73e-03}                           & 26.0 \tiny{$\pm$ 1.8} & 0.719 \tiny{$\pm$ 0.066}                           \\
super-resolution     & $\Pi$GDM-D & 45.1                           & 0.311 \tiny{$\pm$ 0.047}                           & 3.08e-03 \tiny{$\pm$ 5.79e-03}                           & 25.7 \tiny{$\pm$ 1.7} & 0.682 \tiny{$\pm$ 0.062}                           \\ \hline
$\sigma_{\ry}=0.1$   & DTMPD-D     & 38.0                           & 0.348 \tiny{$\pm$ 0.048}                           & 4.33e-03 \tiny{$\pm$ 4.72e-03}                           & 24.0 \tiny{$\pm$ 1.6} & 0.635 \tiny{$\pm$ 0.066}                           \\
$4 \times$ `bicubic' & DPS-D      & \textbf{30.9} & 0.318 \tiny{$\pm$ 0.051}                           & 4.06e-03 \tiny{$\pm$ 5.38e-03}                           & 24.4 \tiny{$\pm$ 1.6} & 0.664 \tiny{$\pm$ 0.069}                           \\
super-resolution     & $\Pi$GDM-D & 119.6                          & 0.589 \tiny{$\pm$ 0.047}                           & 1.10e-02 \tiny{$\pm$ 5.56e-03}                           & 19.7 \tiny{$\pm$ 1.0} & 0.376 \tiny{$\pm$ 0.055}                           \\ \hline
$\sigma_{\ry}=0.2$   & DTMPD-D     & 45.6                           & 0.401 \tiny{$\pm$ 0.049}                           & 7.03e-03 \tiny{$\pm$ 2.56e-03}                           & 21.8 \tiny{$\pm$ 1.5} & 0.559 \tiny{$\pm$ 0.071}                           \\
$4 \times$ `bicubic' & DPS-D      & \textbf{38.1}                  & 0.385 \tiny{$\pm$ 0.061}                           & 7.59e-03 \tiny{$\pm$ 3.50e-03}                           & 21.6 \tiny{$\pm$ 1.8} & 0.570 \tiny{$\pm$ 0.081}                           \\
super-resolution     & $\Pi$GDM-D & 295.7                          & 0.780 \tiny{$\pm$ 0.033}                           & 5.65e-02 \tiny{$\pm$ 5.20e-03}                           & 12.5 \tiny{$\pm$ 0.4} & 0.117 \tiny{$\pm$ 0.035}                           \\ \hline
$\sigma_{\ry}=0.01$  & DTMPD-D     & \textbf{25.7} & 0.153 \tiny{$\pm$ 0.033}                           & 9.04e-03 \tiny{$\pm$ 7.25e-03}                           & 21.4 \tiny{$\pm$ 2.9} & 0.829 \tiny{$\pm$ 0.031}                           \\
`box' mask           & DPS-D      & 31.5                           & 0.175 \tiny{$\pm$ 0.038}                           & 7.79e-03 \tiny{$\pm$ 6.89e-03}                           & 22.4 \tiny{$\pm$ 3.3} & 0.833 \tiny{$\pm$ 0.035}                           \\
inpainting           & $\Pi$GDM-D & 143.8                          & 0.247 \tiny{$\pm$ 0.024}                           & 2.58e-02 \tiny{$\pm$ 7.11e-03}                           & 16.1 \tiny{$\pm$ 1.3} & 0.759 \tiny{$\pm$ 0.017}                           \\ \hline
$\sigma_{\ry}=0.05$  & DTMPD-D     & \textbf{27.0} & 0.240 \tiny{$\pm$ 0.038}                           & 9.68e-03 \tiny{$\pm$ 6.62e-03}                           & 20.9 \tiny{$\pm$ 2.6} & 0.760 \tiny{$\pm$ 0.036}                           \\
`box' mask           & DPS-D      & 30.7                           & 0.228 \tiny{$\pm$ 0.046}                           & 8.28e-03 \tiny{$\pm$ 7.93e-03}                           & 22.0 \tiny{$\pm$ 3.1} & 0.782 \tiny{$\pm$ 0.047}                           \\
inpainting           & $\Pi$GDM-D & 159.3                          & 0.448 \tiny{$\pm$ 0.046}                           & 2.71e-02 \tiny{$\pm$ 7.91e-03}                           & 15.8 \tiny{$\pm$ 1.2} & 0.504 \tiny{$\pm$ 0.080}                           \\ \hline
$\sigma_{\ry}=0.1$   & DTMPD-D     & 29.6                           & 0.292 \tiny{$\pm$ 0.049}                           & 1.06e-02 \tiny{$\pm$ 1.15e-02}                           & 20.6 \tiny{$\pm$ 2.5} & 0.709 \tiny{$\pm$ 0.053}                           \\
`box' mask           & DPS-D      & \textbf{29.3} & 0.259 \tiny{$\pm$ 0.049}                           & 8.03e-03 \tiny{$\pm$ 6.97e-03}                           & 22.0 \tiny{$\pm$ 2.9} & 0.746 \tiny{$\pm$ 0.051}                           \\
inpainting           & $\Pi$GDM-D & 165.7                          & 0.539 \tiny{$\pm$ 0.083}                           & 2.84e-02 \tiny{$\pm$ 6.57e-03}                           & 15.6 \tiny{$\pm$ 1.0} & 0.418 \tiny{$\pm$ 0.185}                           \\ \hline
$\sigma_{\ry}=0.2$   & DTMPD-D     & \textbf{33.8} & 0.346 \tiny{$\pm$ 0.061}                           & 1.13e-02 \tiny{$\pm$ 8.65e-03}                           & 20.2 \tiny{$\pm$ 2.4} & 0.649 \tiny{$\pm$ 0.070}                           \\
`box' mask           & DPS-D      & 34.9                           & 0.337 \tiny{$\pm$ 0.056}                           & 8.01e-03 \tiny{$\pm$ 4.22e-03}                           & 21.5 \tiny{$\pm$ 2.0} & 0.662 \tiny{$\pm$ 0.067}                           \\
inpainting           & $\Pi$GDM-D & 199.7                          & 0.590 \tiny{$\pm$ 0.099}                           & 3.17e-02 \tiny{$\pm$ 1.01e-02}                           & 15.2 \tiny{$\pm$ 1.4} & 0.461 \tiny{$\pm$ 0.202}                           \\ \hline
$\sigma_{\ry}=0.01$  & DTMPD-D     & \textbf{24.6} & 0.090 \tiny{$\pm$ 0.033}                           & 4.87e-04 \tiny{$\pm$ 6.27e-04}                           & 34.3 \tiny{$\pm$ 2.8} & 0.931 \tiny{$\pm$ 0.036}                           \\
`random' mask        & DPS-D      & 32.7                           & 0.137 \tiny{$\pm$ 0.033}                           & 5.57e-04 \tiny{$\pm$ 5.02e-04}                           & 33.4 \tiny{$\pm$ 2.6} & 0.913 \tiny{$\pm$ 0.031}                           \\
inpainting           & $\Pi$GDM-D & 24.7                           & 0.069 \tiny{$\pm$ 0.023}                           & 4.71e-04 \tiny{$\pm$ 5.18e-04}                           & 34.5 \tiny{$\pm$ 3.0} & 0.940 \tiny{$\pm$ 0.026}                           \\ \hline
$\sigma_{\ry}=0.05$  & DTMPD-D     & \textbf{27.1} & \textbf{0.187} \tiny{$\pm$ 0.032} & 8.86e-04 \tiny{$\pm$ 4.84e-04}                           & 30.9 \tiny{$\pm$ 1.8} & \textbf{0.851} \tiny{$\pm$ 0.033} \\
`random' mask        & DPS-D      & 38.0                           & 0.256 \tiny{$\pm$ 0.049}                           & 1.74e-03 \tiny{$\pm$ 1.04e-03}                           & 28.1 \tiny{$\pm$ 2.0} & 0.794 \tiny{$\pm$ 0.056}                           \\
inpainting           & $\Pi$GDM-D & 44.8                           & 0.315 \tiny{$\pm$ 0.043}                           & 2.15e-03 \tiny{$\pm$ 6.65e-04}                           & 26.8 \tiny{$\pm$ 1.2} & 0.647 \tiny{$\pm$ 0.068}                           \\ \hline
$\sigma_{\ry}=0.1$   & DTMPD-D     & \textbf{29.9} & \textbf{0.247} \tiny{$\pm$ 0.038} & 1.74e-03 \tiny{$\pm$ 8.94e-03}                           & 28.7 \tiny{$\pm$ 1.8} & \textbf{0.787} \tiny{$\pm$ 0.049} \\
`random' mask        & DPS-D      & 34.8                           & 0.274 \tiny{$\pm$ 0.054}                           & 2.21e-03 \tiny{$\pm$ 1.17e-03}                           & 27.1 \tiny{$\pm$ 2.1} & 0.763 \tiny{$\pm$ 0.065}                           \\
inpainting           & $\Pi$GDM-D & 62.7                           & 0.490 \tiny{$\pm$ 0.059}                           & 6.71e-03 \tiny{$\pm$ 1.80e-03}                           & 22.0 \tiny{$\pm$ 1.7} & 0.410 \tiny{$\pm$ 0.129}                           \\ \hline
$\sigma_{\ry}=0.2$   & DTMPD-D     & \textbf{34.6} & \textbf{0.306} \tiny{$\pm$ 0.046} & 2.71e-03 \tiny{$\pm$ 2.89e-03}                           & 26.1 \tiny{$\pm$ 1.7} & \textbf{0.714} \tiny{$\pm$ 0.060} \\
`random' mask        & DPS-D      & 35.9                           & 0.317 \tiny{$\pm$ 0.059}                           & 3.59e-03 \tiny{$\pm$ 1.73e-03}                           & 24.9 \tiny{$\pm$ 1.9} & 0.701 \tiny{$\pm$ 0.073}                           \\
inpainting           & $\Pi$GDM-D & 104.1                          & 0.623 \tiny{$\pm$ 0.109}                           & 1.77e-02 \tiny{$\pm$ 8.62e-03}                           & 18.6 \tiny{$\pm$ 3.8} & 0.316 \tiny{$\pm$ 0.218}                           \\ \hline
\end{tabular}
\end{sc}
\end{small}
\end{center}
\vskip -0.1in
\end{table}

\begin{figure}[h]
\begin{center}
\includegraphics[width=0.45\textwidth]{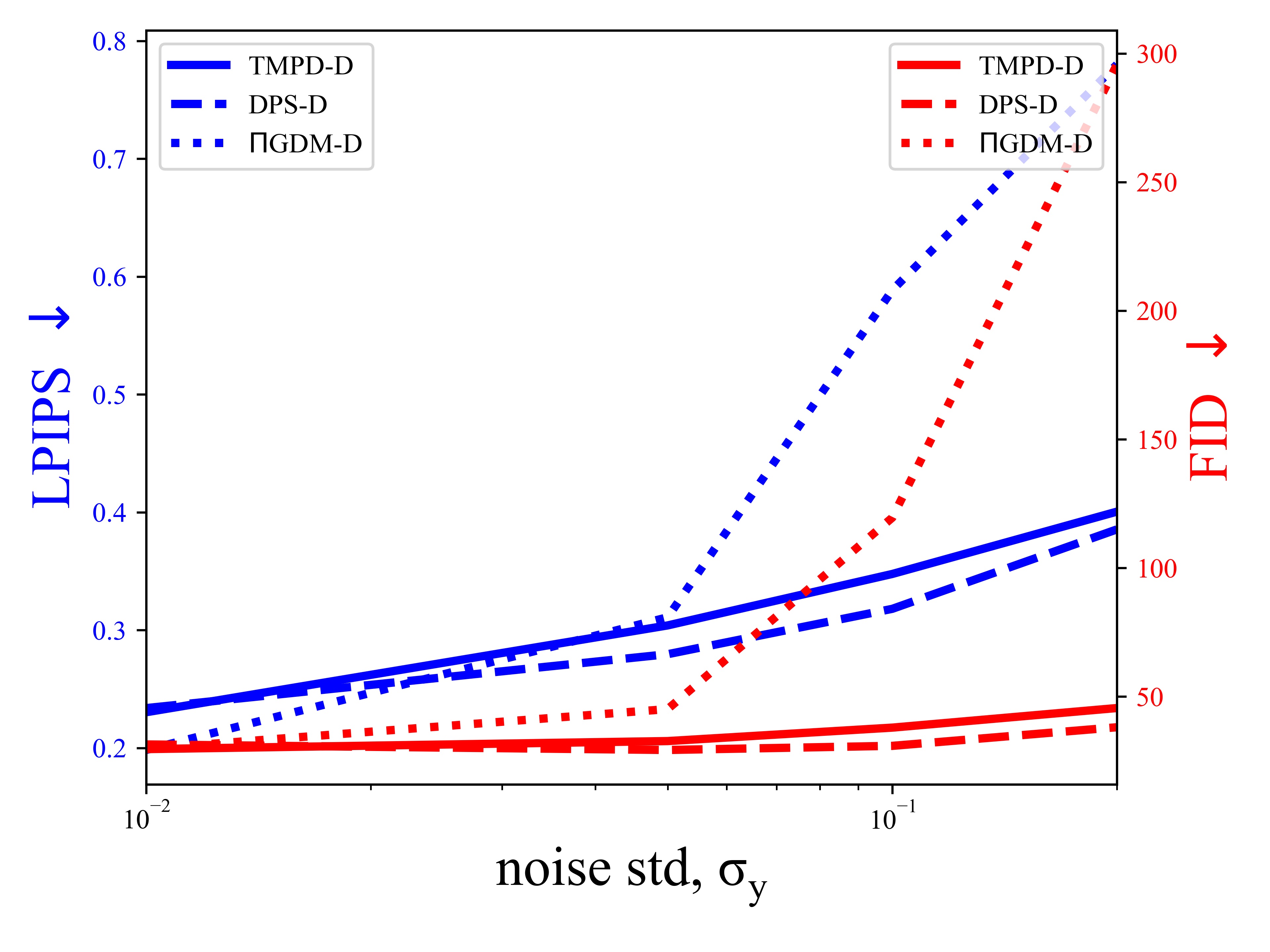}
\includegraphics[width=0.45\textwidth]{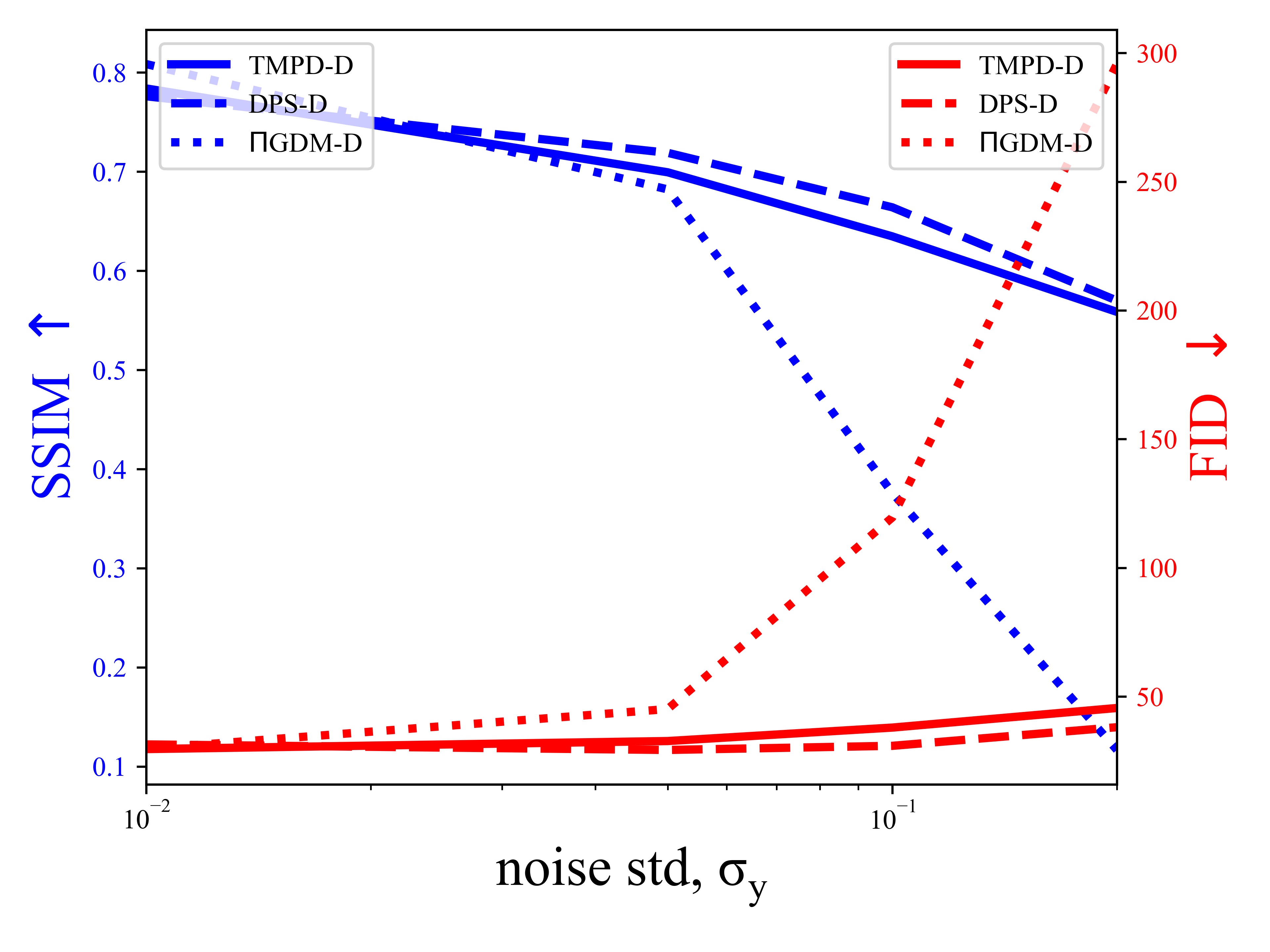}
\end{center}
\caption{4$\times$ bicubic super-resolution FID vs LPIPS (left) and SSIM (right) using the VP-SDE on FFHQ-1k validation dataset for increasing observation noise.}
\hfill
\label{fig:10a}
\end{figure}

\begin{figure}[h]
\begin{center}
\includegraphics[width=0.45\textwidth]{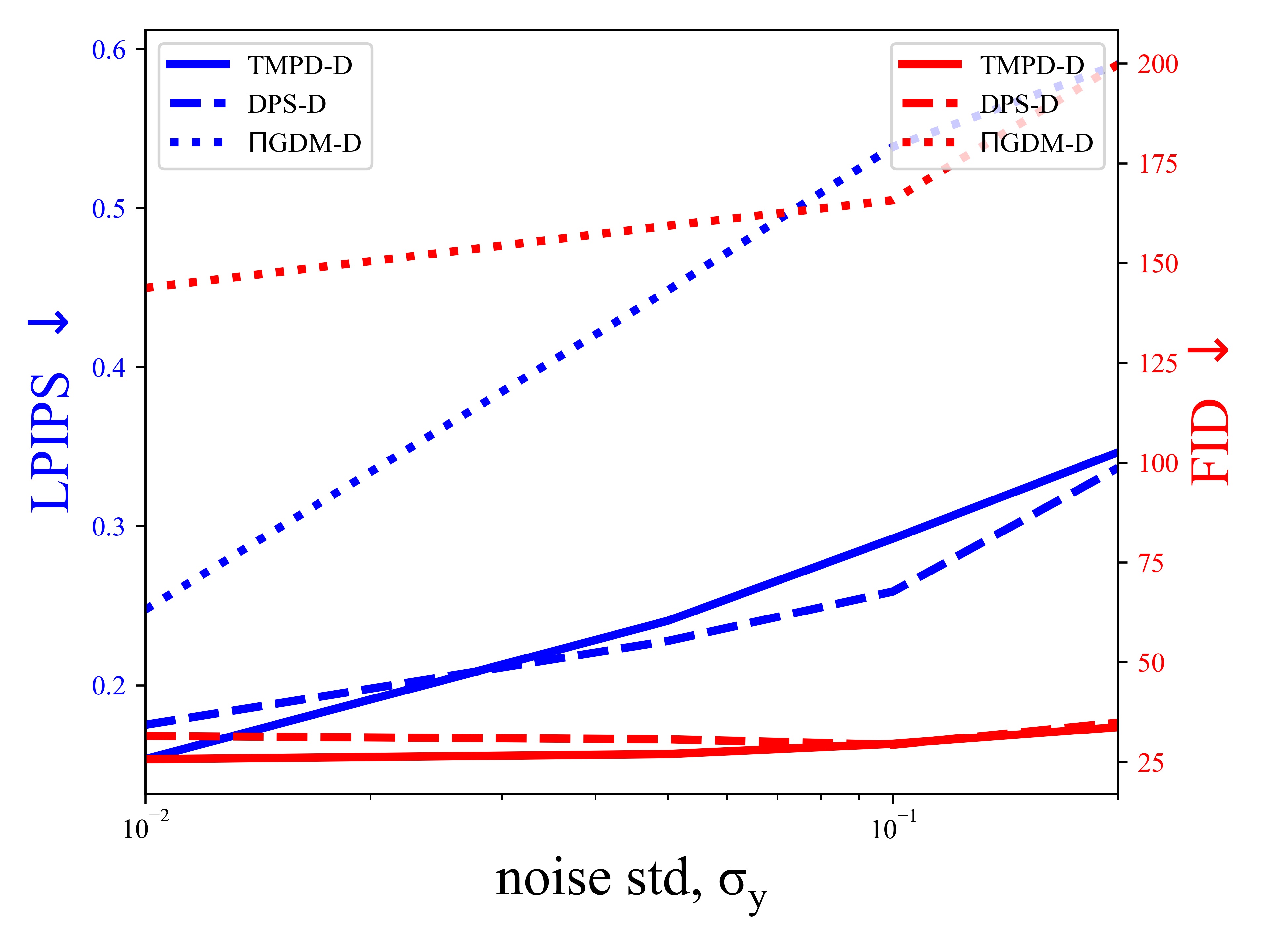}
\includegraphics[width=0.45\textwidth]{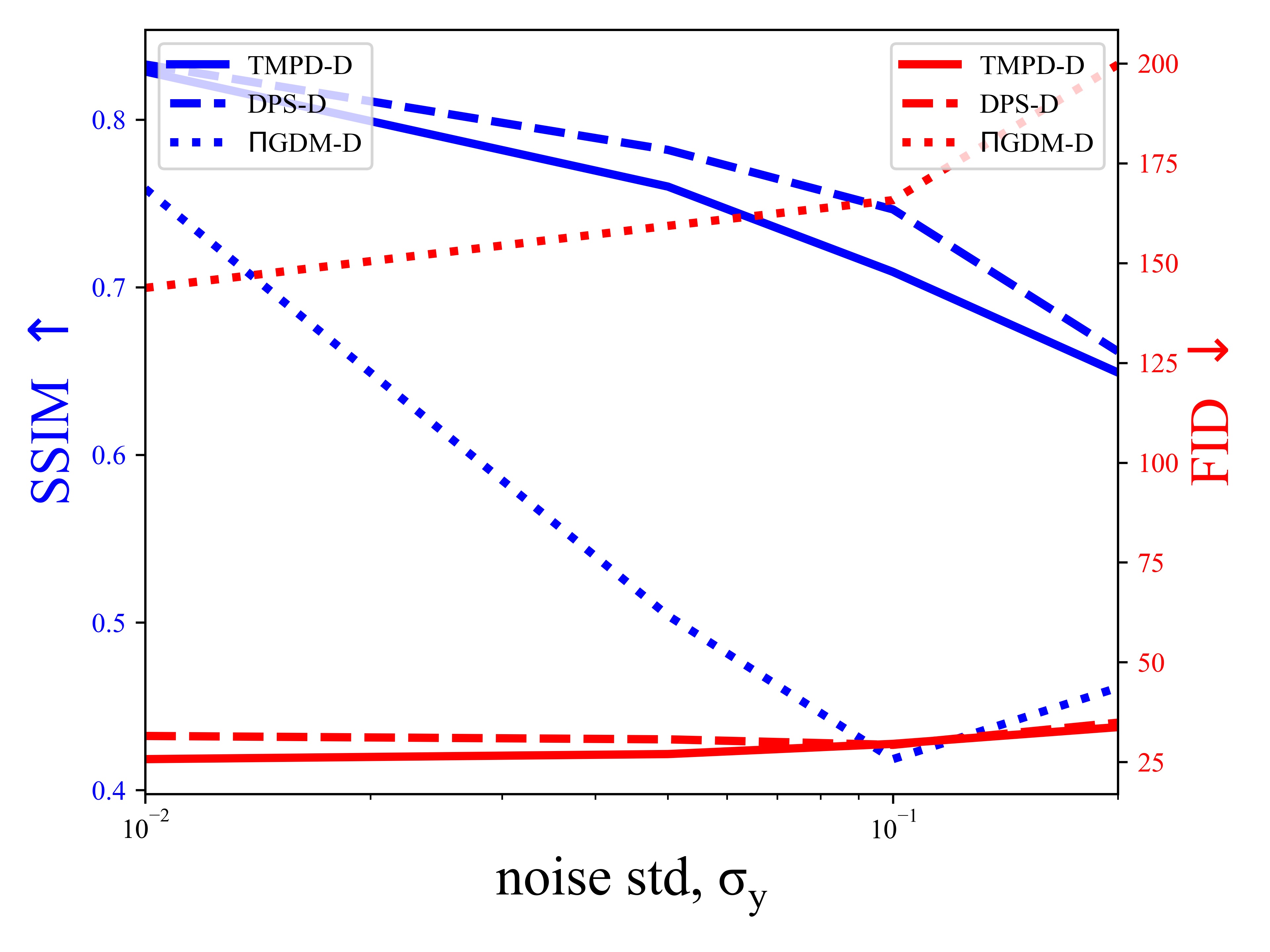}
\end{center}
\caption{`box' mask inpainting FID vs LPIPS (left) and SSIM (right) using the VP-SDE on FFHQ-1k validation dataset for increasing observation noise.}
\hfill
\label{fig:10b}
\end{figure}

\begin{figure}[h]
\begin{center}
\includegraphics[width=0.45\textwidth]{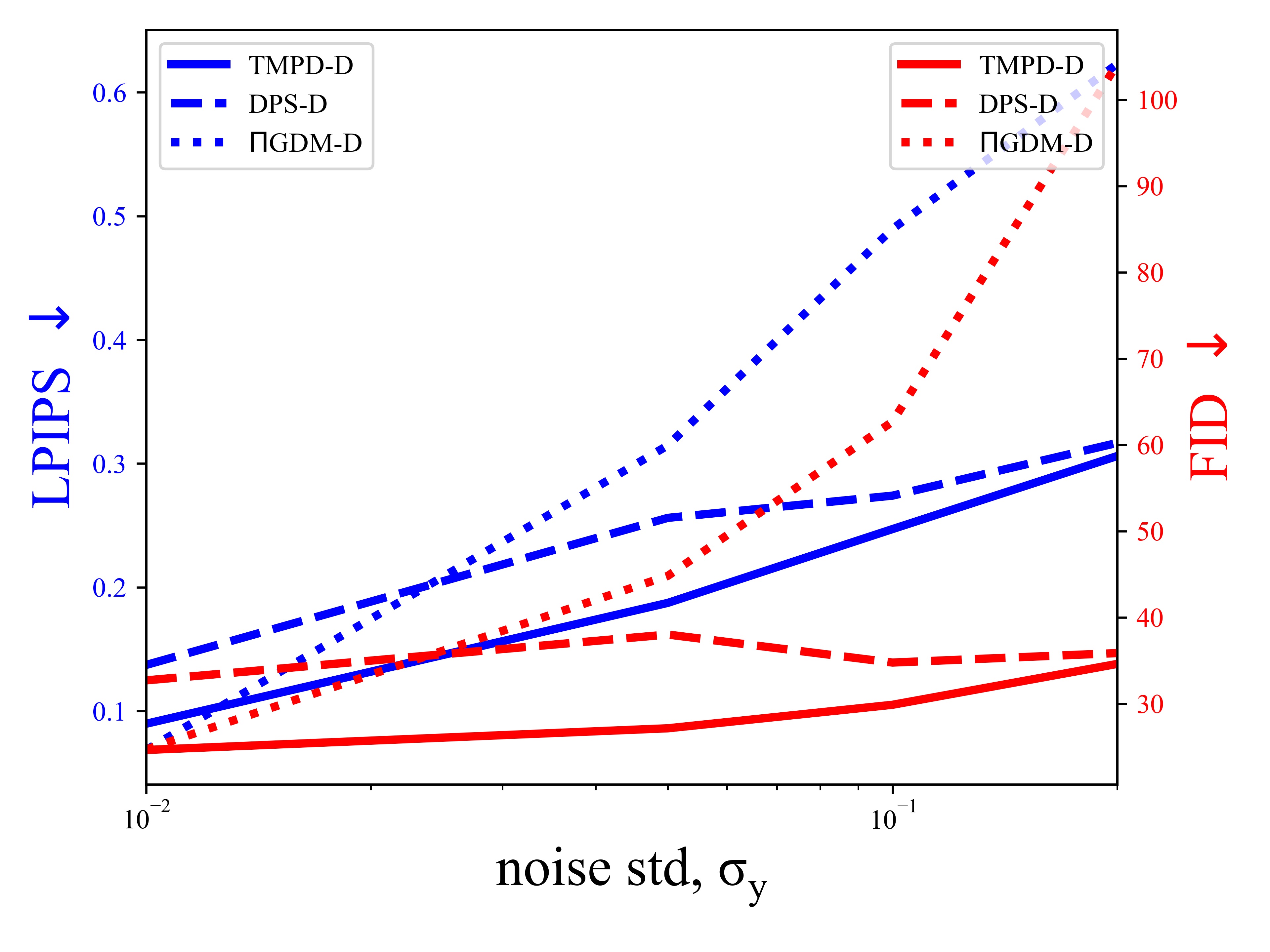}
\includegraphics[width=0.45\textwidth]{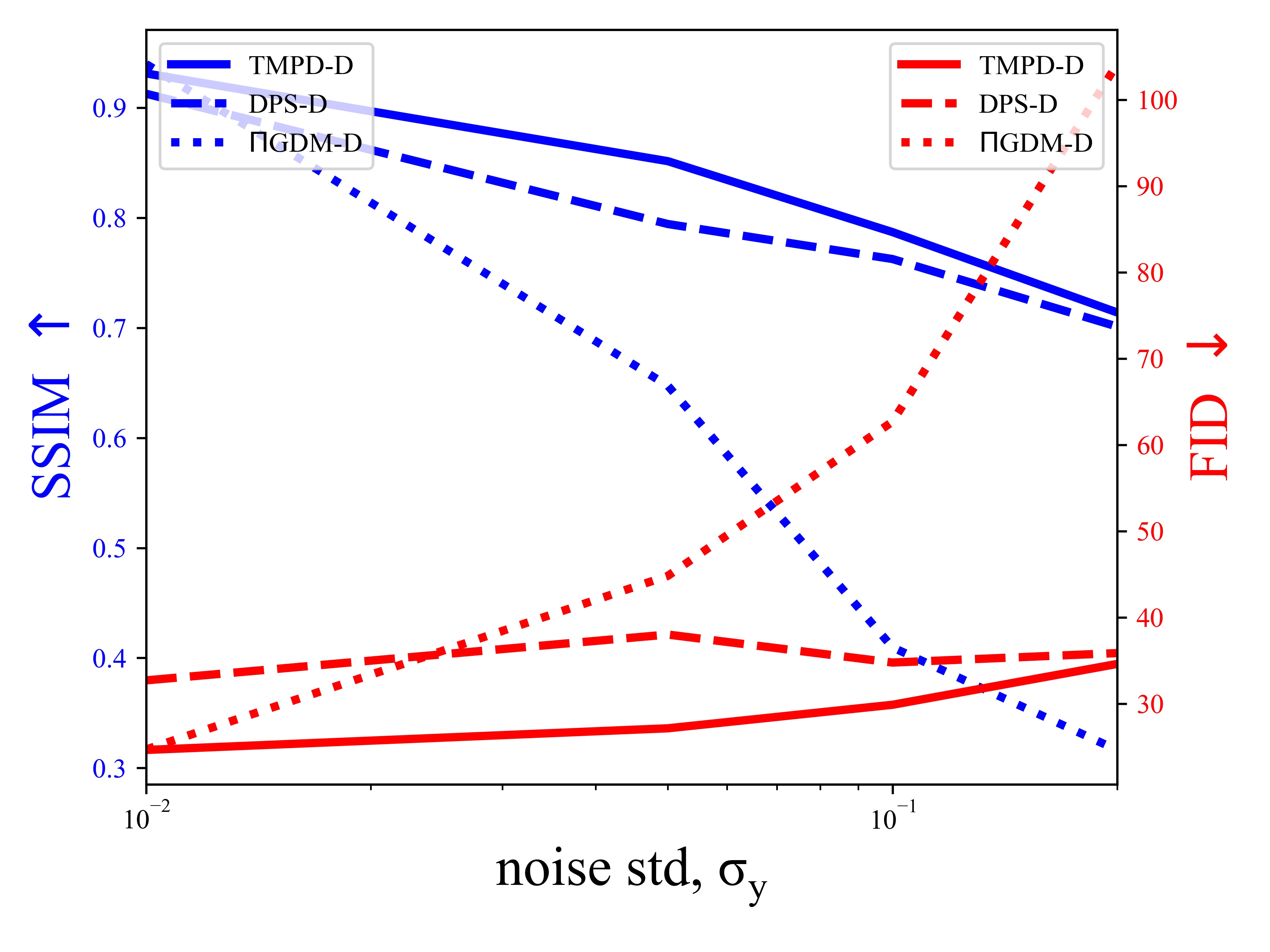}
\end{center}
\caption{`random' mask inpainting FID vs LPIPS (left) and SSIM (right) using the VP-SDE on FFHQ-1k validation dataset for increasing observation noise.}
\hfill
\label{fig:10c}
\end{figure}

\begin{figure}[h]
\begin{center}
\includegraphics[width=0.7\textwidth]{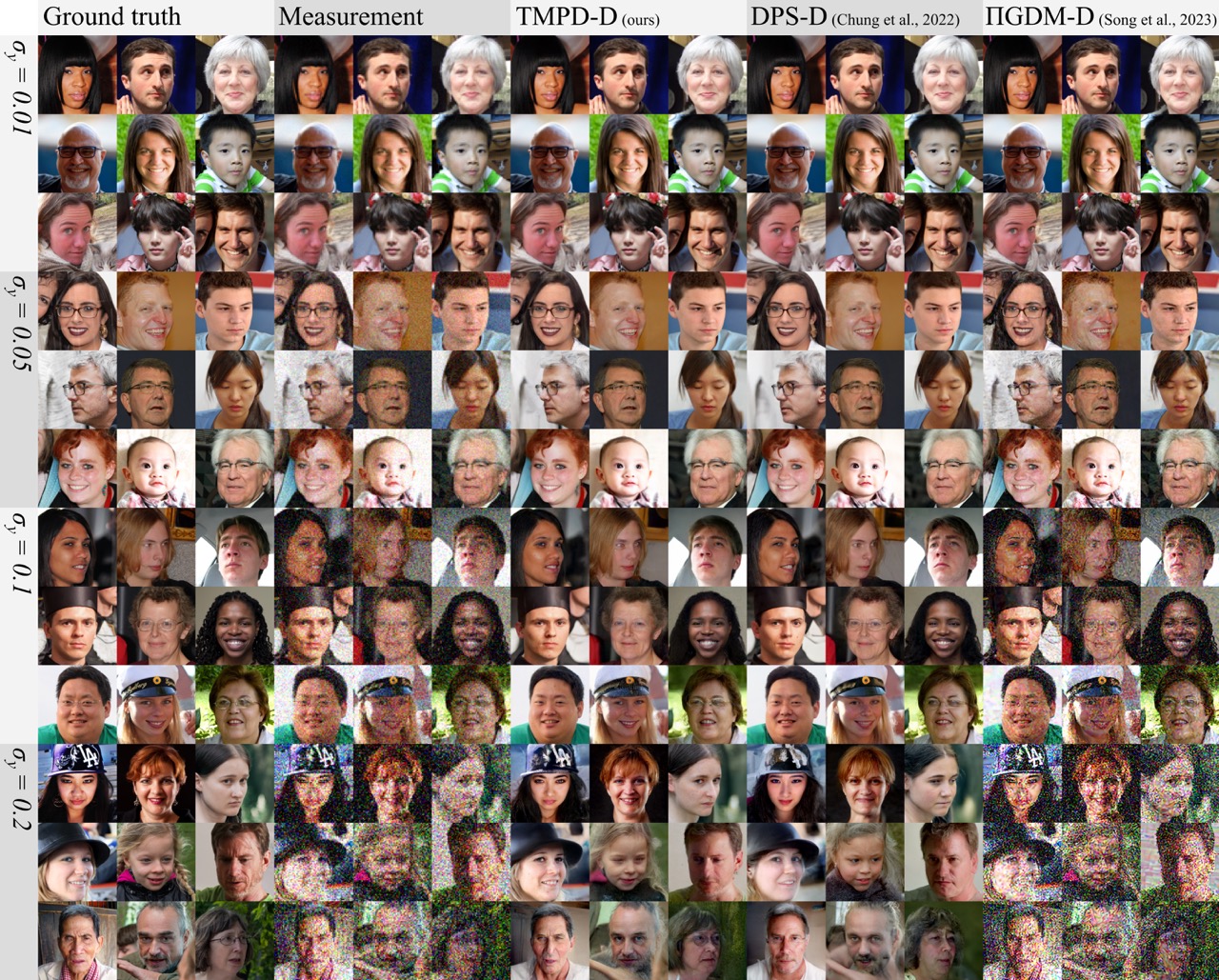}
\end{center}
\caption{4$\times$ bicubic super-resolution samples from the VP-SDE on FFHQ across a range of observation noise.}
\hfill
\label{fig:8a}
\end{figure}

\begin{figure}[h]
\begin{center}
\includegraphics[width=0.7\textwidth]{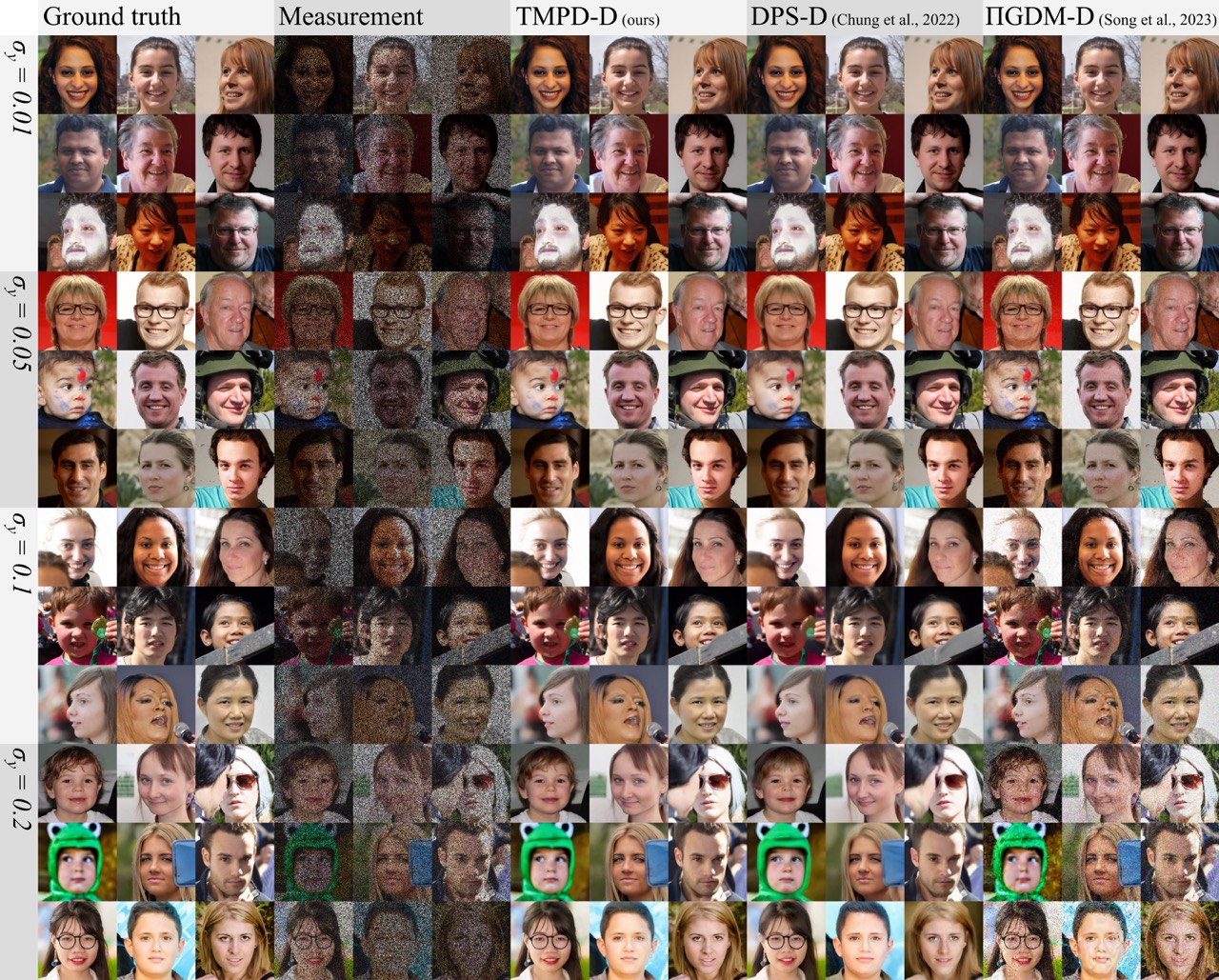}
\end{center}
\caption{`Random' mask inpainting samples from the VP-SDE on FFHQ across a range of observation noise.}
\hfill
\label{fig:8b}
\end{figure}

\begin{figure}[h]
\begin{center}
\includegraphics[width=0.7\textwidth]{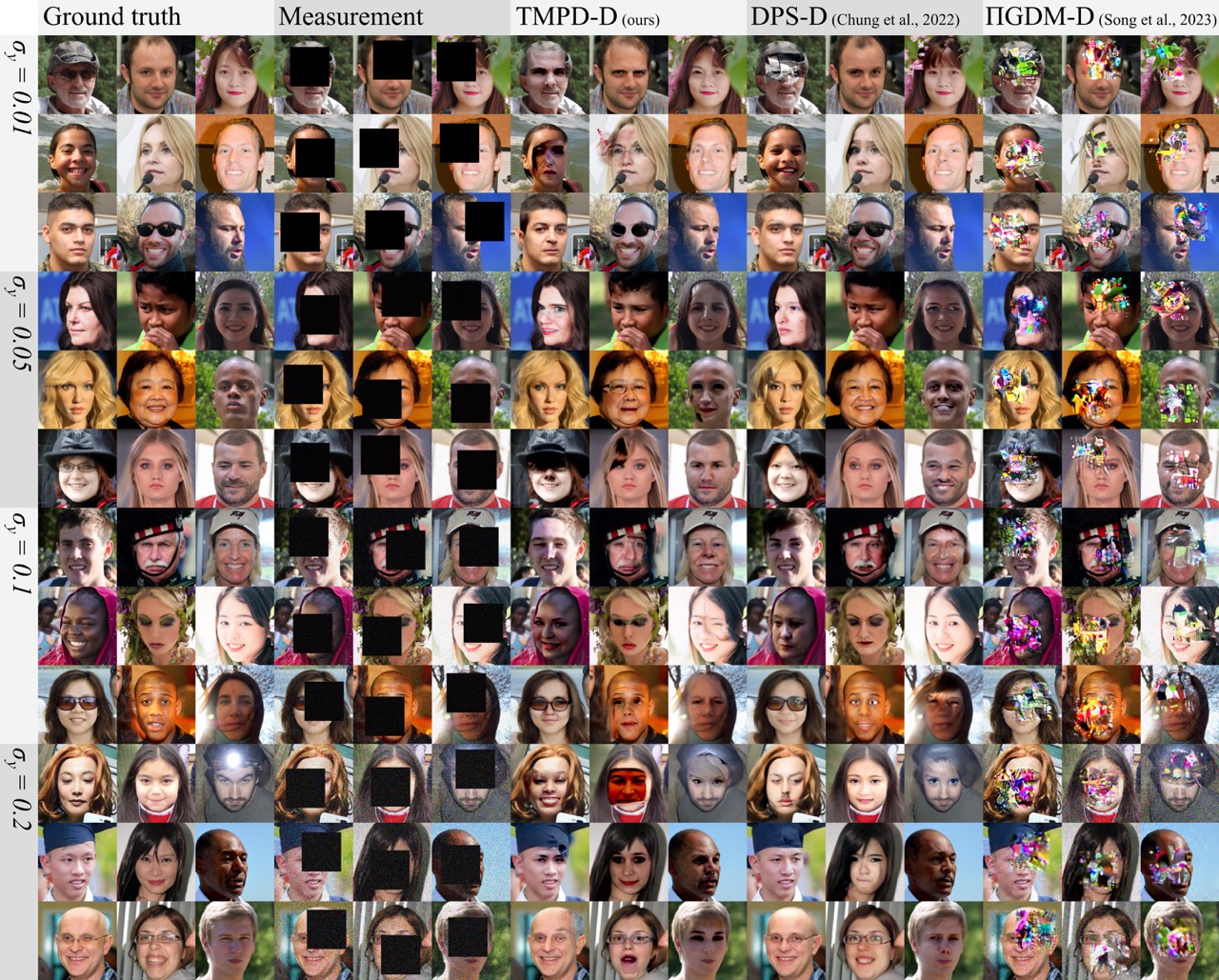}
\end{center}
\caption{`Box' mask inpainting samples from the VP-SDE on FFHQ across a range of observation noise.}
\hfill
\label{fig:8c}
\end{figure}

\subsubsubsection{\textbf{CIFAR10 64$\times$64}}\label{appendix:CIFAR_VP}
Our CIFAR-10 64$\times$64 experiment compares TMPD to DPS and $\Pi$GDM, and also compares diffusion methods (DDPM) in Table~\ref{table:a2a} and score-based methods (discretized with Euler-Maruyama) in Table~\ref{table:a3a}. Various samples used to produce the figures in Tables~\ref{table:a2a} and \ref{table:a3a} are shown in Fig~\ref{fig:5a} and \ref{fig:5b}.

We found the hyperparameter suggested for the DDIM method for the VP-SDE in \citet{Song2023} $r^{2}_{t} = v_{t} / (\alpha_{t} + v_{t})$, which is calculated by assuming the data distribution $p_{0}(\rvx_{0})$ is a standard normal and then calculating the posterior variance, to give unstable solutions for the algorithm given in \citet{Song2023}. To make the method stable, we instead plug the $\Pi$GDM posterior score approximation into a DDIM sampler in a similar way to Algorithm~\ref{algorithm:1}, which, for the VPSDE, brings the algorithm $\Pi$GDM-D closer to our method; we are then able to choose $r^{2}_{t} = v_{t} / (\alpha_{t} + v_{t})$ for both VP DDIM and VP-SDE methods.

For DDPM we use the step-size constant suggested in \citet{Chung2022} for inpainting, $\zeta_{i} = \zeta' / \|\rvy - \mH \rvm_{0|t}\|$, where we tune $\zeta'$ over the suggested range of $\zeta'\in [0.1, 1.0]$ in \citet{Chung2022} across LPIPS, MSE, PSNR and SSIM, as shown in Fig.~\ref{fig:4} for each inverse problem (each line in the Tables~\ref{table:a2a} and \ref{table:a3a}).

\begin{table}[]
\caption{Noisy observation inpainting and super-resolution for VP DDPM on CIFAR-10 1k validation set.}
\label{table:a2a}
\begin{center}
\begin{sc}
\begin{tabular}{lllllll}
\hline
Problem              & Method     & FID $\downarrow$ & LPIPS $\downarrow$              & MSE $\downarrow$       & PSNR $\uparrow$     & SSIM $\uparrow$                 \\ \hline
$\sigma_{y}=0.01$    & DTMPD-D     & 33.7             & 0.090 \tiny{$\pm$ 0.048}          & 0.007 \tiny{$\pm$ 0.034} & 23.8 \tiny{$\pm$ 3.7} & 0.784 \tiny{$\pm$ 0.073}          \\
`box' mask           & DPS-D      & \textbf{31.5}    & \textbf{0.064} \tiny{$\pm$ 0.033} & 0.004 \tiny{$\pm$ 0.003} & 25.8 \tiny{$\pm$ 3.6} & 0.841 \tiny{$\pm$ 0.068}          \\
inpainting           & $\Pi$GDM-D & 37.1             & 0.316 \tiny{$\pm$ 0.108}          & 0.012 \tiny{$\pm$ 0.029} & 19.9 \tiny{$\pm$ 1.9} & 0.546 \tiny{$\pm$ 0.143}          \\ \hline
$\sigma_{y}=0.01$    & DTMPD-D     & 40.4             & 0.272 \tiny{$\pm$ 0.071}          & 0.028 \tiny{$\pm$ 0.020} & 16.4 \tiny{$\pm$ 3.0} & 0.584 \tiny{$\pm$ 0.071}          \\
`half' mask          & DPS-D      & \textbf{33.1}    & 0.221 \tiny{$\pm$ 0.072}          & 0.028 \tiny{$\pm$ 0.021} & 16.8 \tiny{$\pm$ 3.6} & 0.637 \tiny{$\pm$ 0.093}          \\
inpainting           & $\Pi$GDM-D & 35.8             & 0.397 \tiny{$\pm$ 0.103}          & 0.031 \tiny{$\pm$ 0.020} & 15.9 \tiny{$\pm$ 2.8} & 0.419 \tiny{$\pm$ 0.132}          \\ \hline
$\sigma_{y}=0.05$    & DTMPD-D     & 35.9             & 0.128 \tiny{$\pm$ 0.061}          & 0.006 \tiny{$\pm$ 0.011} & 23.5 \tiny{$\pm$ 3.3} & 0.763 \tiny{$\pm$ 0.079}          \\
`box' mask           & DPS-D      & \textbf{31.1}    & \textbf{0.078} \tiny{$\pm$ 0.037} & 0.004 \tiny{$\pm$ 0.003} & 25.5 \tiny{$\pm$ 3.3} & \textbf{0.830} \tiny{$\pm$ 0.070} \\
inpainting           & $\Pi$GDM-D & 36.3             & 0.319 \tiny{$\pm$ 0.110}          & 0.012 \tiny{$\pm$ 0.031} & 20.0 \tiny{$\pm$ 1.9} & 0.545 \tiny{$\pm$ 0.142}          \\ \hline
$\sigma_{y}=0.05$    & DTMPD-D     & 40.4             & 0.292 \tiny{$\pm$ 0.075}          & 0.029 \tiny{$\pm$ 0.020} & 16.4 \tiny{$\pm$ 3.1} & 0.572 \tiny{$\pm$ 0.076}          \\
`half' mask          & DPS-D      & \textbf{32.5}    & 0.230 \tiny{$\pm$ 0.075}          & 0.028 \tiny{$\pm$ 0.023} & 16.7 \tiny{$\pm$ 3.5} & 0.626 \tiny{$\pm$ 0.096}          \\
inpainting           & $\Pi$GDM-D & 35.6             & 0.398 \tiny{$\pm$ 0.109}          & 0.032 \tiny{$\pm$ 0.021} & 15.9 \tiny{$\pm$ 2.8} & 0.421 \tiny{$\pm$ 0.133}          \\ \hline
$\sigma_{y}=0.1$     & DTMPD-D     & 38.9             & 0.168 \tiny{$\pm$ 0.072}          & 0.007 \tiny{$\pm$ 0.012} & 22.5 \tiny{$\pm$ 2.9} & 0.728 \tiny{$\pm$ 0.081}          \\
`box' mask           & DPS-D      & \textbf{31.6}    & \textbf{0.101} \tiny{$\pm$ 0.043} & 0.004 \tiny{$\pm$ 0.003} & 24.5 \tiny{$\pm$ 2.9} & 0.807 \tiny{$\pm$ 0.078}          \\
inpainting           & $\Pi$GDM-D & 36.3             & 0.318 \tiny{$\pm$ 0.109}          & 0.012 \tiny{$\pm$ 8.553} & 19.7 \tiny{$\pm$ 1.8} & 0.546 \tiny{$\pm$ 0.140}          \\ \hline
$\sigma_{y}=0.1$     & DTMPD-D     & 43.8             & 0.350 \tiny{$\pm$ 0.088}          & 0.030 \tiny{$\pm$ 0.028} & 16.2 \tiny{$\pm$ 2.9} & 0.547 \tiny{$\pm$ 0.076}          \\
`half' mask          & DPS-D      & \textbf{33.0}    & 0.522 \tiny{$\pm$ 0.097}          & 0.031 \tiny{$\pm$ 0.022} & 16.3 \tiny{$\pm$ 3.3} & 0.602 \tiny{$\pm$ 0.097}          \\
inpainting           & $\Pi$GDM-D & 36.2             & 0.276 \tiny{$\pm$ 0.085}          & 0.034 \tiny{$\pm$ 0.023} & 15.5 \tiny{$\pm$ 2.7} & 0.412 \tiny{$\pm$ 0.128}          \\ \hline
$\sigma_{y}=0.01$    & DTMPD-D     & 33.2             & 0.117 \tiny{$\pm$ 0.051}          & 0.004 \tiny{$\pm$ 0.004} & 24.7 \tiny{$\pm$ 3.0} & 0.835 \tiny{$\pm$ 0.071}          \\
$2 \times$ `nearest' & DPS-D      & \textbf{32.5}    & \textbf{0.099} \tiny{$\pm$ 0.044} & 0.004 \tiny{$\pm$ 0.003} & 25.1 \tiny{$\pm$ 3.1} & 0.847 \tiny{$\pm$ 0.073}          \\
super-resolution     & $\Pi$GDM-D & 35.6             & 0.407 \tiny{$\pm$ 0.118}          & 0.016 \tiny{$\pm$ 0.006} & 18.2 \tiny{$\pm$ 1.7} & 0.442 \tiny{$\pm$ 0.152}          \\ \hline
$\sigma_{y}=0.01$    & DTMPD-D     & 41.5             & 0.278 \tiny{$\pm$ 0.084}          & 0.011 \tiny{$\pm$ 0.006} & 20.4 \tiny{$\pm$ 2.7} & 0.563 \tiny{$\pm$ 0.114}          \\
$4 \times$ `bicubic' & DPS-D      & \textbf{33.9}    & 0.220 \tiny{$\pm$ 0.079}          & 0.010 \tiny{$\pm$ 0.006} & 20.8 \tiny{$\pm$ 3.0} & 0.609 \tiny{$\pm$ 0.135}          \\
super-resolution     & $\Pi$GDM-D & 39.4             & 0.279 \tiny{$\pm$ 0.081}          & 0.011 \tiny{$\pm$ 0.006} & 20.1 \tiny{$\pm$ 2.5} & 0.546 \tiny{$\pm$ 0.111}          \\ \hline
$\sigma_{y}=0.05$    & DTMPD-D     & \textbf{33.9}    & 0.156 \tiny{$\pm$ 0.065}          & 0.005 \tiny{$\pm$ 0.025} & 24.1 \tiny{$\pm$ 2.8} & 0.810 \tiny{$\pm$ 0.079}          \\
$2 \times$ `nearest' & DPS-D      & 34.4             & 0.127 \tiny{$\pm$ 0.048}          & 0.004 \tiny{$\pm$ 0.003} & 24.4 \tiny{$\pm$ 2.7} & 0.825 \tiny{$\pm$ 0.070}          \\
super-resolution     & $\Pi$GDM-D & 35.1             & 0.407 \tiny{$\pm$ 0.118}          & 0.016 \tiny{$\pm$ 0.006} & 18.2 \tiny{$\pm$ 1.7} & 0.440 \tiny{$\pm$ 0.153}          \\ \hline
$\sigma_{y}=0.05$    & DTMPD-D     & 38.3             & 0.332 \tiny{$\pm$ 0.093}          & 0.013 \tiny{$\pm$ 0.032} & 19.6 \tiny{$\pm$ 2.4} & 0.501 \tiny{$\pm$ 0.116}          \\
$4 \times$ `bicubic' & DPS-D      & 38.2             & 0.265 \tiny{$\pm$ 0.087}          & 0.011 \tiny{$\pm$ 0.006} & 20.2 \tiny{$\pm$ 2.4} & 0.567 \tiny{$\pm$ 0.128}          \\
super-resolution     & $\Pi$GDM-D & \textbf{23.2}    & 0.522 \tiny{$\pm$ 0.101}          & 0.034 \tiny{$\pm$ 0.012} & 15.0 \tiny{$\pm$ 1.7} & 0.215 \tiny{$\pm$ 0.114}          \\ \hline
$\sigma_{y}=0.1$     & DTMPD-D     & 35.0             & 0.208 \tiny{$\pm$ 0.085}          & 0.006 \tiny{$\pm$ 0.005} & 23.1 \tiny{$\pm$ 2.5} & 0.760 \tiny{$\pm$ 0.091}          \\
$2 \times$ `nearest' & DPS-D      & 32.1             & 0.152 \tiny{$\pm$ 0.058}          & 0.005 \tiny{$\pm$ 0.003} & 23.8 \tiny{$\pm$ 2.5} & 0.802 \tiny{$\pm$ 0.074}          \\
super-resolution     & $\Pi$GDM-D & 35.0             & 0.407 \tiny{$\pm$ 0.122}          & 0.017 \tiny{$\pm$ 0.006} & 18.1 \tiny{$\pm$ 1.8} & 0.435 \tiny{$\pm$ 0.150}          \\ \hline
$\sigma_{y}=0.1$     & DTMPD-D     & 36.8             & 0.378 \tiny{$\pm$ 0.102}          & 0.015 \tiny{$\pm$ 0.007} & 18.7 \tiny{$\pm$ 2.1} & 0.435 \tiny{$\pm$ 0.121}          \\
$4 \times$ `bicubic' & DPS-D      & 34.8             & 0.308 \tiny{$\pm$ 0.095}          & 0.013 \tiny{$\pm$ 0.006} & 19.4 \tiny{$\pm$ 2.2} & 0.513 \tiny{$\pm$ 0.132}          \\
super-resolution     & $\Pi$GDM-D & \textbf{23.0}    & 0.521 \tiny{$\pm$ 0.104}          & 0.035 \tiny{$\pm$ 0.013} & 14.9 \tiny{$\pm$ 1.7} & 0.212 \tiny{$\pm$ 0.117}          \\ \hline
\end{tabular}
\end{sc}
\end{center}
\end{table}

\begin{table}[]
\caption{Noisy observation inpainting and super-resolution for the reverse VP-SDEs on CIFAR-10 1k validation set.}
\label{table:a3a}
\begin{center}
\begin{sc}
\begin{tabular}{lllllll}
\hline
Problem              & Method   & FID $\downarrow$ & LPIPS $\downarrow$              & MSE $\downarrow$                & PSNR $\uparrow$              & SSIM $\uparrow$                 \\ \hline
$\sigma_{y}=0.01$    & TMPD     & \textbf{33.2}    & 0.088 \tiny{$\pm$ 0.050}          & 0.007 \tiny{$\pm$ 0.019}          & 23.8 \tiny{$\pm$ 3.7}          & 0.785 \tiny{$\pm$ 0.071}          \\
`box' mask           & DPS      & 61.8             & 0.646 \tiny{$\pm$ 0.078}          & 0.111 \tiny{$\pm$ 0.049}          & 9.9 \tiny{$\pm$ 1.9}           & 0.050 \tiny{$\pm$ 0.068}          \\
inpainting           & $\Pi$GDM & 34.8             & 0.073 \tiny{$\pm$ 0.035}          & 0.004 \tiny{$\pm$ 0.004}          & 25.1 \tiny{$\pm$ 3.5}          & 0.816 \tiny{$\pm$ 0.068}          \\ \hline
$\sigma_{y}=0.01$    & TMPD     & 37.4             & 0.267 \tiny{$\pm$ 0.068}          & 0.030 \tiny{$\pm$ 0.036}          & 16.3 \tiny{$\pm$ 3.0}          & 0.581 \tiny{$\pm$ 0.070}          \\
`half' mask          & DPS      & 65.8             & 0.645 \tiny{$\pm$ 0.079}          & 0.112 \tiny{$\pm$ 0.050}          & 9.9 \tiny{$\pm$ 1.9}           & 0.052 \tiny{$\pm$ 0.069}          \\
inpainting           & $\Pi$GDM & \textbf{33.3}    & 0.236 \tiny{$\pm$ 0.068}          & 0.027 \tiny{$\pm$ 0.021}          & 17.0 \tiny{$\pm$ 3.6}          & 0.620 \tiny{$\pm$ 0.087}          \\ \hline
$\sigma_{y}=0.05$    & TMPD     & \textbf{33.5}    & 0.117 \tiny{$\pm$ 0.052}          & 0.006 \tiny{$\pm$ 0.017}          & 23.5 \tiny{$\pm$ 3.2}          & 0.763 \tiny{$\pm$ 0.074}          \\
`box' mask           & DPS      & 62.2             & 0.647 \tiny{$\pm$ 0.078}          & 0.110 \tiny{$\pm$ 0.048}          & 10.0 \tiny{$\pm$ 1.9}          & 0.055 \tiny{$\pm$ 0.067}          \\
inpainting           & $\Pi$GDM & 35.2             & 0.103 \tiny{$\pm$ 0.047}          & 0.004 \tiny{$\pm$ 0.003}          & 24.8 \tiny{$\pm$ 2.9}          & 0.798 \tiny{$\pm$ 0.074}          \\ \hline
$\sigma_{y}=0.05$    & TMPD     & 37.9             & 0.286 \tiny{$\pm$ 0.073}          & 0.030 \tiny{$\pm$ 0.023}          & 16.2 \tiny{$\pm$ 3.1}          & 0.567 \tiny{$\pm$ 0.075}          \\
`half' mask          & DPS      & 65.1             & 0.647 \tiny{$\pm$ 0.080}          & 0.117 \tiny{$\pm$ 0.054}          & 9.7 \tiny{$\pm$ 2.0}           & 0.052 \tiny{$\pm$ 0.063}          \\
inpainting           & $\Pi$GDM & \textbf{34.1}    & 0.256 \tiny{$\pm$ 0.078}          & 0.027 \tiny{$\pm$ 0.022}          & 16.8 \tiny{$\pm$ 3.3}          & 0.605 \tiny{$\pm$ 0.094}          \\ \hline
$\sigma_{y}=0.1$     & TMPD     & \textbf{35.1}    & 0.159 \tiny{$\pm$ 0.065}          & 0.007 \tiny{$\pm$ 0.006}          & 22.6 \tiny{$\pm$ 2.8}          & 0.731 \tiny{$\pm$ 0.078}          \\
`box' mask           & DPS      & 61.9             & 0.647 \tiny{$\pm$ 0.074}          & 0.112 \tiny{$\pm$ 0.052}          & 9.9 \tiny{$\pm$ 1.9}           & 0.054 \tiny{$\pm$ 0.071}          \\
inpainting           & $\Pi$GDM & 36.2             & 0.140 \tiny{$\pm$ 0.059}          & 0.005 \tiny{$\pm$ 0.003}          & 23.7 \tiny{$\pm$ 2.6}          & 0.763 \tiny{$\pm$ 0.083}          \\ \hline
$\sigma_{y}=0.1$     & TMPD     & 41.9             & 0.310 \tiny{$\pm$ 0.077}          & 0.031 \tiny{$\pm$ 0.020}          & 16.0 \tiny{$\pm$ 2.9}          & 0.541 \tiny{$\pm$ 0.078}          \\
`half' mask          & DPS      & 81.0             & 0.646 \tiny{$\pm$ 0.082}          & 0.112 \tiny{$\pm$ 0.052}          & 10.0 \tiny{$\pm$ 2.0}          & 0.052 \tiny{$\pm$ 0.069}          \\
inpainting           & $\Pi$GDM & \textbf{38.2}    & 0.279 \tiny{$\pm$ 0.074}          & 0.029 \tiny{$\pm$ 0.021}          & 16.4 \tiny{$\pm$ 3.2}          & 0.577 \tiny{$\pm$ 0.095}          \\ \hline
$\sigma_{y}=0.01$    & TMPD     & \textbf{32.4}    & \textbf{0.123} \tiny{$\pm$ 0.055} & 0.005 \tiny{$\pm$ 0.008}          & 24.5 \tiny{$\pm$ 3.0}          & 0.826 \tiny{$\pm$ 0.077}          \\
$2 \times$ `nearest' & DPS      & 71.6             & 0.645 \tiny{$\pm$ 0.079}          & 0.118 \tiny{$\pm$ 0.054}          & 9.7 \tiny{$\pm$ 2.0}           & 0.049 \tiny{$\pm$ 0.068}          \\
super-resolution     & $\Pi$GDM & 41.7             & 0.192 \tiny{$\pm$ 0.079}          & 0.005 \tiny{$\pm$ 0.003}          & 23.3 \tiny{$\pm$ 2.4}          & 0.773 \tiny{$\pm$ 0.079}          \\ \hline
$\sigma_{y}=0.01$    & TMPD     & \textbf{22.9}    & 0.525 \tiny{$\pm$ 0.098}          & 0.035 \tiny{$\pm$ 0.013}          & 14.9 \tiny{$\pm$ 1.7}          & 0.206 \tiny{$\pm$ 0.109}          \\
$4 \times$ `bicubic' & DPS      & 77.9             & 0.640 \tiny{$\pm$ 0.076}          & 0.120 \tiny{$\pm$ 0.060}          & 9.7 \tiny{$\pm$ 2.1}           & 0.046 \tiny{$\pm$ 0.070}          \\
super-resolution     & $\Pi$GDM & 36.7             & 0.461 \tiny{$\pm$ 0.093}          & 0.023 \tiny{$\pm$ 0.009}          & 16.7 \tiny{$\pm$ 1.7}          & 0.305 \tiny{$\pm$ 0.106}          \\ \hline
$\sigma_{y}=0.05$    & TMPD     & \textbf{32.7}    & \textbf{0.158} \tiny{$\pm$ 0.060} & 0.005 \tiny{$\pm$ 0.003}          & 23.9 \tiny{$\pm$ 2.6}          & 0.800 \tiny{$\pm$ 0.075}          \\
$2 \times$ `nearest' & DPS      & 72.5             & 0.648 \tiny{$\pm$ 0.078}          & 0.118 \tiny{$\pm$ 0.053}          & 9.7 \tiny{$\pm$ 1.9}           & 0.050 \tiny{$\pm$ 0.069}          \\
super-resolution     & $\Pi$GDM & 37.3             & 0.229 \tiny{$\pm$ 0.086}          & 0.006 \tiny{$\pm$ 0.003}          & 22.6 \tiny{$\pm$ 2.1}          & 0.735 \tiny{$\pm$ 0.088}          \\ \hline
$\sigma_{y}=0.05$    & TMPD     & 35.7             & \textbf{0.328} \tiny{$\pm$ 0.091} & \textbf{0.013} \tiny{$\pm$ 0.011} & \textbf{19.5} \tiny{$\pm$ 2.3} & \textbf{0.495} \tiny{$\pm$ 0.117} \\
$4 \times$ `bicubic' & DPS      & 77.9             & 0.641 \tiny{$\pm$ 0.071}          & 0.120 \tiny{$\pm$ 0.061}          & 9.7 \tiny{$\pm$ 2.1}           & 0.047 \tiny{$\pm$ 0.070}          \\
super-resolution     & $\Pi$GDM & \textbf{34.8}    & 0.479 \tiny{$\pm$ 0.094}          & 0.026 \tiny{$\pm$ 0.010}          & 16.2 \tiny{$\pm$ 1.7}          & 0.276 \tiny{$\pm$ 0.112}          \\ \hline
$\sigma_{y}=0.1$     & TMPD     & \textbf{32.9}    & \textbf{0.207} \tiny{$\pm$ 0.081} & 0.006 \tiny{$\pm$ 0.005}          & 23.0 \tiny{$\pm$ 2.4}          & 0.753 \tiny{$\pm$ 0.088}          \\
$2 \times$ `nearest' & DPS      & 72.2             & 0.644 \tiny{$\pm$ 0.077}          & 0.118 \tiny{$\pm$ 0.056}          & 9.7 \tiny{$\pm$ 2.0}           & 0.051 \tiny{$\pm$ 0.067}          \\
super-resolution     & $\Pi$GDM & 35.9             & 0.273 \tiny{$\pm$ 0.098}          & 0.007 \tiny{$\pm$ 0.003}          & 21.7 \tiny{$\pm$ 2.1}          & 0.682 \tiny{$\pm$ 0.105}          \\ \hline
$\sigma_{y}=0.1$     & TMPD     & 34.7             & \textbf{0.379} \tiny{$\pm$ 0.096} & \textbf{0.016} \tiny{$\pm$ 0.007} & \textbf{18.5} \tiny{$\pm$ 2.0} & \textbf{0.429} \tiny{$\pm$ 0.120} \\
$4 \times$ `bicubic' & DPS      & 78.3             & 0.639 \tiny{$\pm$ 0.074}          & 0.120 \tiny{$\pm$ 0.060}          & 9.7 \tiny{$\pm$ 2.1}           & 0.050 \tiny{$\pm$ 0.068}          \\
super-resolution     & $\Pi$GDM & \textbf{34.4}    & 0.492 \tiny{$\pm$ 0.097}          & 0.028 \tiny{$\pm$ 0.010}          & 15.8 \tiny{$\pm$ 1.7}          & 0.258 \tiny{$\pm$ 0.112}          \\ \hline
\end{tabular}
\end{sc}
\end{center}
\vskip -0.1in
\end{table}

\begin{figure}[h]
\begin{center}
\includegraphics[width=13.0cm]{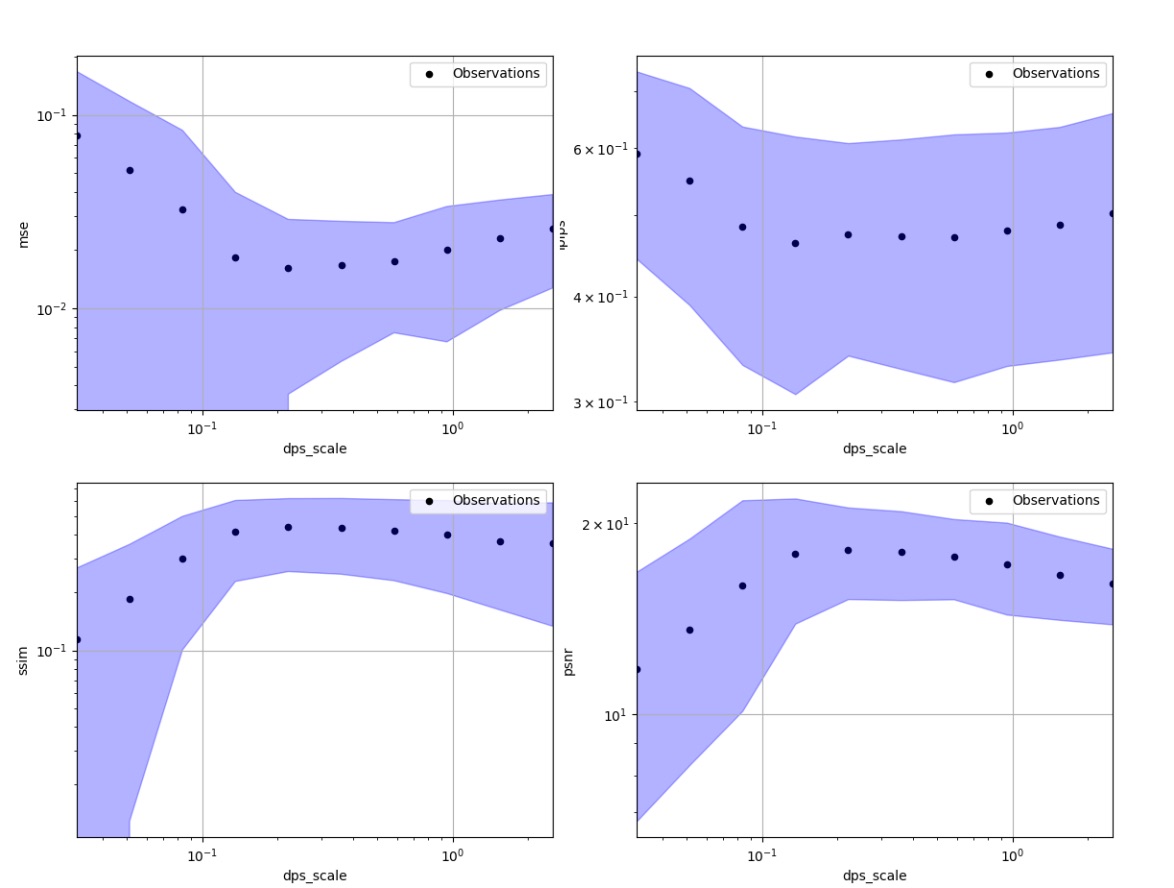}
\end{center}
\caption{DPS scale hyperparameter search across LPIPS, MSE, PSNR and SSIM for CIFAR-10 $4\times$ bicubic interpolation super-resolution with $\sigma=0.05$. Plotted are the mean values $\pm$ 2 standard deviations over a 128 sample/batch size, calculated over 10 values of $\zeta'$. We chose an optimal value of 0.15 for the DPS scale hyperparameter in this case.}
\hfill
\label{fig:4}
\end{figure}

\begin{figure}[h]
\begin{center}
\includegraphics[width=10.0cm]{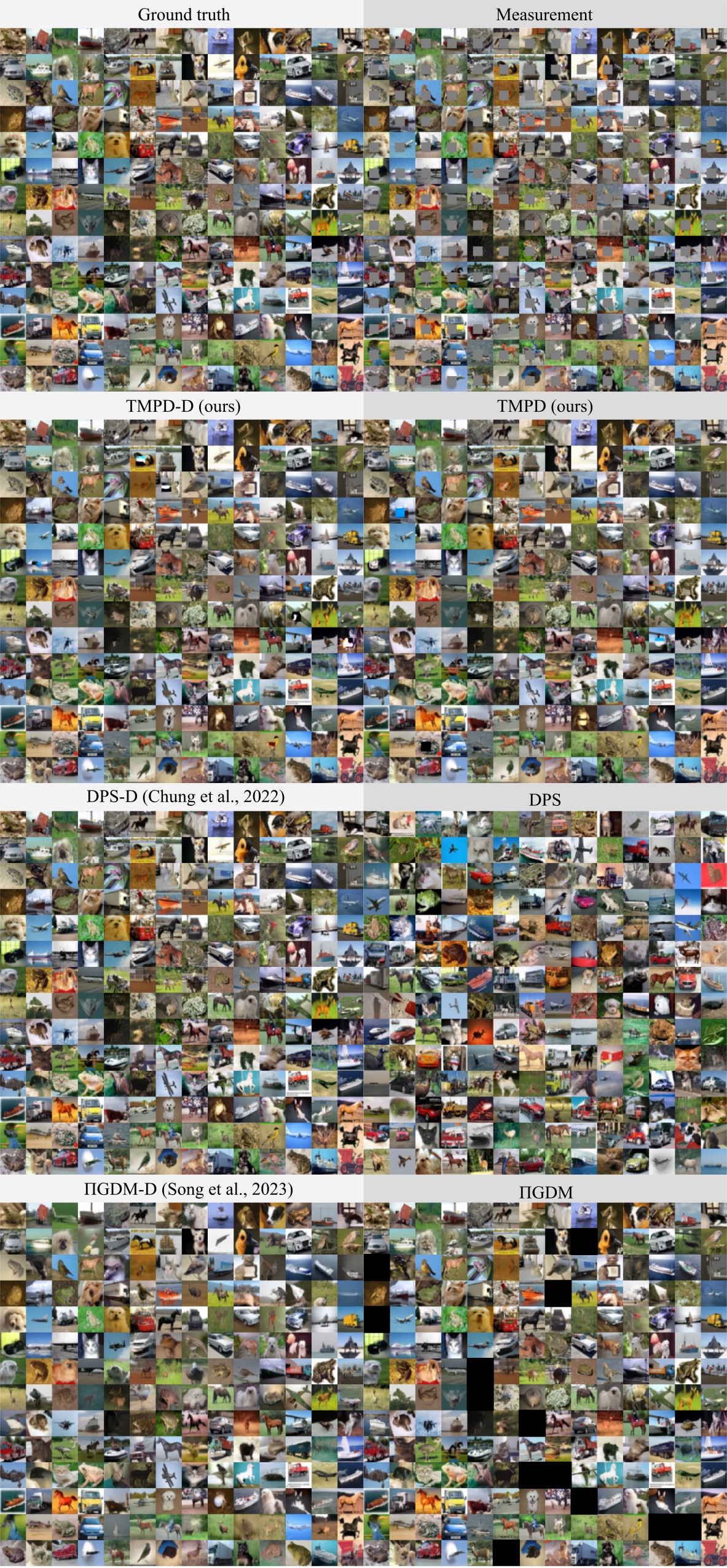}
\end{center}
\caption{Inpainting samples from the VP-SDE on CIFAR-10. The observation model was `box' mask with Gaussian ($\sigma_{\ry} = 0.05$) noise.}
\hfill
\label{fig:5a}
\end{figure}

\begin{figure}[h]
\begin{center}
\includegraphics[width=10.0cm]{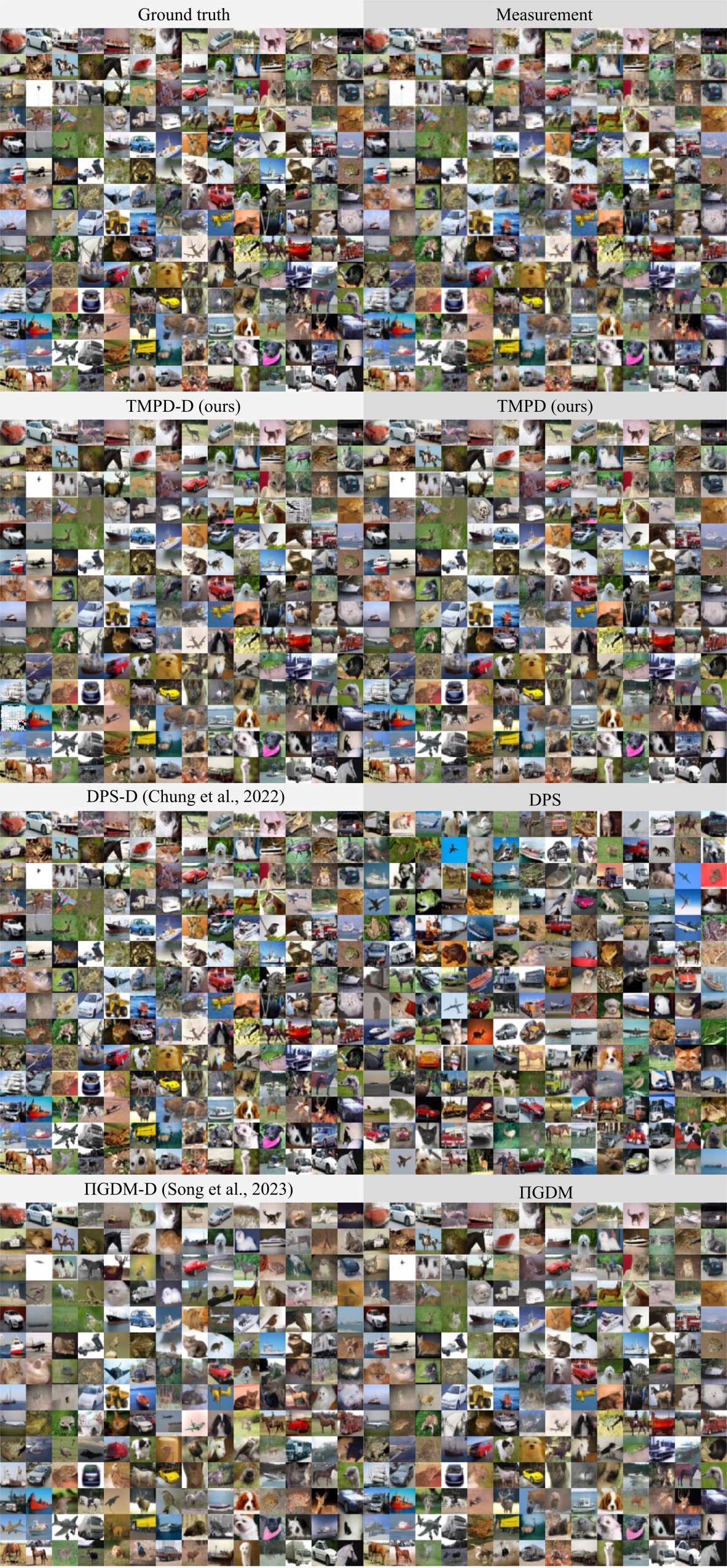}
\end{center}
\caption{2$\times$ nearest-neighbour super-resolution samples from the VP-SDE on CIFAR-10. The observation model was Gaussian ($\sigma_{\ry} = 0.05$) noise.}
\hfill
\label{fig:5b}
\end{figure}

\subsubsection{VE-SDE}\label{appendix:VE_experiments}
\subsubsubsection{\textbf{FFHQ 256$\times$256}} For VE-SDE, our FFHQ 256$\times$256 experiment compares diffusion methods (DDPM) DTMPD-D to DPS-D and $\Pi$GDM-D, and results are shown in Table~\ref{table:aFFHQVE}. Fig~\ref{fig:9a} and Fig~\ref{fig:9b} are a visual summary of Table~\ref{table:aFFHQVE}, plotting the LPIPS, SSIM and FID metrics against increasing noise. Some uncurated samples that were used to generate Table~\ref{table:aFFHQVE} are shown in Fig~\ref{fig:3} and \ref{fig:7}. Since the DPS-D and PiGDM algorithms were developed for the VP-SDE, the method is not performant for the VE-SDE, and the $\Pi$GDM algorithm was only performant in low noise settings, whereas DTMPD-D is performant across a wide range of noise levels.
\begin{table}[t]
\caption{Quantitative evaluation of solving linear inverse problems for VE DDPM for 1000 steps with increasing noise on FFHQ 256$\times$256-1k validation dataset.}
\label{table:aFFHQVE}
\vskip 0.1in
\begin{center}
\begin{small}
\begin{sc}
\begin{tabular}{lllllll}
\hline
Problem              & Method     & FID $\downarrow$               & LPIPS $\downarrow$                                                               & MSE $\downarrow$                                             & PSNR $\uparrow$                                                               & SSIM $\uparrow$                                                                  \\ \hline
$\sigma_{\ry}=0.01$  & DTMPD-D     & \textbf{32.3} & \textbf{0.203} \tiny{$\pm$ 0.039} & 1.57e-03 \tiny{$\pm$ 8.89e-04} & 28.6 \tiny{$\pm$ 2.1}                           & \textbf{0.810} \tiny{$\pm$ 0.052} \\
$4 \times$ `bicubic' & DPS-D      & 47.0                           & 0.273 \tiny{$\pm$ 0.031}                           & 1.70e-03 \tiny{$\pm$ 8.36e-04} & 28.1 \tiny{$\pm$ 1.8}                           & 0.747 \tiny{$\pm$ 0.037}                           \\
super-resolution     & $\Pi$GDM-D & 37.4                           & 0.244 \tiny{$\pm$ 0.030}                           & 1.74e-03 \tiny{$\pm$ 9.56e-04} & 28.1 \tiny{$\pm$ 2.0}                           & 0.755 \tiny{$\pm$ 0.042}                           \\ \hline
$\sigma_{\ry}=0.05$  & DTMPD-D     & \textbf{32.1} & \textbf{0.268} \tiny{$\pm$ 0.048} & 2.62e-03 \tiny{$\pm$ 1.17e-03} & \textbf{26.2} \tiny{$\pm$ 1.8} & \textbf{0.733} \tiny{$\pm$ 0.066} \\
$4 \times$ `bicubic' & DPS-D      & 105.9                          & 0.590 \tiny{$\pm$ 0.036}                           & 6.18e-03 \tiny{$\pm$ 8.50e-04} & 22.1 \tiny{$\pm$ 0.6}                           & 0.404 \tiny{$\pm$ 0.040}                           \\
super-resolution     & $\Pi$GDM-D & 106.8                          & 0.592 \tiny{$\pm$ 0.041}                           & 7.92e-03 \tiny{$\pm$ 1.07e-03} & 21.0 \tiny{$\pm$ 0.6}                           & 0.353 \tiny{$\pm$ 0.042}                           \\ \hline
$\sigma_{\ry}=0.1$   & DTMPD-D     & \textbf{32.7} & \textbf{0.310} \tiny{$\pm$ 0.053} & 3.99e-03 \tiny{$\pm$ 1.84e-03} & \textbf{24.3} \tiny{$\pm$ 1.7} & \textbf{0.679} \tiny{$\pm$ 0.071} \\
$4 \times$ `bicubic' & DPS-D      & 114.0                          & 0.569 \tiny{$\pm$ 0.044}                           & 8.47e-03 \tiny{$\pm$ 4.95e-03} & 21.1 \tiny{$\pm$ 1.7}                           & 0.483 \tiny{$\pm$ 0.046}                           \\
super-resolution     & $\Pi$GDM-D & 206.0                          & 0.724 \tiny{$\pm$ 0.034}                           & 2.46e-02 \tiny{$\pm$ 2.16e-03} & 16.1 \tiny{$\pm$ 0.4}                           & 0.176 \tiny{$\pm$ 0.034}                           \\ \hline
$\sigma_{\ry}=0.01$  & DTMPD-D     & 30.2                           & 0.114 \tiny{$\pm$ 0.029}                           & 2.84e-03 \tiny{$\pm$ 2.56e-03} & 26.5 \tiny{$\pm$ 2.8}                           & 0.907 \tiny{$\pm$ 0.020}                           \\
`box' mask           & DPS-D      & \textbf{23.9} & \textbf{0.093} \tiny{$\pm$ 0.019} & 2.53e-03 \tiny{$\pm$ 1.77e-03} & 26.9 \tiny{$\pm$ 2.8}                           & 0.899 \tiny{$\pm$ 0.017}                           \\
inpainting           & $\Pi$GDM-D & 27.1                           & 0.108 \tiny{$\pm$ 0.025}                           & 2.50e-03 \tiny{$\pm$ 1.55e-03} & 26.7 \tiny{$\pm$ 2.3}                           & 0.877 \tiny{$\pm$ 0.015}                           \\ \hline
$\sigma_{\ry}=0.05$  & DTMPD-D     & \textbf{33.6} & \textbf{0.186} \tiny{$\pm$ 0.036} & 3.24e-03 \tiny{$\pm$ 3.19e-03} & 25.8 \tiny{$\pm$ 2.5}                           & \textbf{0.847} \tiny{$\pm$ 0.030} \\
`box' mask           & DPS-D      & 39.7                           & 0.318 \tiny{$\pm$ 0.044}                           & 3.81e-03 \tiny{$\pm$ 1.92e-03} & 24.7 \tiny{$\pm$ 2.0}                           & 0.677 \tiny{$\pm$ 0.086}                           \\
inpainting           & $\Pi$GDM-D & 49.5                           & 0.354 \tiny{$\pm$ 0.044}                           & 4.67e-03 \tiny{$\pm$ 1.50e-03} & 23.5 \tiny{$\pm$ 1.3}                           & 0.555 \tiny{$\pm$ 0.091}                           \\ \hline
$\sigma_{\ry}=0.1$   & DTMPD-D     & \textbf{34.0} & \textbf{0.223} \tiny{$\pm$ 0.041} & 3.42e-03 \tiny{$\pm$ 2.24e-03} & \textbf{25.3} \tiny{$\pm$ 2.3} & \textbf{0.801} \tiny{$\pm$ 0.045} \\
`box' mask           & DPS-D      & 59.1                           & 0.467 \tiny{$\pm$ 0.053}                           & 7.78e-03 \tiny{$\pm$ 3.19e-03} & 21.4 \tiny{$\pm$ 1.7}                           & 0.476 \tiny{$\pm$ 0.110}                           \\
inpainting           & $\Pi$GDM-D & 72.6                           & 0.529 \tiny{$\pm$ 0.047}                           & 9.88e-03 \tiny{$\pm$ 2.48e-03} & 20.2 \tiny{$\pm$ 1.2}                           & 0.356 \tiny{$\pm$ 0.118}                           \\ \hline
\end{tabular}
\end{sc}
\end{small}
\end{center}
\vskip -0.1in
\end{table}

\begin{figure}[h]
\begin{center}
\includegraphics[width=0.45\textwidth]{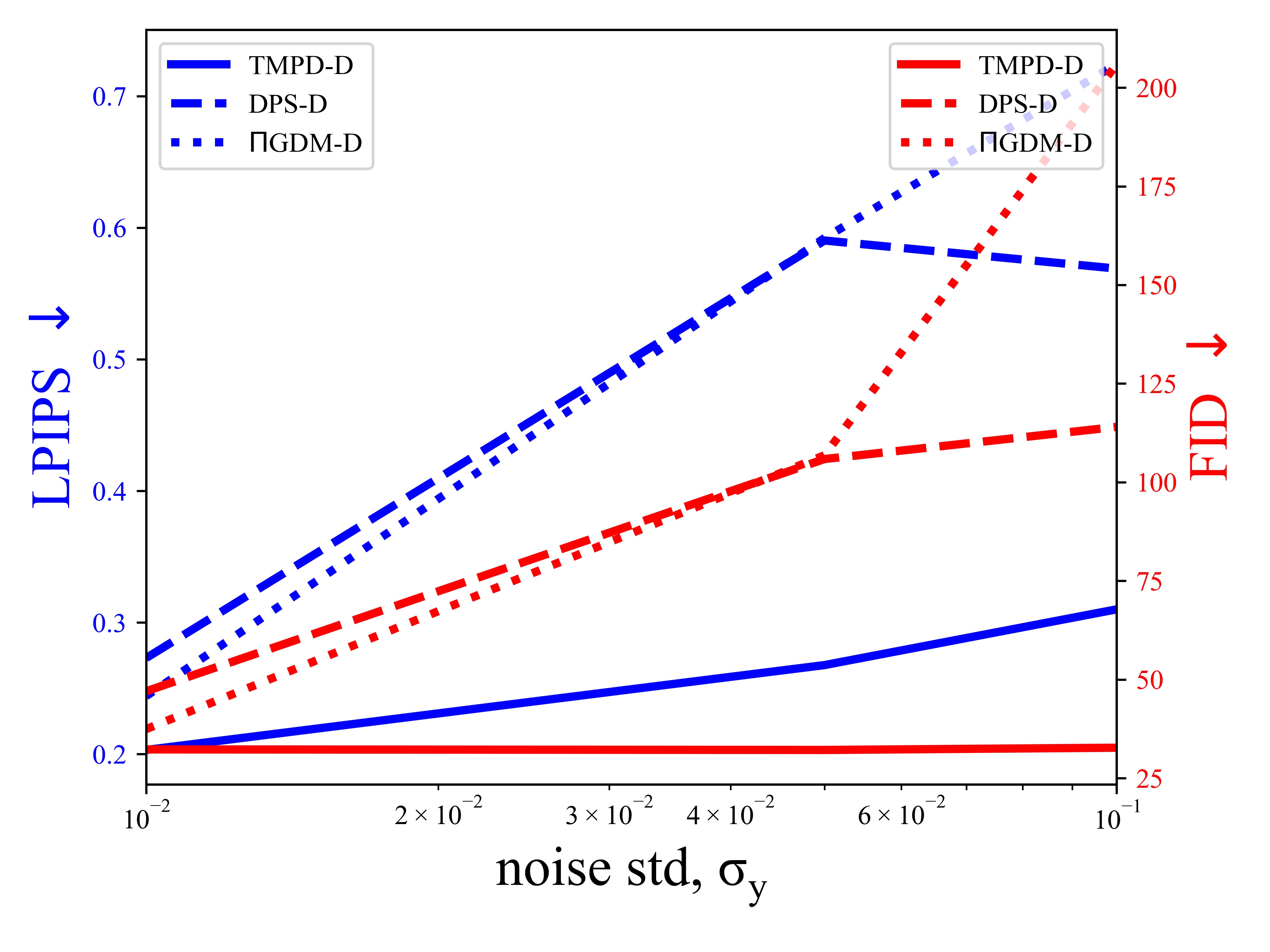}
\includegraphics[width=0.45\textwidth]{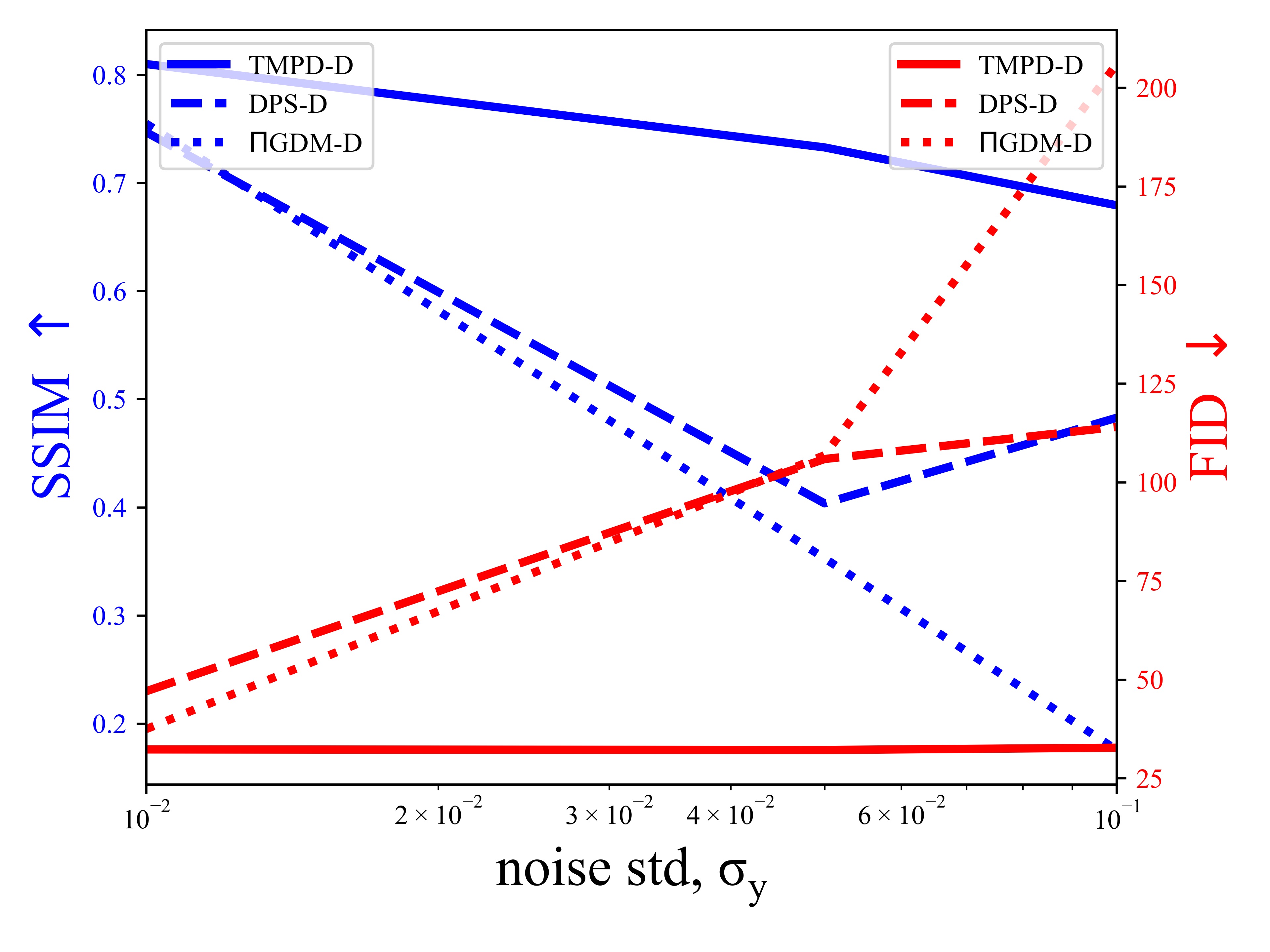}
\end{center}
\caption{4$\times$ bicubic super-resolution FID vs LPIPS (left) and SSIM (right) using the VE-SDE on FFHQ-1k validation dataset for increasing observation noise.}
\hfill
\label{fig:9a}
\end{figure}

\begin{figure}[h]
\begin{center}
\includegraphics[width=0.45\textwidth]{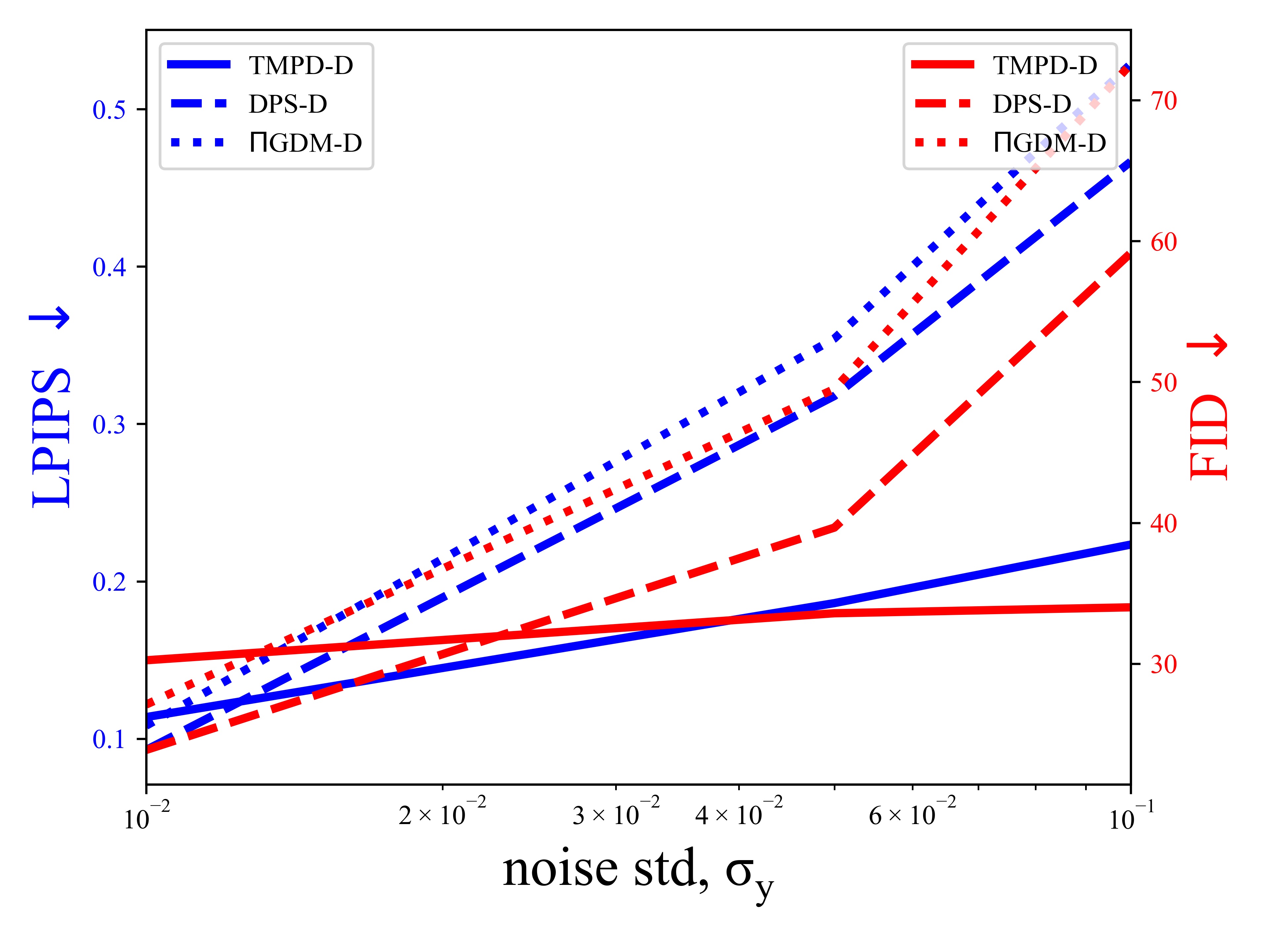}
\includegraphics[width=0.45\textwidth]{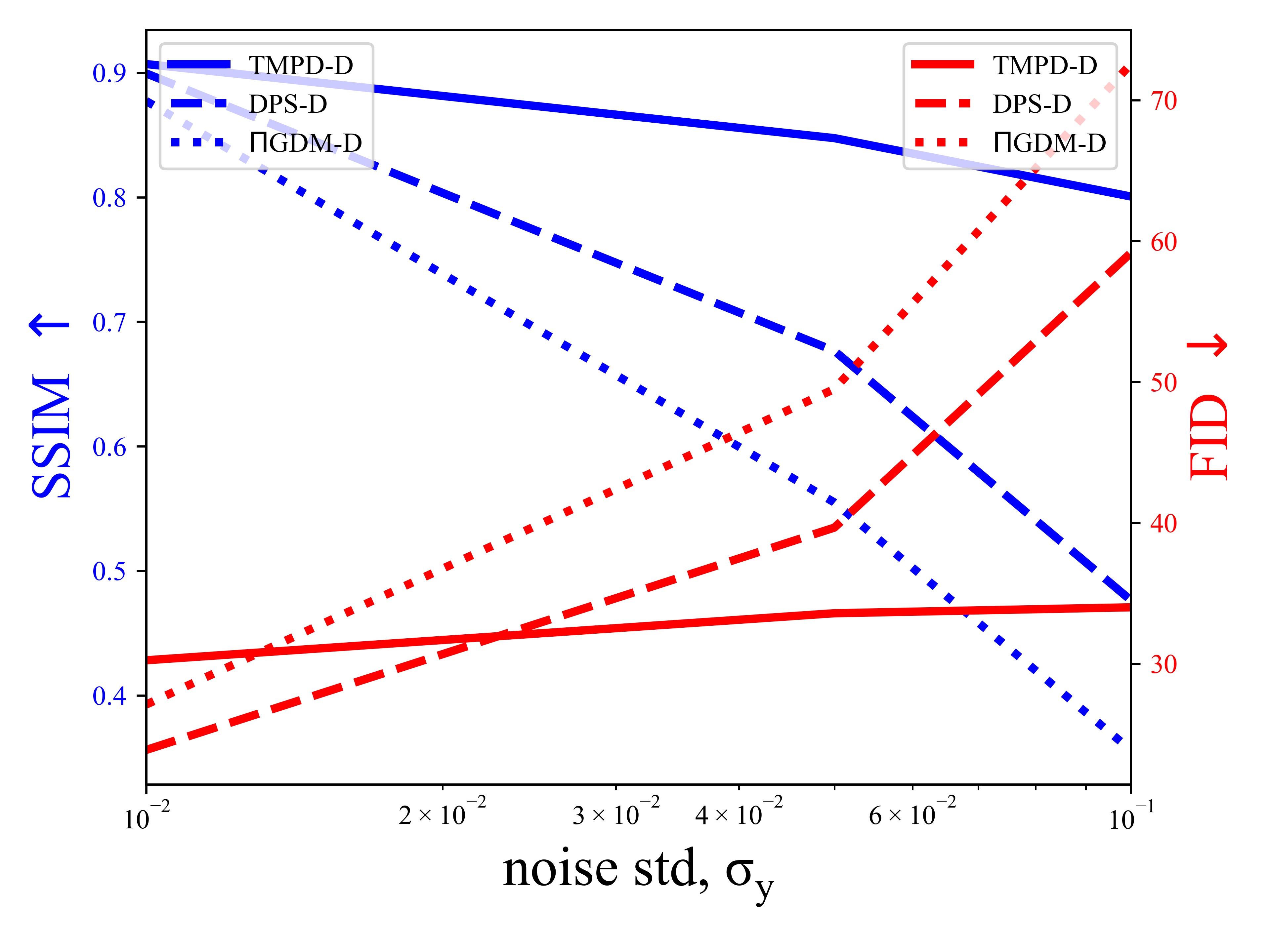}
\end{center}
\caption{`box' mask inpainting FID vs LPIPS (left) and SSIM (right) using the VE-SDE on FFHQ-1k validation dataset for increasing observation noise.}
\hfill
\label{fig:9b}
\end{figure}

\begin{figure}[t]
\begin{center}
\includegraphics[width=0.48\textwidth]{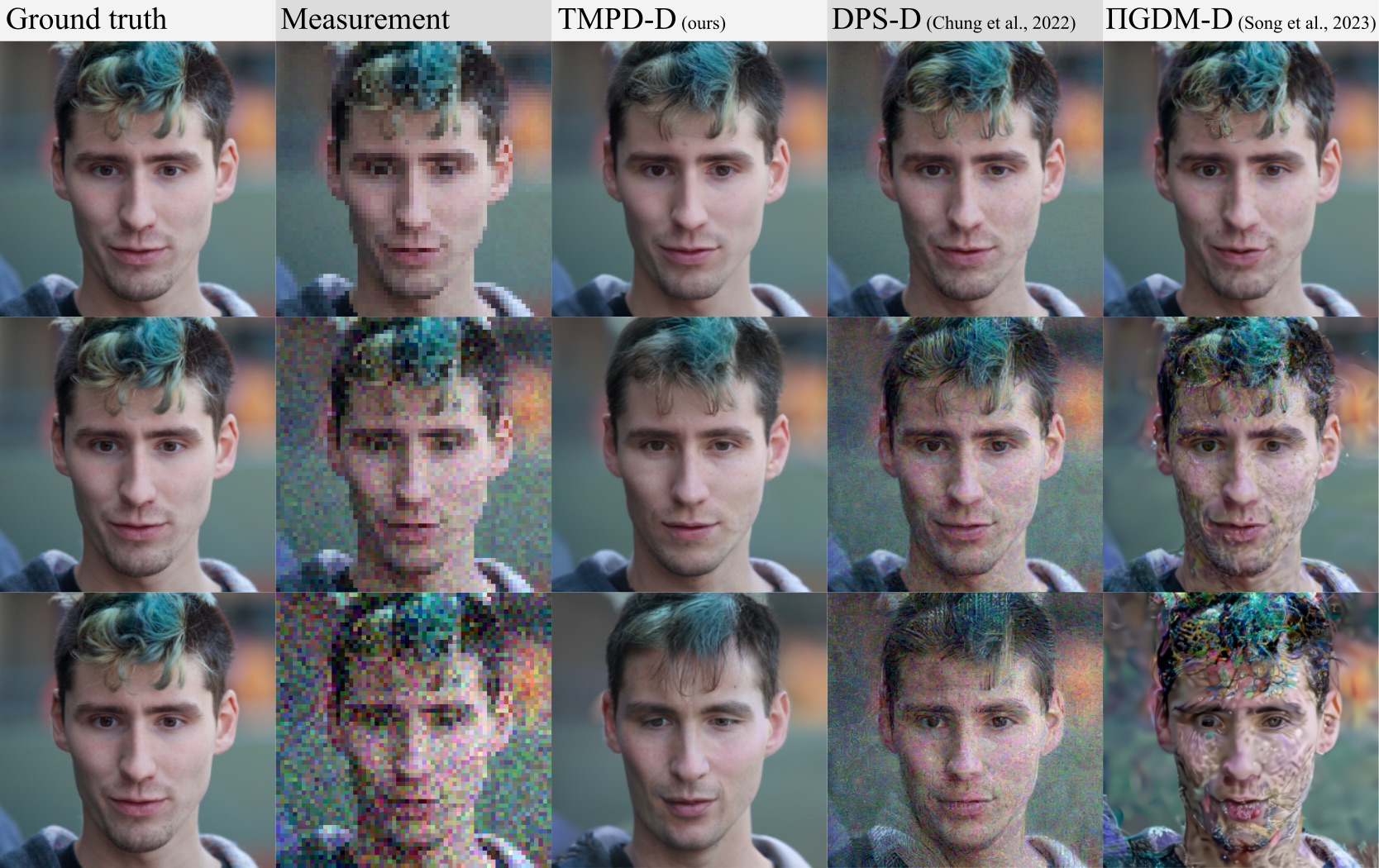}
\end{center}
\caption{Whilst DTMPD-D with the VE-SDE remains robust to increasing noise, DPS-D and $\Pi$GDM-D do not. This is illustrated by samples from the different guidance methods for a $4\times$ super-resolution ($256 \times 256 \rightarrow 64 \times 64$) problem with increasing Gaussian observation noise distorting a ground truth image. The top row measurement has $\sigma_{\ry}=0.01$, the middle row $\sigma_{\ry}=0.05$ and the bottom row $\sigma_{\ry}=0.1$. For the full results, see Table~\ref{table:aFFHQVE}.}
\label{fig:3}
\end{figure}

\begin{figure}[h]
\begin{center}
\includegraphics[width=0.95\textwidth]{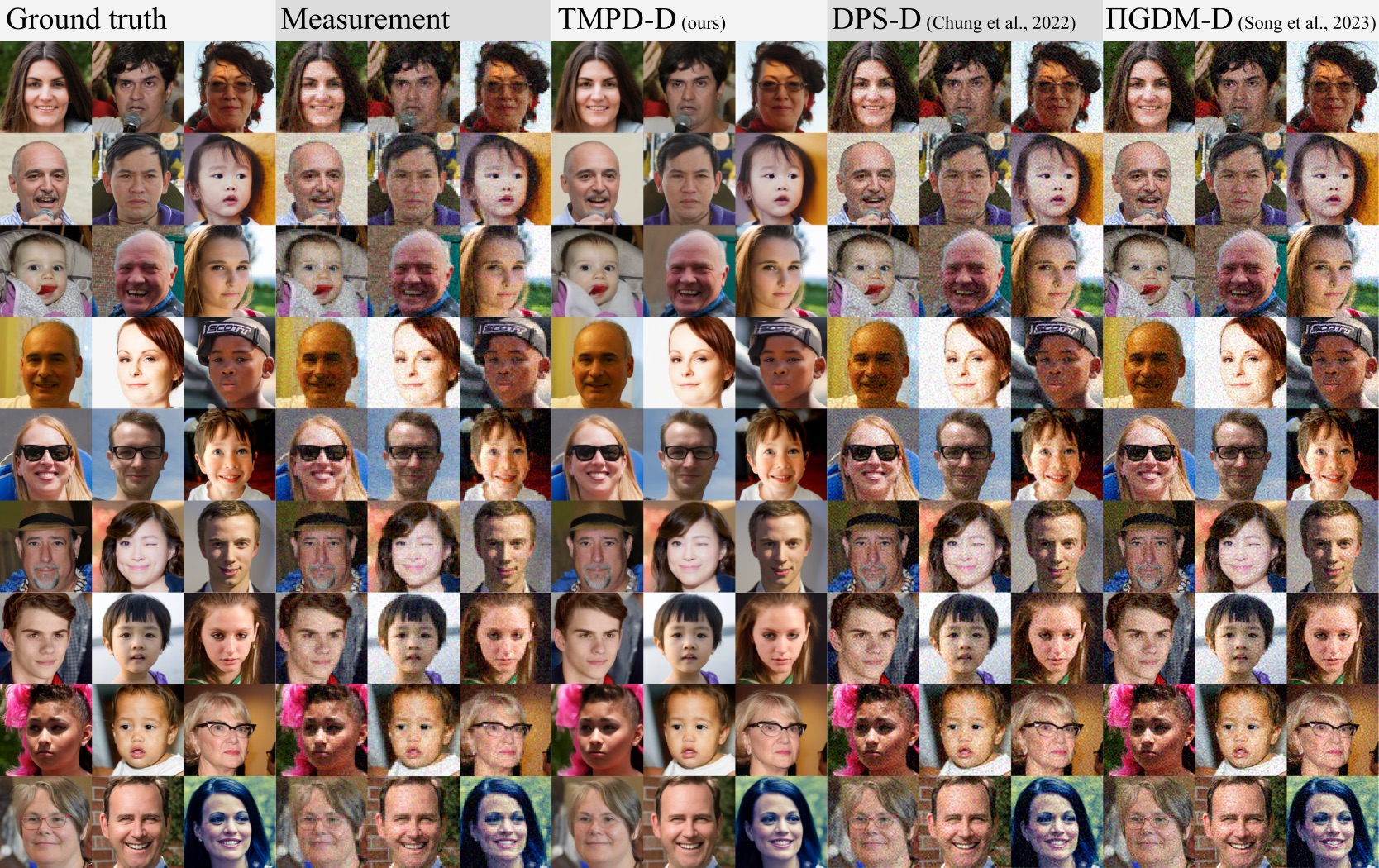}
\end{center}
\caption{4$\times$ bicubic super-resolution samples from the VE-SDE on FFHQ. The observation model was Gaussian ($\sigma_{\ry} = 0.05$) noise.}
\hfill
\label{fig:7}
\end{figure}

It has been observed that $\Pi$GDM-D performs worse for 1000 steps than 100 steps (see  \cite[Section 5]{mardani2023variational}). We also provide results in Table~\ref{table:aFFHQVE100} for 100 steps of DTMPD-D and $\Pi$GDM-D for the same FFHQ VE-SDE model, to compare to $\Pi$GDM-D in its optimal setting.
\begin{table}[]
\caption{Quantitative evaluation of solving linear inverse problems for VE DDPM for 100 steps with increasing noise on FFHQ 256$\times$256-1k validation dataset.}
\label{table:aFFHQVE100}
\vskip 0.1in
\begin{center}
\begin{small}
\begin{sc}
\begin{tabular}{lllllll}
\hline
Problem                               & Method     & FID $\downarrow$ & LPIPS $\downarrow$                & MSE $\downarrow$         & PSNR $\uparrow$                & SSIM $\uparrow$                   \\ \hline
$\sigma_{\ry}=0.01$                   & DTMPD-D     & \textbf{38.8}    & \textbf{0.215} \tiny{$\pm$ 0.042} & 0.001 \tiny{$\pm$ 0.001} & \textbf{28.9} \tiny{$\pm$ 2.2} & \textbf{0.821} \tiny{$\pm$ 0.051} \\
$4 \times$ `bicubic' super-resolution & $\Pi$GDM-D & 46.1             & 0.252 \tiny{$\pm$ 0.042}          & 0.002 \tiny{$\pm$ 0.001} & 27.4 \tiny{$\pm$ 1.9}          & 0.788 \tiny{$\pm$ 0.053}          \\ \hline
$\sigma_{\ry}=0.05$                   & DTMPD-D     & \textbf{39.2}    & \textbf{0.279} \tiny{$\pm$ 0.049} & 0.003 \tiny{$\pm$ 0.001} & \textbf{26.2} \tiny{$\pm$ 1.8} & \textbf{0.741} \tiny{$\pm$ 0.065} \\
$4 \times$ `bicubic' super-resolution & $\Pi$GDM-D & 43.3             & 0.302 \tiny{$\pm$ 0.049}          & 0.003 \tiny{$\pm$ 0.001} & 25.1 \tiny{$\pm$ 1.7}          & 0.722 \tiny{$\pm$ 0.067}          \\ \hline
$\sigma_{\ry}=0.1$                    & DTMPD-D     & \textbf{41.1}    & \textbf{0.317} \tiny{$\pm$ 0.054} & 0.004 \tiny{$\pm$ 0.002} & \textbf{24.5} \tiny{$\pm$ 1.7} & \textbf{0.693} \tiny{$\pm$ 0.073} \\
$4 \times$ `bicubic' super-resolution & $\Pi$GDM-D & 44.7             & 0.344 \tiny{$\pm$ 0.053}          & 0.005 \tiny{$\pm$ 0.002} & 23.4 \tiny{$\pm$ 1.6}          & 0.672 \tiny{$\pm$ 0.070}          \\ \hline
$\sigma_{\ry}=0.01$                   & DTMPD-D     & 31.4             & 0.124 \tiny{$\pm$ 0.030}          & 0.003 \tiny{$\pm$ 0.003} & 26.5 \tiny{$\pm$ 3.0}          & 0.906 \tiny{$\pm$ 0.020}          \\
`box' mask inpainting                 & $\Pi$GDM-D & \textbf{23.7}    & \textbf{0.069} \tiny{$\pm$ 0.021} & 0.002 \tiny{$\pm$ 0.001} & \textbf{28.0} \tiny{$\pm$ 2.6} & \textbf{0.925} \tiny{$\pm$ 0.016} \\ \hline
$\sigma_{\ry}=0.05$                   & DTMPD-D     & 38.2             & 0.199 \tiny{$\pm$ 0.039}          & 0.003 \tiny{$\pm$ 0.003} & 26.0 \tiny{$\pm$ 2.5}          & \textbf{0.848} \tiny{$\pm$ 0.034} \\
`box' mask inpainting                 & $\Pi$GDM-D & \textbf{29.8}    & \textbf{0.185} \tiny{$\pm$ 0.028} & 0.002 \tiny{$\pm$ 0.001} & \textbf{26.7} \tiny{$\pm$ 2.1} & 0.825 \tiny{$\pm$ 0.029}          \\ \hline
$\sigma_{\ry}=0.1$                    & DTMPD-D     & \textbf{39.7}    & \textbf{0.241} \tiny{$\pm$ 0.046} & 0.004 \tiny{$\pm$ 0.003} & \textbf{25.2} \tiny{$\pm$ 2.4} & \textbf{0.799} \tiny{$\pm$ 0.048} \\
`box' mask inpainting                 & $\Pi$GDM-D & 42.1             & 0.279 \tiny{$\pm$ 0.040}          & 0.003 \tiny{$\pm$ 0.001} & 25.1 \tiny{$\pm$ 1.6}          & 0.716 \tiny{$\pm$ 0.052}          \\ \hline
\end{tabular}
\end{sc}
\end{small}
\end{center}
\vskip -0.1in
\end{table}

\subsubsubsection{\textbf{CIFAR10 64$\times$64}}\label{appendix:CIFAR_VE}
For VE-SDE, our CIFAR-10 64$\times$64 experiments compare TMPD to DPS and $\Pi$GDM, and also compare diffusion methods (DDPM) in Table~\ref{table:a2b} and score-based methods (discretized with Euler-Maruyama) in Table~\ref{table:a3b}. Various samples used to produce the figures in Tables~\ref{table:a2b} and \ref{table:a3b} are shown in Fig~\ref{fig:6a} and \ref{fig:6b}.

For $\Pi$GDM(-D), we are able to use the hyperparameter $r^{2}_{t} = v_{t} / (1 + v_{t})$ as suggested by \citet{Song2023}, which is calculated by assuming the data distribution $p_{0}(\rvx_{0})$ is a standard normal and then calculating the posterior variance, for both VE DDIM and VE-SDE methods, but note some instability for the $\Pi$GDM VE-SDE method for small noise, as shown in Fig~\ref{fig:5a}.

For DDPM we use the step-size constant suggested in \citet{Chung2022} for inpainting, $\zeta_{i} = \zeta' / \|\rvy - \mH \rvm_{0|t}\|$, where we tune $\zeta'$ over the suggested range of $\zeta'\in [0.1, 1.0]$ in \citet{Chung2022} across LPIPS, MSE, PSNR and SSIM, as shown in Fig.~\ref{fig:4} for each inverse problem (each line in the Tables~\ref{table:a2b} and \ref{table:a3b}).

\begin{table}[]
\caption{Noisy observation inpainting and super-resolution for VE DDPM on CIFAR-10 1k validation set.}
\label{table:a2b}
\begin{center}
\begin{sc}
\begin{tabular}{lllllll}
\hline
Problem              & Method     & FID $\downarrow$ & LPIPS $\downarrow$              & MSE $\downarrow$       & PSNR $\uparrow$              & SSIM $\uparrow$                 \\ \hline
$\sigma_{y}=0.01$    & DTMPD-D     & \textbf{34.1}    & 0.094 \tiny{$\pm$ 0.044}          & 0.006 \tiny{$\pm$ 0.005} & 23.6 \tiny{$\pm$ 3.3}          & 0.782 \tiny{$\pm$ 0.067}          \\
`box' mask           & DPS-D      & 40.6             & 0.085 \tiny{$\pm$ 0.036}          & 0.005 \tiny{$\pm$ 0.004} & 24.2 \tiny{$\pm$ 3.1}          & 0.802 \tiny{$\pm$ 0.064}          \\
inpainting           & $\Pi$GDM-D & 40.0             & 0.076 \tiny{$\pm$ 0.035}          & 0.004 \tiny{$\pm$ 0.003} & 25.0 \tiny{$\pm$ 3.3}          & 0.824 \tiny{$\pm$ 0.066}          \\ \hline
$\sigma_{y}=0.01$    & DTMPD-D     & \textbf{39.0}    & 0.271 \tiny{$\pm$ 0.069}          & 0.028 \tiny{$\pm$ 0.020} & 16.5 \tiny{$\pm$ 3.1}          & 0.582 \tiny{$\pm$ 0.072}          \\
`half' mask          & DPS-D      & 42.8             & 0.247 \tiny{$\pm$ 0.065}          & 0.029 \tiny{$\pm$ 0.020} & 16.5 \tiny{$\pm$ 3.2}          & 0.595 \tiny{$\pm$ 0.077}          \\
inpainting           & $\Pi$GDM-D & 42.2             & 0.237 \tiny{$\pm$ 0.067}          & 0.029 \tiny{$\pm$ 0.020} & 16.4 \tiny{$\pm$ 3.1}          & 0.610 \tiny{$\pm$ 0.079}          \\ \hline
$\sigma_{y}=0.05$    & DTMPD-D     & \textbf{36.7}    & \textbf{0.163} \tiny{$\pm$ 0.070} & 0.007 \tiny{$\pm$ 0.011} & 22.6 \tiny{$\pm$ 2.8}          & 0.724 \tiny{$\pm$ 0.082}          \\
`box' mask           & DPS-D      & 101.5            & 0.224 \tiny{$\pm$ 0.077}          & 0.008 \tiny{$\pm$ 0.004} & 21.8 \tiny{$\pm$ 2.1}          & 0.681 \tiny{$\pm$ 0.087}          \\
inpainting           & $\Pi$GDM-D & 99.9             & 0.225 \tiny{$\pm$ 0.076}          & 0.007 \tiny{$\pm$ 0.003} & 21.8 \tiny{$\pm$ 1.7}          & 0.682 \tiny{$\pm$ 0.093}          \\ \hline
$\sigma_{y}=0.05$    & DTMPD-D     & \textbf{40.9}    & 0.320 \tiny{$\pm$ 0.078}          & 0.029 \tiny{$\pm$ 0.020} & 16.4 \tiny{$\pm$ 2.9}          & 0.544 \tiny{$\pm$ 0.073}          \\
`half' mask          & DPS-D      & 105.5            & 0.362 \tiny{$\pm$ 0.078}          & 0.032 \tiny{$\pm$ 0.021} & 15.7 \tiny{$\pm$ 2.7}          & 0.500 \tiny{$\pm$ 0.075}          \\
inpainting           & $\Pi$GDM-D & 97.1             & 0.366 \tiny{$\pm$ 0.079}          & 0.033 \tiny{$\pm$ 0.020} & 15.4 \tiny{$\pm$ 2.5}          & 0.503 \tiny{$\pm$ 0.079}          \\ \hline
$\sigma_{y}=0.1$     & DTMPD-D     & \textbf{38.1}    & \textbf{0.233} \tiny{$\pm$ 0.086} & 0.009 \tiny{$\pm$ 0.009} & 21.4 \tiny{$\pm$ 2.5}          & \textbf{0.652} \tiny{$\pm$ 0.102} \\
`box' mask           & DPS-D      & 80.4             & 0.313 \tiny{$\pm$ 0.078}          & 0.012 \tiny{$\pm$ 0.006} & 19.7 \tiny{$\pm$ 2.1}          & 0.596 \tiny{$\pm$ 0.088}          \\
inpainting           & $\Pi$GDM-D & 140.5            & 0.378 \tiny{$\pm$ 0.086}          & 0.013 \tiny{$\pm$ 0.003} & 18.8 \tiny{$\pm$ 0.9}          & 0.550 \tiny{$\pm$ 0.113}          \\ \hline
$\sigma_{y}=0.1$     & DTMPD-D     & \textbf{40.8}    & 0.359 \tiny{$\pm$ 0.086}          & 0.030 \tiny{$\pm$ 0.019} & 16.1 \tiny{$\pm$ 2.8}          & \textbf{0.495} \tiny{$\pm$ 0.087} \\
`half' mask          & DPS-D      & 99.5             & 0.434 \tiny{$\pm$ 0.072}          & 0.038 \tiny{$\pm$ 0.023} & 14.8 \tiny{$\pm$ 2.4}          & 0.400 \tiny{$\pm$ 0.083}          \\
inpainting           & $\Pi$GDM-D & 138.8            & 0.476 \tiny{$\pm$ 0.078}          & 0.039 \tiny{$\pm$ 0.019} & 14.5 \tiny{$\pm$ 1.9}          & 0.402 \tiny{$\pm$ 0.095}          \\ \hline
$\sigma_{y}=0.01$    & DTMPD-D     & \textbf{32.7}    & 0.126 \tiny{$\pm$ 0.058}          & 0.004 \tiny{$\pm$ 0.003} & 24.4 \tiny{$\pm$ 2.9}          & 0.828 \tiny{$\pm$ 0.073}          \\
$2 \times$ `nearest' & DPS-D      & 42.3             & 0.134 \tiny{$\pm$ 0.053}          & 0.004 \tiny{$\pm$ 0.002} & 24.8 \tiny{$\pm$ 2.5}          & 0.839 \tiny{$\pm$ 0.063}          \\
super-resolution     & $\Pi$GDM-D & 34.9             & 0.110 \tiny{$\pm$ 0.042}          & 0.004 \tiny{$\pm$ 0.002} & 24.8 \tiny{$\pm$ 2.8}          & 0.839 \tiny{$\pm$ 0.066}          \\ \hline
$\sigma_{y}=0.01$    & DTMPD-D     & 38.6             & 0.295 \tiny{$\pm$ 0.082}          & 0.011 \tiny{$\pm$ 0.006} & 20.2 \tiny{$\pm$ 2.4}          & 0.544 \tiny{$\pm$ 0.109}          \\
$4 \times$ `bicubic' & DPS-D      & 53.4             & 0.296 \tiny{$\pm$ 0.074}          & 0.011 \tiny{$\pm$ 0.006} & 20.1 \tiny{$\pm$ 2.3}          & 0.547 \tiny{$\pm$ 0.101}          \\
super-resolution     & $\Pi$GDM-D & \textbf{37.8}    & 0.250 \tiny{$\pm$ 0.077}          & 0.011 \tiny{$\pm$ 0.006} & 20.4 \tiny{$\pm$ 2.5}          & 0.577 \tiny{$\pm$ 0.119}          \\ \hline
$\sigma_{y}=0.05$    & DTMPD-D     & \textbf{34.0}    & \textbf{0.204} \tiny{$\pm$ 0.082} & 0.006 \tiny{$\pm$ 0.004} & 23.1 \tiny{$\pm$ 2.3}          & 0.762 \tiny{$\pm$ 0.089}          \\
$2 \times$ `nearest' & DPS-D      & 98.8             & 0.283 \tiny{$\pm$ 0.077}          & 0.007 \tiny{$\pm$ 0.003} & 21.9 \tiny{$\pm$ 1.6}          & 0.711 \tiny{$\pm$ 0.081}          \\
super-resolution     & $\Pi$GDM-D & 99.6             & 0.298 \tiny{$\pm$ 0.081}          & 0.008 \tiny{$\pm$ 0.003} & 21.1 \tiny{$\pm$ 1.4}          & 0.677 \tiny{$\pm$ 0.092}          \\ \hline
$\sigma_{y}=0.05$    & DTMPD-D     & \textbf{36.8}    & 0.380 \tiny{$\pm$ 0.097}          & 0.015 \tiny{$\pm$ 0.006} & 18.8 \tiny{$\pm$ 2.0}          & 0.440 \tiny{$\pm$ 0.113}          \\
$4 \times$ `bicubic' & DPS-D      & 126.1            & 0.467 \tiny{$\pm$ 0.072}          & 0.017 \tiny{$\pm$ 0.007} & 18.2 \tiny{$\pm$ 1.7}          & 0.434 \tiny{$\pm$ 0.096}          \\
super-resolution     & $\Pi$GDM-D & 104.2            & 0.419 \tiny{$\pm$ 0.088}          & 0.016 \tiny{$\pm$ 0.006} & 18.1 \tiny{$\pm$ 1.6}          & 0.454 \tiny{$\pm$ 0.117}          \\ \hline
$\sigma_{y}=0.1$     & DTMPD-D     & \textbf{37.1}    & \textbf{0.285} \tiny{$\pm$ 0.100} & 0.008 \tiny{$\pm$ 0.006} & \textbf{21.5} \tiny{$\pm$ 2.1} & 0.657 \tiny{$\pm$ 0.112}          \\
$2 \times$ `nearest' & DPS-D      & 144.6            & 0.421 \tiny{$\pm$ 0.082}          & 0.012 \tiny{$\pm$ 0.003} & 19.4 \tiny{$\pm$ 1.1}          & 0.571 \tiny{$\pm$ 0.101}          \\
super-resolution     & $\Pi$GDM-D & 175.8            & 0.456 \tiny{$\pm$ 0.085}          & 0.016 \tiny{$\pm$ 0.003} & 17.9 \tiny{$\pm$ 0.9}          & 0.514 \tiny{$\pm$ 0.115}          \\ \hline
$\sigma_{y}=0.1$     & DTMPD-D     & \textbf{36.1}    & \textbf{0.439} \tiny{$\pm$ 0.099} & 0.019 \tiny{$\pm$ 0.008} & 17.5 \tiny{$\pm$ 1.9}          & 0.353 \tiny{$\pm$ 0.115}          \\
$4 \times$ `bicubic' & DPS-D      & 155.1            & 0.535 \tiny{$\pm$ 0.067}          & 0.022 \tiny{$\pm$ 0.007} & 16.8 \tiny{$\pm$ 1.3}          & 0.360 \tiny{$\pm$ 0.092}          \\
super-resolution     & $\Pi$GDM-D & 196.2            & 0.539 \tiny{$\pm$ 0.078}          & 0.030 \tiny{$\pm$ 0.007} & 15.3 \tiny{$\pm$ 1.1}          & 0.327 \tiny{$\pm$ 0.106}          \\ \hline
\end{tabular}
\end{sc}
\end{center}
\end{table}

\begin{table}[]
\caption{Noisy observation inpainting and super-resolution for the reverse VE-SDEs on CIFAR-10 1k validation set.}
\label{table:a3b}
\begin{center}
\begin{sc}
\begin{tabular}{lllllll}
\hline
Problem              & Method   & FID $\downarrow$ & LPIPS $\downarrow$              & MSE $\downarrow$                & PSNR $\uparrow$              & SSIM $\uparrow$                 \\ \hline
$\sigma_{y}=0.01$    & TMPD     & \textbf{40.0}    & 0.102 \tiny{$\pm$ 0.047}          & 0.005 \tiny{$\pm$ 0.004}          & 23.7 \tiny{$\pm$ 3.1}          & 0.773 \tiny{$\pm$ 0.069}          \\
`box' mask           & DPS      & 103.6            & 0.637 \tiny{$\pm$ 0.074}          & 0.114 \tiny{$\pm$ 0.052}          & 9.9 \tiny{$\pm$ 1.9}           & 0.050 \tiny{$\pm$ 0.068}          \\
inpainting           & $\Pi$GDM & 78.9             & 0.094 \tiny{$\pm$ 0.039}          & 0.005 \tiny{$\pm$ 0.004}          & 24.1 \tiny{$\pm$ 3.1}          & 0.787 \tiny{$\pm$ 0.071}          \\ \hline
$\sigma_{y}=0.01$    & TMPD     & \textbf{45.8}    & 0.279 \tiny{$\pm$ 0.066}          & 0.030 \tiny{$\pm$ 0.029}          & 16.3 \tiny{$\pm$ 3.0}          & 0.569 \tiny{$\pm$ 0.069}          \\
`half' mask          & DPS      & 110.8            & 0.638 \tiny{$\pm$ 0.072}          & 0.117 \tiny{$\pm$ 0.056}          & 9.8 \tiny{$\pm$ 2.0}           & 0.045 \tiny{$\pm$ 0.067}          \\
inpainting           & $\Pi$GDM & 50.5             & 0.264 \tiny{$\pm$ 0.068}          & 0.027 \tiny{$\pm$ 0.020}          & 16.7 \tiny{$\pm$ 3.1}          & 0.585 \tiny{$\pm$ 0.077}          \\ \hline
$\sigma_{y}=0.05$    & TMPD     & \textbf{45.3}    & 0.167 \tiny{$\pm$ 0.065}          & 0.007 \tiny{$\pm$ 0.024}          & 22.6 \tiny{$\pm$ 2.7}          & 0.718 \tiny{$\pm$ 0.078}          \\
`box' mask           & DPS      & 103.4            & 0.638 \tiny{$\pm$ 0.070}          & 0.115 \tiny{$\pm$ 0.055}          & 9.8 \tiny{$\pm$ 1.9}           & 0.047 \tiny{$\pm$ 0.067}          \\
inpainting           & $\Pi$GDM & 81.9             & 0.160 \tiny{$\pm$ 0.056}          & 0.006 \tiny{$\pm$ 0.004}          & 22.8 \tiny{$\pm$ 2.4}          & 0.720 \tiny{$\pm$ 0.078}          \\ \hline
$\sigma_{y}=0.05$    & TMPD     & \textbf{51.1}    & 0.319 \tiny{$\pm$ 0.071}          & 0.029 \tiny{$\pm$ 0.019}          & 16.2 \tiny{$\pm$ 2.8}          & 0.532 \tiny{$\pm$ 0.071}          \\
`half' mask          & DPS      & 109.3            & 0.638 \tiny{$\pm$ 0.070}          & 0.116 \tiny{$\pm$ 0.057}          & 9.8 \tiny{$\pm$ 2.0}           & 0.044 \tiny{$\pm$ 0.067}          \\
inpainting           & $\Pi$GDM & 56.0             & 0.311 \tiny{$\pm$ 0.074}          & 0.029 \tiny{$\pm$ 0.021}          & 16.4 \tiny{$\pm$ 3.0}          & 0.540 \tiny{$\pm$ 0.078}          \\ \hline
$\sigma_{y}=0.1$     & TMPD     & \textbf{47.6}    & 0.231 \tiny{$\pm$ 0.079}          & 0.008 \tiny{$\pm$ 0.005}          & 21.5 \tiny{$\pm$ 2.3}          & 0.650 \tiny{$\pm$ 0.097}          \\
`box' mask           & DPS      & 104.3            & 0.639 \tiny{$\pm$ 0.068}          & 0.113 \tiny{$\pm$ 0.049}          & 9.9 \tiny{$\pm$ 1.9}           & 0.048 \tiny{$\pm$ 0.065}          \\
inpainting           & $\Pi$GDM & 84.6             & 0.220 \tiny{$\pm$ 0.079}          & 0.008 \tiny{$\pm$ 0.004}          & 21.6 \tiny{$\pm$ 2.1}          & 0.665 \tiny{$\pm$ 0.105}          \\ \hline
$\sigma_{y}=0.1$     & TMPD     & \textbf{54.1}    & 0.366 \tiny{$\pm$ 0.080}          & 0.032 \tiny{$\pm$ 0.020}          & 15.8 \tiny{$\pm$ 2.7}          & 0.480 \tiny{$\pm$ 0.085}          \\
`half' mask          & DPS      & 110.4            & 0.639 \tiny{$\pm$ 0.068}          & 0.117 \tiny{$\pm$ 0.054}          & 9.7 \tiny{$\pm$ 2.0}           & 0.046 \tiny{$\pm$ 0.067}          \\
inpainting           & $\Pi$GDM & 58.5             & 0.349 \tiny{$\pm$ 0.081}          & 0.031 \tiny{$\pm$ 0.021}          & 16.1 \tiny{$\pm$ 2.9}          & 0.499 \tiny{$\pm$ 0.090}          \\ \hline
$\sigma_{y}=0.01$    & TMPD     & \textbf{43.5}    & \textbf{0.141} \tiny{$\pm$ 0.070} & 0.007 \tiny{$\pm$ 0.034}          & 24.1 \tiny{$\pm$ 3.1}          & \textbf{0.810} \tiny{$\pm$ 0.092} \\
$2 \times$ `nearest' & DPS      & 118.7            & 0.641 \tiny{$\pm$ 0.066}          & 0.117 \tiny{$\pm$ 0.054}          & 9.7 \tiny{$\pm$ 1.9}           & 0.048 \tiny{$\pm$ 0.065}          \\
super-resolution     & $\Pi$GDM & 59.4             & 0.258 \tiny{$\pm$ 0.083}          & 0.007 \tiny{$\pm$ 0.003}          & 22.0 \tiny{$\pm$ 2.2}          & 0.716 \tiny{$\pm$ 0.075}          \\ \hline
$\sigma_{y}=0.01$    & TMPD     & \textbf{49.6}    & \textbf{0.439} \tiny{$\pm$ 0.093} & \textbf{0.020} \tiny{$\pm$ 0.008} & \textbf{17.3} \tiny{$\pm$ 1.8} & \textbf{0.345} \tiny{$\pm$ 0.115} \\
$4 \times$ `bicubic' & DPS      & 128.3            & 0.642 \tiny{$\pm$ 0.071}          & 0.122 \tiny{$\pm$ 0.059}          & 9.6 \tiny{$\pm$ 2.0}           & 0.046 \tiny{$\pm$ 0.068}          \\
super-resolution     & $\Pi$GDM & 52.9             & 0.561 \tiny{$\pm$ 0.083}          & 0.042 \tiny{$\pm$ 0.015}          & 14.1 \tiny{$\pm$ 1.7}          & 0.167 \tiny{$\pm$ 0.085}          \\ \hline
$\sigma_{y}=0.05$    & TMPD     & \textbf{47.5}    & \textbf{0.213} \tiny{$\pm$ 0.087} & 0.007 \tiny{$\pm$ 0.012}          & \textbf{22.8} \tiny{$\pm$ 2.5} & \textbf{0.744} \tiny{$\pm$ 0.100} \\
$2 \times$ `nearest' & DPS      & 119.1            & 0.638 \tiny{$\pm$ 0.070}          & 0.118 \tiny{$\pm$ 0.059}          & 9.7 \tiny{$\pm$ 2.0}           & 0.049 \tiny{$\pm$ 0.068}          \\
super-resolution     & $\Pi$GDM & 60.0             & 0.326 \tiny{$\pm$ 0.097}          & 0.009 \tiny{$\pm$ 0.004}          & 20.7 \tiny{$\pm$ 1.9}          & 0.619 \tiny{$\pm$ 0.104}          \\ \hline
$\sigma_{y}=0.05$    & TMPD     & \textbf{51.2}    & \textbf{0.379} \tiny{$\pm$ 0.094} & \textbf{0.016} \tiny{$\pm$ 0.029} & \textbf{18.6} \tiny{$\pm$ 2.1} & \textbf{0.428} \tiny{$\pm$ 0.117} \\
$4 \times$ `bicubic' & DPS      & 128.0            & 0.642 \tiny{$\pm$ 0.068}          & 0.122 \tiny{$\pm$ 0.061}          & 9.6 \tiny{$\pm$ 2.0}           & 0.043 \tiny{$\pm$ 0.067}          \\
super-resolution     & $\Pi$GDM & 53.8             & 0.548 \tiny{$\pm$ 0.084}          & 0.038 \tiny{$\pm$ 0.014}          & 14.5 \tiny{$\pm$ 1.7}          & 0.186 \tiny{$\pm$ 0.087}          \\ \hline
$\sigma_{y}=0.1$     & TMPD     & \textbf{51.6}    & \textbf{0.292} \tiny{$\pm$ 0.097} & 0.009 \tiny{$\pm$ 0.020}          & \textbf{21.3} \tiny{$\pm$ 2.2} & \textbf{0.646} \tiny{$\pm$ 0.116} \\
$2 \times$ `nearest' & DPS      & 120.5            & 0.644 \tiny{$\pm$ 0.074}          & 0.121 \tiny{$\pm$ 0.061}          & 9.7 \tiny{$\pm$ 2.0}           & 0.048 \tiny{$\pm$ 0.068}          \\
super-resolution     & $\Pi$GDM & 61.9             & 0.386 \tiny{$\pm$ 0.100}          & 0.012 \tiny{$\pm$ 0.005}          & 19.5 \tiny{$\pm$ 1.9}          & 0.523 \tiny{$\pm$ 0.117}          \\ \hline
$\sigma_{y}=0.1$     & TMPD     & \textbf{49.6}    & \textbf{0.439} \tiny{$\pm$ 0.093} & \textbf{0.020} \tiny{$\pm$ 0.008} & \textbf{17.3} \tiny{$\pm$ 1.8} & \textbf{0.345} \tiny{$\pm$ 0.115} \\
$4 \times$ `bicubic' & DPS      & 128.3            & 0.642 \tiny{$\pm$ 0.071}          & 0.122 \tiny{$\pm$ 0.059}          & 9.6 \tiny{$\pm$ 2.0}           & 0.046 \tiny{$\pm$ 0.068}          \\
super-resolution     & $\Pi$GDM & 52.9             & 0.561 \tiny{$\pm$ 0.083}          & 0.042 \tiny{$\pm$ 0.015}          & 14.1 \tiny{$\pm$ 1.7}          & 0.167 \tiny{$\pm$ 0.085}          \\ \hline
\end{tabular}
\end{sc}
\end{center}
\end{table}

\begin{figure}[h]
\begin{center}
\includegraphics[width=10.0cm]{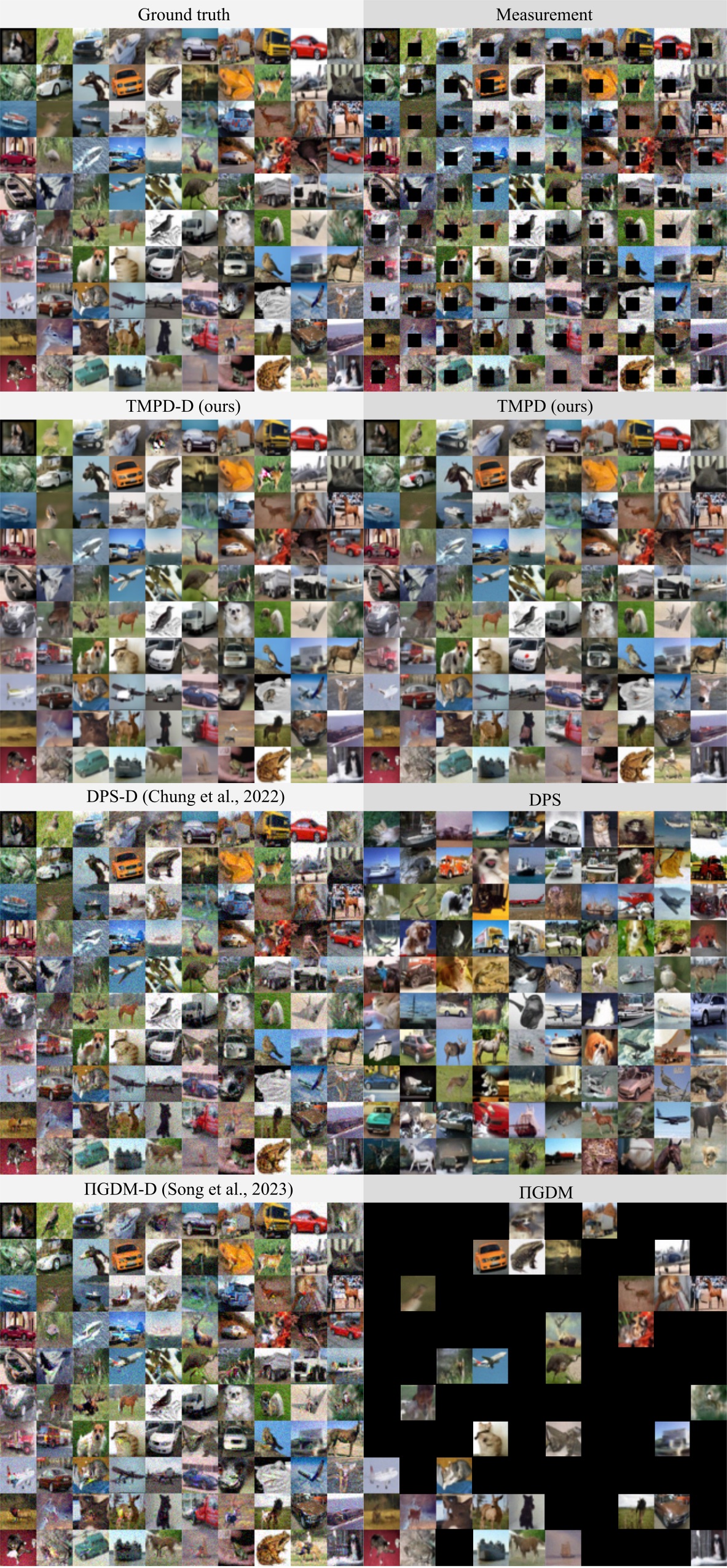}
\end{center}
\caption{Inpainting samples from the VE-SDE on CIFAR-10. The observation model was `box' mask with Gaussian ($\sigma_{\ry} = 0.05$) noise.}
\hfill
\label{fig:6a}
\end{figure}

\begin{figure}[h]
\begin{center}
\includegraphics[width=10.0cm]{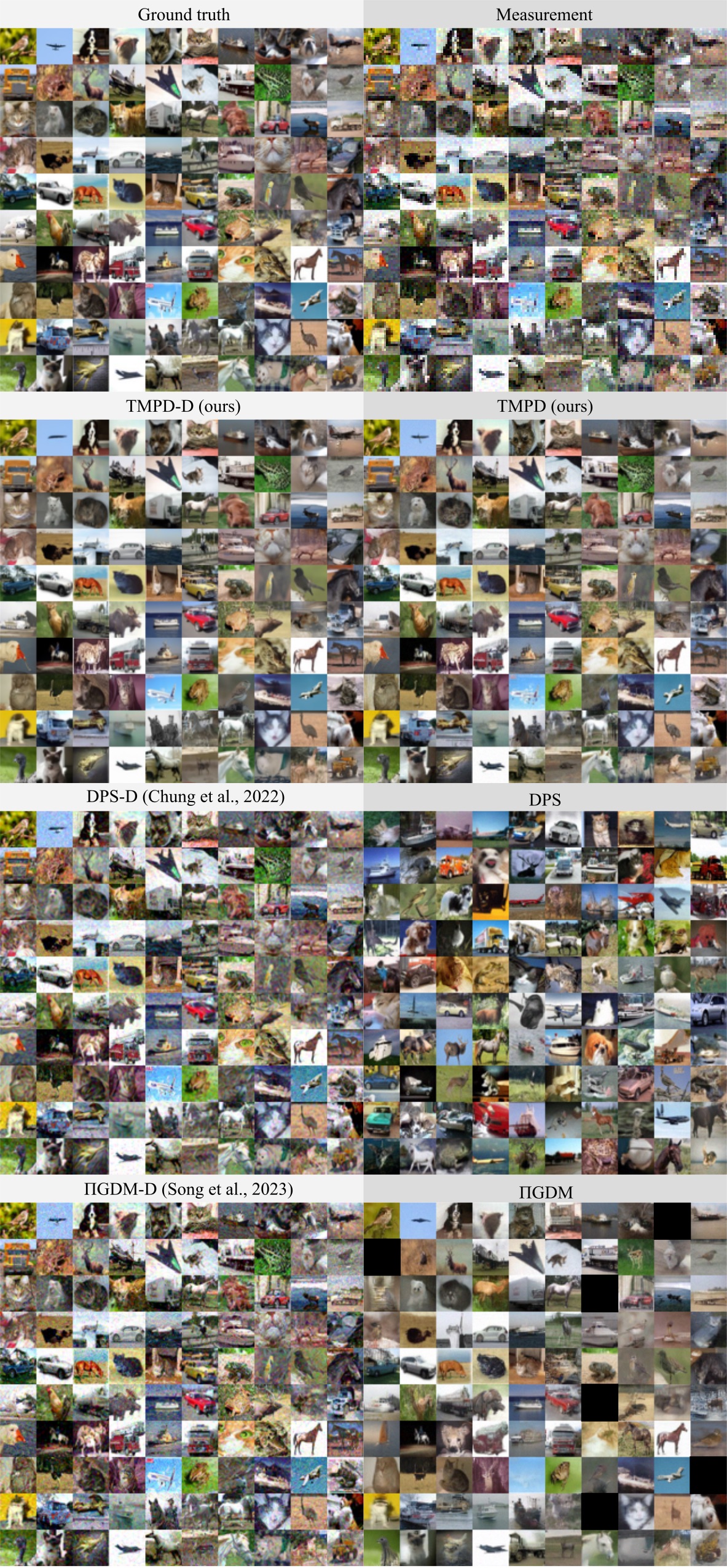}
\end{center}
\caption{2$\times$ nearest-neighbour super-resolution samples from the VE-SDE on CIFAR-10. The observation model was Gaussian ($\sigma_{\ry} = 0.05$) noise.}
\hfill
\label{fig:6b}
\end{figure}

\end{document}